\documentclass[11pt,a4paper]{article}

\usepackage[margin=1.12in]{geometry}
\usepackage[utf8]{inputenc}
\usepackage[T1]{fontenc}
\usepackage{lmodern}
\usepackage{microtype}
\usepackage{amsmath}
\usepackage{amssymb}
\usepackage{amsthm}
\usepackage{aliascnt}
\usepackage{enumerate}
\usepackage{xcolor}
\usepackage{adjustbox}
\usepackage{tikz}
\usetikzlibrary{arrows.meta,positioning,fit,shapes.geometric}

\usepackage[backend=biber,bibstyle=numeric,citestyle=numeric-comp,maxbibnames=20,sorting=nyt]{biblatex}
\usepackage{hyperref}
\hypersetup{colorlinks=true,allcolors=blue,unicode=true,pdftitle={Kempe Reachability with k >= 3 Colors Is PSPACE-Complete on Claw-Free Graphs},pdfauthor={Duc A. Hoang},pdfkeywords={Kempe Reachability, Kempe changes, edge-colorings, reconfiguration, claw-free graphs, line graphs, PSPACE-completeness}}
\usepackage[capitalise,noabbrev]{cleveref}
\crefname{enumi}{part}{parts}
\Crefname{enumi}{Part}{Parts}

\newcommand{\KRfull}{\textnormal{\textsc{Kempe Reachability}}}
\newcommand{\KR}{\KRfull}

\newcommand{\PSPACE}{\mathsf{PSPACE}}
\newcommand{\NP}{\mathsf{NP}}
\newcommand{\PClass}{\mathsf{P}}

\theoremstyle{plain}
\newtheorem{theorem}{Theorem}[section]
\newaliascnt{lemma}{theorem}
\newtheorem{lemma}[lemma]{Lemma}
\aliascntresetthe{lemma}
\newaliascnt{proposition}{theorem}
\newtheorem{proposition}[proposition]{Proposition}
\aliascntresetthe{proposition}
\newaliascnt{corollary}{theorem}
\newtheorem{corollary}[corollary]{Corollary}
\aliascntresetthe{corollary}
\theoremstyle{definition}
\newaliascnt{definition}{theorem}
\newtheorem{definition}[definition]{Definition}
\aliascntresetthe{definition}
\theoremstyle{remark}
\newaliascnt{remark}{theorem}

\aliascntresetthe{remark}

\begin{document}

\begin{center}
{\LARGE\bfseries Kempe Reachability with $k\geq 3$ Colors Is $\mathsf{PSPACE}$-Complete on Claw-Free Graphs\par}
\vspace{0.75em}
Duc A.\ Hoang\\
VNU University of Science, Vietnam National University\\
Hanoi, Vietnam\\
\href{mailto:hoanganhduc@hus.edu.vn}{\texttt{hoanganhduc@hus.edu.vn}}
\end{center}

\begin{abstract}
A \emph{proper vertex coloring} assigns colors to vertices so that adjacent vertices have distinct colors.
A \emph{Kempe change} in a proper vertex coloring exchanges two colors on one component of the subgraph induced by the vertices of those colors.
For a fixed palette of $k$ colors, \textsc{Kempe Reachability} asks whether two given proper vertex colorings of the same graph are connected by a sequence of Kempe changes.
This problem is known to be $\mathsf{PSPACE}$-complete for every fixed $k\geq 3$.
A graph is \emph{claw-free} if no vertex has three pairwise nonadjacent neighbors.
The \emph{line graph} of a graph has one vertex for each edge, with two vertices adjacent exactly when the corresponding edges share an end.
Line graphs are claw-free, and their Kempe changes correspond to \emph{edge-Kempe changes} in the original graph, which exchange two colors on one component of the subgraph formed by the edges of those colors.
A claw-free graph with a proper $k$-coloring has maximum degree at most $2k-2$, whereas for $k\geq 4$ all proper $k$-colorings of a graph of maximum degree at most $k$ are connected by Kempe changes.
We show that, for every fixed $k\geq 3$, \textsc{Kempe Reachability} with $k$ colors is $\mathsf{PSPACE}$-complete even on line graphs of connected simple $k$-regular graphs, and hence even on connected $(2k-2)$-regular claw-free graphs.
Equivalently, deciding whether two proper $k$-edge-colorings are connected by edge-Kempe changes is $\mathsf{PSPACE}$-complete even on connected simple $k$-regular graphs.

ChatGPT 6 Pro and Claude Fable 5.1 were used to assist with developing and reviewing mathematical arguments and drafting and revising the paper.
The author takes full responsibility for the correctness and content of the paper.
\end{abstract}

\section{Introduction}\label{sec:intro}

Combinatorial reconfiguration studies how the feasible solutions of a problem can be transformed into one another by small changes.
A \emph{reconfiguration problem} fixes an underlying problem and a rule that changes one feasible solution into another, and it asks whether two given feasible solutions are connected under that rule.
For overviews of combinatorial reconfiguration, including coloring problems and parameterized complexity, we refer readers to the surveys~\cite{vandenHeuvel2013,Nishimura2018,MynhardtNasserasr2019,Bousquet2024survey}. 

\emph{Kempe changes} are a classical move for graph colorings.
Kempe~\cite{Kempe1879} used them in his attempted proof of the four color theorem. Exchanging two colors on a whole component of the subgraph induced by these colors preserves properness, whereas recoloring a single vertex need not do so.
A concrete instance is a cubic graph with a proper coloring of its edges by three colors, that is, a \emph{Tait coloring}.
Since every vertex is incident with exactly one edge of each color, the edges of any two colors form disjoint even cycles, the \emph{bichromatic cycles}, and exchanging the two colors along one of these cycles gives another Tait coloring.
For example, each Tait coloring of the complete graph $K_4$ gives its three perfect matchings three distinct colors, and the edges of any two colors form a Hamiltonian cycle of $K_4$.
An exchange along such a cycle therefore exchanges two color classes, and so all six Tait colorings of $K_4$ are connected by exchanges.
In general, one asks whether two given Tait colorings of a cubic graph are connected by exchanges along bichromatic cycles.
Kotzig~\cite{Kotzig1975} characterized the subgraphs formed by the edges on which two Tait colorings of a cubic graph differ, and he remarked that not every such subgraph arises from a sequence of exchanges along bichromatic cycles~\cite{Kotzig1975}. Mohar~\cite{Mohar2007} asked for which cubic bipartite graphs all Tait colorings are connected in this way, and he noted there that the Tait colorings of $K_{3,3}$ fall into two classes. 

Let $k\geq 1$ be a fixed positive integer.
Given a graph $G$ and two proper $k$-colorings $\alpha_G$ and $\beta_G$ of $G$, \KR{} asks whether $\alpha_G$ can be transformed into $\beta_G$ by a sequence of Kempe changes.
The equivalence classes of proper $k$-colorings under sequences of Kempe changes are the \emph{Kempe classes} of $G$, so \KR{} asks whether two given colorings lie in the same Kempe class.
The problem is in $\PSPACE$ by the standard reachability argument~\cite{Bonamy2020}. It was shown by Bonamy et al.~\cite{Bonamy2020} that the problem is $\PSPACE$-complete for every fixed $k\geq 3$, and for $k=3$ even on planar graphs of maximum degree six. 

The positive results depend on the relation between the number of colors and the maximum degree.
The \emph{maximum degree} $\Delta(G)$ of a graph $G$ is the largest degree of its vertices.
If $G$ is $d$-degenerate for an integer $d\geq 0$ and $k>d$, then all proper $k$-colorings of $G$ lie in one Kempe class, by a result of Las Vergnas and Meyniel recalled by Mohar~\cite{Mohar2007} and Feghali, Johnson, and Paulusma~\cite{Feghali2017}. For a connected graph $G$ with a proper $k$-coloring and $k\geq\Delta(G)\geq 3$, all proper $k$-colorings of $G$ lie in one Kempe class unless $k=3$ and $G$ is the triangular prism~\cite{Bonamy2019}. For three colors and connected cubic graphs, the same conclusion was proved earlier with the two exceptions $K_4$ and the triangular prism~\cite{Feghali2017}; the graph $K_4$ has no proper $3$-coloring, and the prism has two Kempe classes~\cite{vandenHeuvel2013}, \cite{Feghali2017}. Since a Kempe change acts inside one component of $G$, these results apply componentwise.
For $k\geq3$ and $\Delta(G)\leq k$, the result of Bonamy et al.~\cite{Bonamy2019} applies to every component of maximum degree at least three, with the triangular-prism exception when $k=3$. Every component of maximum degree at most two is $2$-degenerate, so the degeneracy result applies to it whenever $k\geq3$.
Hence, for $k\geq 4$ every instance with $\Delta(G)\leq k$ is a yes-instance, that is, its answer is yes, and for $k=3$ every instance with $\Delta(G)\leq 3$ and no component isomorphic to the triangular prism is a yes-instance.
So, apart from components isomorphic to the triangular prism when $k=3$, \KR{} with $k$ colors is interesting only on graphs of maximum degree greater than $k$.

For edge-colorings, the relevant parameter is the \emph{chromatic index} $\chi'(G)$ of a graph $G$, the least number of colors in a proper edge-coloring of $G$.
Holyer~\cite{Holyer1981} showed that it is $\NP$-complete to decide whether a cubic graph has chromatic index three, that is, whether it has a Tait coloring. McDonald, Mohar, and Scheide~\cite{McDonald2012} showed that all proper $(\Delta(G)+1)$-edge-colorings of a graph $G$ with $\Delta(G)\leq 3$ are connected by \emph{edge-Kempe changes}, which exchange two colors on one component of the subgraph formed by the edges of these two colors. For every graph $G$, the same holds for all proper $k$-edge-colorings with $k\geq\chi'(G)+2$, by a result of Mohar~\cite{Mohar2007}, and with $k=\chi'(G)+1$, by a result of Narboni~\cite{Narboni2023}. So reachability under edge-Kempe changes with $k$ colors is interesting only for $k=\chi'(G)$, since there is no proper $k$-edge-coloring for $k<\chi'(G)$; the case $k=\chi'(G)$ is the setting of \cref{thm:equivalence} and of the last statement of \cref{thm:main}.

\emph{Claw-free graphs}, the graphs with no induced $K_{1,3}$, are a natural class on which to study \KR{}, for three reasons.
First, the \emph{line graph} $L(G)$ of a graph $G$, whose vertices are the edges of $G$, two of them adjacent when they share an end, is claw-free, and the Kempe changes of $L(G)$ are the edge-Kempe changes of $G$ (\cref{lem:transfer}).
Hence, on line graphs, \KR{} is the reachability problem for edge-colorings, to which the questions on Tait colorings and the results on edge-Kempe changes recalled above belong.
Second, a claw-free graph with a proper $k$-coloring has maximum degree at most $2k-2$, because the neighborhood of a vertex $v$ meets each of the $k-1$ color classes that do not contain $v$ in at most two vertices.
Since for $k\geq 4$ every instance of maximum degree at most $k$ is a yes-instance, hard instances on claw-free graphs can only have maximum degree between $k+1$ and $2k-2$.
Third, Belavadi and Cameron~\cite{BelavadiCameron2025} characterized the graphs $H$ such that, for every graph $G$ with no induced subgraph isomorphic to $H$ and every $k$ at least the chromatic number of $G$, all proper $k$-colorings of $G$ lie in one Kempe class: these are the induced subgraphs of the path on four vertices. Since the claw is not one of them, some claw-free graphs have more than one Kempe class for some number of colors, so the question whether two given colorings lie in the same Kempe class is not trivial on claw-free graphs.

In this paper, we show that \KR{} is $\PSPACE$-complete on claw-free graphs, for every fixed number $k\geq 3$ of colors, with instances of the largest possible maximum degree $2k-2$.
The following theorem, which is proved in \cref{sec:closing}, states this result precisely.

\begin{theorem}\label{thm:main}
For every fixed integer $k\geq 3$, \KR{} with $k$ colors is $\PSPACE$-complete even when the input graph is the line graph of a connected simple $k$-regular graph.
In particular, it is $\PSPACE$-complete even when the input graph is a connected $(2k-2)$-regular claw-free graph.
Moreover, deciding whether two proper $k$-edge-colorings of a graph are connected by edge-Kempe changes is $\PSPACE$-complete even on connected simple $k$-regular graphs.
\end{theorem}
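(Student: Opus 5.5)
Membership in $\PSPACE$ follows from the standard reachability argument recalled in the introduction, so the work is hardness. I would reduce from \KR{} itself, or more concretely from a known $\PSPACE$-complete reconfiguration problem such as Nondeterministic Constraint Logic (NCL) on planar AND/OR graphs of bounded degree. The reduction would go to edge-Kempe reachability for proper $k$-edge-colorings of connected simple $k$-regular graphs. By \cref{lem:transfer}, Kempe changes of $L(G)$ are exactly edge-Kempe changes of $G$, so the three statements of \cref{thm:main} are equivalent. Moreover, $L(G)$ of a connected $k$-regular simple graph is connected, claw-free and $(2k-2)$-regular. So it suffices to prove the last sentence.

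The plan is to handle $k=3$ first via Tait colorings of cubic graphs, then lift to general $k$. For $k=3$, every bichromatic component is an even cycle, so an edge-Kempe change is a cycle exchange. I would build gadgets from small cubic pieces with two Kempe classes, such as the $K_{3,3}$-like pieces noted by Mohar. These encode Boolean states: a ``wire'' gadget whose two Tait colorings up to local exchanges encode edge orientation, plus AND and OR vertex gadgets. Gadgets are joined through 2-edge cuts or 3-edge cuts. Parity (Kotzig / the parity lemma for 3-edge cuts in Tait colorings) forces the colors on a cut to be either all distinct or of pattern $aab$, and this is what transmits information.

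The key invariant to prove is a two-way simulation. Every NCL move is realized by a bounded sequence of cycle exchanges. Conversely, every cycle exchange either stays inside one gadget without changing the encoded state, or changes it in an NCL-legal way. The converse is the hard part. A bichromatic cycle can traverse many gadgets, so an exchange is nonlocal. I would first try to control this by designing interfaces so that any bichromatic cycle crossing an interface crosses it in a controlled way. I would also normalize colorings by a global color permutation, since swapping two colors on all cycles is harmless. Then I would show that a long cycle's exchange equals a composition of local exchanges, using the fact that exchanging along a cycle $C$ equals exchanging along all other $\{a,b\}$-cycles followed by a global swap.

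For $k\ge4$, I would lift a $k$-regular instance from a $(k-1)$-regular one. Add a perfect matching colored with the new color $k$ by taking two copies of $G$ and joining corresponding vertices, or use gadget padding to keep the graph simple and connected. Then I would show that color $k$ can be frozen, or that changes involving it are always undoable, so reachability is preserved. I expect this lifting, specifically ruling out Kempe chains that mix color $k$ with old colors to create new shortcuts, to need care. The main obstacle overall remains the soundness direction of the $k=3$ gadget simulation against long bichromatic cycles.
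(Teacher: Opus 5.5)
Your reduction of the statement to its last sentence via \cref{lem:transfer} and the line-graph properties is fine, and the overall shape (NCL source, gadgets, lifting) matches the paper. But the proposal leaves open the step that carries the proof, namely soundness. The mechanisms you suggest for it would not work.

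\textbf{Problems with the suggested mechanisms.} First, your interface cannot transmit a bit. By the Parity Lemma (\cref{lem:parity}), in every Tait coloring a $2$-edge cut is monochromatic and a $3$-edge cut is rainbow, so the pattern $aab$ never occurs on such a cut. An equal/unequal signal has to live on a pair of edges inside a larger boundary, such as the five dangling edges $(a,b),(c,d),e$ of Holyer's inverting component, where one color occurs three times and exactly one of the two pairs is equal.

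Second, a state defined by the Kempe class of a small $K_{3,3}$-like piece is neither well defined nor preserved once the piece has dangling edges, because chains of the whole graph enter and leave it.

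Third, neither rerouting interfaces nor the identity ``exchange on $C$ equals exchange on all other $\{a,b\}$-chains followed by a global swap'' bounds what one long exchange does to many gadgets at once. The complementary chains are just as long.

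Finally, for $k\geq4$ you cannot freeze the new color in the backward direction. An arbitrary Kempe sequence may use a pair containing the new color, after which the added matching is no longer monochromatic. Neither copy then carries a $(k-1)$-edge-coloring, and there is nothing left to project.

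\textbf{The missing idea.} What you need is an encoding defined for \emph{every} proper $k$-edge-coloring, together with a local lock.

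The paper stores each source direction in the equality type of a connector of a negator built natively for $k$ colors from the Leven--Galil components. In every coloring exactly one of its connectors is equal, and its $k-2$ residual edges carry exactly the colors missing from the unequal connector (\cref{lem:negator-law}). The gates make the orientation read off from these equality types legal for every coloring (\cref{lem:projection}).

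The residual bundles of the end negators attach to a triangle at each source vertex. This forces an unequal slot to be colored by the pair at its triangle vertex (\cref{lem:triangle}). Suppose an exchange with pair $P$ changes a slot's equality type. Then that slot is colored $P$ when unequal, and its residual edges, colored outside $P$, are untouched. So its triangle vertex carries $P$ both before and after the exchange.

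The three triangle vertices carry distinct pairs. Hence one exchange, even on several chains of $P$, changes at most one slot per gate (\cref{lem:guard}), and the changed source edges form a matching that can be reversed one at a time (\cref{lem:reflection}). This needs no control over chain routes and works for all $k$ at once. The extra colors $3,\dots,k-1$ are frozen only in the forward direction, where you choose the moves.

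\textbf{Other pieces you still need.} Backward arguments are also required for closing off dangling edges and removing parallel edges. The paper reads one copy of the doubled multipole and applies the Parity Lemma to $K_{k,k-1}$ blocks.

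For the forward direction you need canonical colorings that agree along every wire, so that a reversal is one global Kempe path. The paper obtains these from a Tait coloring of the source, with protected $\mathrm{OR}$ vertices limiting the local transitions. Your ``padding'' and ``bounded sequences of cycle exchanges'' do not yet supply either of these.
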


The degree $2k-2$ in \cref{thm:main} is the largest possible for a claw-free graph that admits a proper $k$-coloring.
For fixed $k\geq4$, the connectivity results above settle every instance with maximum degree at most $k$: any two proper $k$-colorings are reachable from one another.
Our hardness construction uses $(2k-2)$-regular graphs, so it settles the other end of the possible degree range.
Those hard instances do not satisfy a smaller maximum-degree bound, and hardness for a graph class does not automatically imply hardness for its subclasses.
Thus, the intermediate bounds $k+1\leq d\leq2k-3$ are not settled by these results.
For example, with four colors, the connectivity results recalled above~\cite{Mohar2007,Bonamy2019} cover maximum degree at most four. Our hard instances have degree six by \cref{thm:main}, so the bound five requires a different argument.

We now explain the ideas behind the reduction and, at the same time, outline the rest of the paper.
We will encode source configurations by edge-colorings of a connected simple $k$-regular graph.
Taking its line graph will then give the required claw-free vertex-coloring instance.
The five steps in \cref{fig:reduction-overview}(a) track the graphs and their two configurations through this construction.

We start with the nondeterministic constraint logic (NCL) problem of Hearn and Demaine~\cite{HearnDemaine2005}. For fixed $k\geq3$, our input is a nonempty connected simple cubic $\mathrm{AND}/\mathrm{OR}$ constraint graph $X_0$ with two legal orientations $x_s^0,x_t^0$.
Each source edge can be viewed as a switch with two positions, given by its direction.
An $\mathrm{AND}$ vertex has one edge of weight two and two of weight one.
An $\mathrm{OR}$ vertex has three edges of weight two.
A configuration is legal when every vertex has incoming weight at least two.
A move reverses one edge while keeping this rule.
Deciding whether $x_s^0$ can be changed into $x_t^0$ by these moves is $\PSPACE$-complete even under these restrictions (\cref{prop:source}).

We construct a connected $(2k-2)$-regular claw-free graph $H$ and two proper $k$-vertex-colorings $\alpha,\beta$ in polynomial time.
Here $\sim_{X_0}$ denotes reachability by legal reversals, and $\sim_H$ denotes Kempe reachability.
The required equivalence is
\[
x_s^0\sim_{X_0}x_t^0\quad\Longleftrightarrow\quad\alpha\sim_H\beta.
\]
Our main task is to build small graphs whose edge-colorings record the source directions and enforce the rules at the source vertices.
We must show both that legal source moves can be simulated and that other intermediate colorings cannot create connections absent from the source.

In \cref{sec:prelim}, we introduce the notation and prepare the source for this simulation (Step~1).
At an $\mathrm{OR}$ vertex, the inflow rule allows seven local patterns of edge directions: all except the one with three outward edges.
Our local simulation uses the simpler case in which two specified edges never both point inward, leaving only five patterns to represent.
We obtain this case by replacing each $\mathrm{OR}$ vertex with a small network of $\mathrm{AND}$ and $\mathrm{OR}$ vertices, using the construction of Hearn and Demaine~\cite{HearnDemaine2009}. At every $\mathrm{OR}$ vertex of the new graph, two designated edges form a \emph{protected pair}: the inflow rules of the whole graph prevent them from both pointing inward in any legal configuration.
This replacement preserves reachability between corresponding configurations and introduces no loops.

The pieces used to simulate the two ends of a source edge also need consistent color labels.
We therefore modify the source further to obtain a Tait coloring $\tau$, while preserving the protected pairs and reachability.
Its three colors give distinct labels to the three edges at each vertex and the same label to an edge at both ends.
These labels will let us join the local colorings into one proper coloring of the whole network.
The resulting constraint multigraph $X$ is connected, loopless and cubic, and its two legal orientations $x_s,x_t$ correspond to $x_s^0,x_t^0$.
The new orientations are reachable from one another exactly when the original ones are.
The coloring $\tau$ stays fixed; the source configurations still change only edge directions.

Step~2 begins in \cref{sec:construction}, where we build the network $F_X^k$ shown in \cref{fig:reduction-overview}(b).
Each source vertex is replaced by a small graph called a \emph{gate}, which helps enforce its incoming-weight rule.
Each source edge is replaced by a \emph{wire} joining the corresponding gates and carrying its direction.
We build these pieces with free edge ends so that they can be joined to one another.
At each end of a wire, a pair of edges acts as a two-state switch: its colors either agree or differ.
A fixed rule for that position in the gate interprets these two possibilities as inward and outward.
The wire must make its two ends describe one source direction: inward at one end and outward at the other.
We will choose proper $k$-edge-colorings $c_s,c_t$ of the network to represent $x_s,x_t$.

To make the two ends agree on this direction, we use a \emph{negator}, a small multigraph with two specified ordered pairs of free edge ends.
In every proper $k$-edge-coloring, exactly one of these pairs has equal colors.
We build negators from the inverting components of Leven and Galil~\cite{LevenGalil1983}. Along a wire, the same-color and different-color pairs alternate.
In the figure, the edge $f=uv$ becomes a wire of two negators, while $g=vw$ becomes a wire of three.
The choice accounts for how the two incident positions in the gates interpret the color pairs.

The gates also need to coordinate their three incident wires.
For this purpose, we add one triangle for each source vertex, as shown below the gates in \cref{fig:reduction-overview}(b).
The thick black bundles join each triangle vertex to a negator at the end of the corresponding wire.
Together, the gates and triangles enforce the source inflow rules, so every proper coloring of the network can be read as a legal source configuration.

To rule out extra connections, we must also control how a Kempe change alters the recorded directions.
A triangle has three distinct edge colors, so the two edges at each of its vertices give a different color pair.
When the two edges at a wire end have different colors, the attached bundle forces them to use the pair at the corresponding triangle vertex.
For a Kempe change to alter a recorded direction, its two exchanged colors must be the pair at the corresponding triangle vertex.
Thus, the triangle acts like a local lock: one exchange can change at most one of the three incident source directions.
Several source edges may change direction together, but they have no common endpoint and can be reversed one at a time through legal configurations.
Reading each coloring of a Kempe sequence in this way gives a sequence of legal source reversals.
This argument applies to arbitrary intermediate colorings, so extra freedom inside the gadgets cannot connect source configurations that were originally disconnected.

\begin{figure}[ht]
\centering
\begin{adjustbox}{max width=\textwidth}
\begin{minipage}{150mm}
\centering
{\small (a) Reduction Overview\par}
\smallskip
\begingroup
\definecolor{tikzdrawPrimary}{HTML}{1F4E79}
\definecolor{tikzdrawAccent}{HTML}{D97A00}
\definecolor{tikzdrawNeutral}{HTML}{505A64}
\definecolor{tikzdrawLight}{HTML}{E9EEF3}
\definecolor{tikzdrawSuccess}{HTML}{2E7D32}
\definecolor{tikzdrawDanger}{HTML}{B3261E}
\tikzset{
  box/.style={
    draw=tikzdrawPrimary,
    rounded corners=2pt,
    very thick,
    minimum height=8mm,
    text centered,
    align=center,
    inner sep=2mm
  },
  io/.style={
    box,
    trapezium,
    trapezium left angle=70,
    trapezium right angle=110
  },
  decision/.style={
    draw=tikzdrawAccent,
    diamond,
    aspect=2,
    text centered,
    inner sep=1.5mm
  },
  groupbox/.style={
    draw=tikzdrawNeutral,
    dashed,
    rounded corners=2pt,
    inner sep=3mm
  },
  edge/.style={
    ->,
    very thick,
    draw=tikzdrawNeutral
  },
  graphnode/.style={
    circle,
    draw=tikzdrawPrimary,
    fill=white,
    very thick,
    minimum size=5.5mm,
    inner sep=0pt,
    font=\scriptsize
  },
  graphedge/.style={
    very thick,
    draw=tikzdrawNeutral
  },
  note/.style={
    font=\small,
    text=tikzdrawNeutral
  }
}
\begin{tikzpicture}[node distance=10mm and 14mm]
\node[box, draw=black, rounded corners=0pt, line width=0.45pt, text width=33mm, inner sep=1mm, font=\fontsize{9}{10.5}\selectfont, node distance=6mm and 14mm, minimum width=35mm, minimum height=15mm] (original) {Original NCL Instance\\$X_0$\\$x_s^0,x_t^0$};
\node[box, draw=black, rounded corners=0pt, line width=0.45pt, text width=33mm, inner sep=1mm, font=\fontsize{9}{10.5}\selectfont, node distance=6mm and 14mm, minimum width=35mm, minimum height=15mm, right=of original] (source) {Prepared Source\\$(X,\tau)$\\$x_s,x_t$};
\node[box, draw=black, rounded corners=0pt, line width=0.45pt, text width=33mm, inner sep=1mm, font=\fontsize{9}{10.5}\selectfont, node distance=6mm and 14mm, minimum width=35mm, minimum height=15mm, right=of source] (network) {Gadget Network\\$F_X^k$\\$c_s,c_t$};
\node[box, draw=black, rounded corners=0pt, line width=0.45pt, text width=33mm, inner sep=1mm, font=\fontsize{9}{10.5}\selectfont, node distance=6mm and 14mm, minimum width=35mm, minimum height=15mm, below=of network] (doubled) {Doubled Multigraph\\$D(F_X^k)$\\$d_s,d_t$};
\node[box, draw=black, rounded corners=0pt, line width=0.45pt, text width=33mm, inner sep=1mm, font=\fontsize{9}{10.5}\selectfont, node distance=6mm and 14mm, minimum width=35mm, minimum height=15mm, left=of doubled] (simple) {Simple Regular Graph\\$G_X^k$\\$a_s,a_t$};
\node[box, draw=black, rounded corners=0pt, line width=0.45pt, text width=33mm, inner sep=1mm, font=\fontsize{9}{10.5}\selectfont, node distance=6mm and 14mm, minimum width=35mm, minimum height=15mm, left=of simple] (line) {Line Graph\\$H=L(G_X^k)$\\$\alpha,\beta$};
\draw[edge] (original) -- node[above, note] {{\fontsize{9}{10.5}\selectfont 1}} (source);
\draw[edge] (source) -- node[above, note] {{\fontsize{9}{10.5}\selectfont 2}} (network);
\draw[edge] (network) -- node[right, note] {{\fontsize{9}{10.5}\selectfont 3}} (doubled);
\draw[edge] (doubled) -- node[above, note] {{\fontsize{9}{10.5}\selectfont 4}} (simple);
\draw[edge] (simple) -- node[above, note] {{\fontsize{9}{10.5}\selectfont 5}} (line);
\end{tikzpicture}
\endgroup

\par\smallskip
\begin{minipage}{145mm}
\fontsize{9}{10.5}\selectfont
\setlength{\parindent}{0pt}
\begin{minipage}[t]{70mm}
\textbf{1.} Loopless protection and Tait preprocessing (\cref{sec:ncl,sec:tait-source}).\par
\smallskip
\textbf{2.} Replace source vertices and edges by gadgets (\cref{sec:construction}).\par
\smallskip
\textbf{3.} Double and join corresponding dangling edges (\cref{sec:double}).
\end{minipage}\hfill
\begin{minipage}[t]{70mm}
\textbf{4.} Replace each vertex by a block $K_{k,k-1}$ (\cref{sec:blocks}).\par
\smallskip
\textbf{5.} Take the line graph (\cref{sec:main-proof}).
\end{minipage}
\end{minipage}

\par\medskip
{\small (b) Replacing Source Vertices and Edges\par}
\smallskip
\begin{adjustbox}{max width=\linewidth}
\begin{tikzpicture}[x=1mm,y=1mm,
 font=\fontsize{9}{10.5}\selectfont,
 lb/.style={inner sep=1pt},
 srcand/.style={draw,rectangle,fill=white,minimum size=6mm,inner sep=0pt},
 srcor/.style={draw,circle,fill=white,minimum size=6mm,inner sep=0pt},
 gate/.style={draw,fill=white,minimum width=12mm,minimum height=8mm,inner sep=0pt},
 neg/.style={draw,fill=white,minimum width=9mm,minimum height=8mm,inner sep=0pt},
 tv/.style={circle,fill=black,minimum size=1.1mm,inner sep=0pt},
 wtone/.style={red!75!black,dashed,line width=.55pt},
 wttwo/.style={blue!70!black,line width=1.1pt},
 bundle/.style={line width=1.6pt},
 corr/.style={gray,densely dotted,line width=.6pt}]
\node[lb,anchor=west] at (0,94) {Source Fragment};
\draw[wtone] (57,94)--(65,94);
\node[lb,anchor=west] at (66,94) {weight 1};
\draw[wttwo] (105,94)--(113,94);
\node[lb,anchor=west] at (114,94) {weight 2};
\node[srcand] (srcu) at (10,78) {$u$};
\node[srcor] (srcv) at (68,78) {$v$};
\node[srcor] (srcw) at (136,78) {$w$};
\node[lb,anchor=south] at (10,82) {AND};
\node[lb,anchor=south east] at (64,82) {OR};
\node[lb,anchor=south east] at (132,82) {OR};
\coordinate (srcu1) at (1,85);
\coordinate (srcu2) at (1,69);
\coordinate (srcv2) at (68,88);
\coordinate (srcw0) at (136,88);
\coordinate (srcw2) at (147,70);
\draw[wtone] (srcu)--(srcu1);
\draw[wtone] (srcu)--(srcu2);
\coordinate (srcu0) at (13,78);
\coordinate (srcv1) at (65,78);
\coordinate (srcv0) at (71,78);
\coordinate (srcw1) at (133,78);
\draw[wttwo] (srcu0)--(srcv1);
\draw[wttwo] (srcv0)--(srcw1);
\draw[wttwo] (srcv)--(srcv2);
\draw[wttwo] (srcw)--(srcw0);
\draw[wttwo] (srcw)--(srcw2);
\node[lb,anchor=south] at (39,78) {$f$};
\node[lb,anchor=south] at (102,78) {$g$};
\node[lb,anchor=south] at (17,79.2) {$0$};
\node[lb,anchor=south] at (61,79.2) {$1$};
\node[lb,anchor=south] at (75,79.2) {$0$};
\node[lb,anchor=south] at (129,79.2) {$1$};
\node[lb] at (2,81) {$1$};
\node[lb] at (2,74) {$2$};
\node[lb,anchor=west] at (69,86) {$2$};
\node[lb,anchor=west] at (137,86) {$0$};
\node[lb,anchor=west] at (142,76) {$2$};
\node[lb] at (68,70) {protected: $0,2$};
\node[lb] at (133,65) {protected: $0,2$};
\draw[corr] (10,73)--(10,54);
\draw[corr] (68,66)--(68,57);
\draw[corr] (136,61)--(136,54);
\node[lb,anchor=west] at (25,61) {Replacement};
\node[gate] (gateu) at (10,43) {AND};
\node[gate] (gatev) at (68,43) {OR};
\node[gate] (gatew) at (136,43) {OR};
\node[lb] at (10,51) {$u$};
\node[lb] at (73,62) {$v$};
\node[lb] at (142,53) {$w$};
\node[neg] (wf0) at (29,43) {$W_k$};
\node[neg] (wf1) at (48,43) {$W_k$};
\node[neg] (wg0) at (87,43) {$W_k$};
\node[neg] (wg1) at (104,43) {$W_k$};
\node[neg] (wg2) at (121,43) {$W_k$};
\node[lb] at (39,52) {$f:\ q_f=2$};
\node[lb] at (104,52) {$g:\ q_g=3$};
\coordinate (u0c1) at (16,44.25);
\coordinate (u0c2) at (16,41.75);
\coordinate (v1c1) at (62,44.25);
\coordinate (v1c2) at (62,41.75);
\coordinate (v0c1) at (74,44.25);
\coordinate (v0c2) at (74,41.75);
\coordinate (w1c1) at (130,44.25);
\coordinate (w1c2) at (130,41.75);
\coordinate (f0ic1) at (24.5,44.25);
\coordinate (f0ic2) at (24.5,41.75);
\coordinate (f0oc1) at (33.5,44.25);
\coordinate (f0oc2) at (33.5,41.75);
\coordinate (f1ic1) at (43.5,44.25);
\coordinate (f1ic2) at (43.5,41.75);
\coordinate (f1oc1) at (52.5,44.25);
\coordinate (f1oc2) at (52.5,41.75);
\coordinate (g0ic1) at (82.5,44.25);
\coordinate (g0ic2) at (82.5,41.75);
\coordinate (g0oc1) at (91.5,44.25);
\coordinate (g0oc2) at (91.5,41.75);
\coordinate (g1ic1) at (99.5,44.25);
\coordinate (g1ic2) at (99.5,41.75);
\coordinate (g1oc1) at (108.5,44.25);
\coordinate (g1oc2) at (108.5,41.75);
\coordinate (g2ic1) at (116.5,44.25);
\coordinate (g2ic2) at (116.5,41.75);
\coordinate (g2oc1) at (125.5,44.25);
\coordinate (g2oc2) at (125.5,41.75);
\coordinate (f0r) at (29,39);
\coordinate (f1r) at (48,39);
\coordinate (g0r) at (87,39);
\coordinate (g1r) at (104,39);
\coordinate (g2r) at (121,39);
\coordinate (Su0) at (16,23);
\coordinate (Su1) at (5,10);
\coordinate (Su2) at (27,10);
\coordinate (Sv0) at (75,23);
\coordinate (Sv1) at (64,10);
\coordinate (Sv2) at (86,10);
\coordinate (Sw0) at (136,23);
\coordinate (Sw1) at (125,10);
\coordinate (Sw2) at (147,10);
\coordinate (cropu1c1) at (0,44.25);
\coordinate (cropu1c2) at (0,41.75);
\coordinate (u1c1) at (4,44.25);
\coordinate (u1c2) at (4,41.75);
\coordinate (u2c1) at (8.75,39);
\coordinate (u2c2) at (11.25,39);
\coordinate (cropu2c1) at (8.75,33);
\coordinate (cropu2c2) at (11.25,33);
\coordinate (v2c1) at (66.75,47);
\coordinate (v2c2) at (69.25,47);
\coordinate (cropv2c1) at (66.75,55);
\coordinate (cropv2c2) at (69.25,55);
\coordinate (w0c1) at (134.75,47);
\coordinate (w0c2) at (137.25,47);
\coordinate (cropw0c1) at (134.75,56);
\coordinate (cropw0c2) at (137.25,56);
\coordinate (w2c1) at (142,44.25);
\coordinate (w2c2) at (142,41.75);
\coordinate (cropw2c1) at (149,44.25);
\coordinate (cropw2c2) at (149,41.75);
\coordinate (cropSu1) at (5,3);
\coordinate (cropSu2) at (27,3);
\coordinate (cropSv2) at (86,3);
\coordinate (cropSw0) at (147,32);
\coordinate (cropSw2) at (147,3);
\coordinate (g1dangling) at (104,27);
\draw (u0c1)--(f0ic1);
\draw (u0c2)--(f0ic2);
\draw (f0oc1)--(f1ic1);
\draw (f0oc2)--(f1ic2);
\draw (f1oc1)--(v1c1);
\draw (f1oc2)--(v1c2);
\draw (v0c1)--(g0ic1);
\draw (v0c2)--(g0ic2);
\draw (g0oc1)--(g1ic1);
\draw (g0oc2)--(g1ic2);
\draw (g1oc1)--(g2ic1);
\draw (g1oc2)--(g2ic2);
\draw (g2oc1)--(w1c1);
\draw (g2oc2)--(w1c2);
\draw (u1c1)--(cropu1c1);
\draw (u1c2)--(cropu1c2);
\draw (u2c1)--(cropu2c1);
\draw (u2c2)--(cropu2c2);
\draw (v2c1)--(cropv2c1);
\draw (v2c2)--(cropv2c2);
\draw (w0c1)--(cropw0c1);
\draw (w0c2)--(cropw0c2);
\draw (w2c1)--(cropw2c1);
\draw (w2c2)--(cropw2c2);
\node[lb,anchor=south] at (20,45) {$0$};
\node[lb,anchor=south] at (58,45) {$1$};
\node[lb,anchor=south] at (78,45) {$0$};
\node[lb,anchor=south] at (128,45) {$1$};
\node[lb,anchor=south] at (1,46) {$u{:}1$};
\node[lb,anchor=north] at (9,32) {$u{:}2$};
\node[lb,anchor=west] at (71,55) {$v{:}2$};
\node[lb,anchor=east] at (133,57) {$w{:}0$};
\node[lb,anchor=south] at (145,46) {$w{:}2$};
\draw[bundle] (f0r)--(Su0);
\draw[bundle] (f1r)--(48,10)--(Sv1);
\draw[bundle] (g0r)--(Sv0);
\draw[bundle] (g2r)--(121,10)--(Sw1);
\draw[bundle] (g1r)--(g1dangling);
\node[lb,anchor=west] at (105,29) {dangling};
\draw (Su0)--(Su1)--(Su2)--(Su0);
\node[tv] at (Su0) {};
\node[lb,above left] at (Su0) {$S^u_0$};
\node[tv] at (Su1) {};
\node[lb,below right] at (Su1) {$S^u_1$};
\node[tv] at (Su2) {};
\node[lb,below left] at (Su2) {$S^u_2$};
\draw (Sv0)--(Sv1)--(Sv2)--(Sv0);
\node[tv] at (Sv0) {};
\node[lb,above left] at (Sv0) {$S^v_0$};
\node[tv] at (Sv1) {};
\node[lb,below right] at (Sv1) {$S^v_1$};
\node[tv] at (Sv2) {};
\node[lb,below left] at (Sv2) {$S^v_2$};
\draw (Sw0)--(Sw1)--(Sw2)--(Sw0);
\node[tv] at (Sw0) {};
\node[lb,above left] at (Sw0) {$S^w_0$};
\node[tv] at (Sw1) {};
\node[lb,below right] at (Sw1) {$S^w_1$};
\node[tv] at (Sw2) {};
\node[lb,below left] at (Sw2) {$S^w_2$};
\draw[bundle] (Su1)--(cropSu1);
\draw[bundle] (Su2)--(cropSu2);
\draw[bundle] (Sv2)--(cropSv2);
\draw[bundle] (Sw0)--(cropSw0);
\draw[bundle] (Sw2)--(cropSw2);
\node[lb,anchor=north] at (5,2) {$u{:}1$};
\node[lb,anchor=north] at (27,2) {$u{:}2$};
\node[lb,anchor=north] at (86,2) {$v{:}2$};
\node[lb,anchor=south east] at (147,33) {$w{:}0$};
\node[lb,anchor=north] at (147,2) {$w{:}2$};
\draw[bundle] (38,1)--(47,1);
\node[lb,anchor=west] at (49,1) {$k-2$ residual edges};
\end{tikzpicture}

\end{adjustbox}
\end{minipage}
\end{adjustbox}
\caption{Reduction for fixed $k\geq3$.
Boxes in (a) list graphs and their two endpoint states: orientations in the first two, edge-colorings in the next three, and vertex-colorings in the last.
The coloring $\tau$ is fixed auxiliary data; $G_X^k=B(D(F_X^k))$ and $H=L(G_X^k)$.
Arrows preserve reachability in both directions.
In (b), colors encode source weights; slots $0,1,2$ name the three incident positions at each vertex of the prepared source $X$.
Dotted lines show correspondence, and $q_f,q_g$ count the negators in the two wires.
The marked pairs are protected in the whole source.
Network boxes are gates or negators $W_k$.
The triangle vertex $S^z_j$ belongs to source vertex $z$ and slot $j$.
Paired lines show ordered connections, first above second horizontally.
Thick lines collect $k-2$ residual edges, outside the paired connections.
Labels like $u{:}1$ mark corresponding parts of one omitted wire, not identifications.
The middle bundle on $g$ remains dangling; gate residual edges are omitted.}
\label{fig:reduction-overview}
\end{figure}
 \clearpage

In \cref{sec:three}, we prove the other direction by choosing colorings that simulate the legal source moves.
For $k=3$ colors, we choose one particular coloring for each legal source configuration.
The fixed labels from $\tau$ make these choices agree along each wire.
When a legal move reverses the edge $f=uv$ in \cref{fig:reduction-overview}(b), we join paths inside the gates at $u$ and $v$ through its wire.
The joined paths form a whole component using two colors.
Exchanging the two colors on this component gives exactly the chosen coloring for the new configuration.
Thus, one source reversal uses one Kempe change at the cubic network stage.
For $k>3$ colors, we choose simulating sequences that keep the added colors fixed.
Deleting the edges with these colors leaves the three-color construction with short replacements along some edges.
The paths extend through those replacements, so legal reversals are still realized by sequences of Kempe changes.
Applying these choices to $x_s,x_t$ gives the pair $c_s,c_t$ and completes Step~2.

The network now has the required reachability behavior, but it still has free ends and may have parallel edges.
Its internal vertices already have degree $k$; each triangle vertex has two triangle edges and $k-2$ edges in its attached bundle.
The connected pieces follow the connections of $X$, so the network is connected.
In \cref{sec:closing}, we complete the graph while preserving reachability.
Step~3 (\cref{sec:double}) joins corresponding free ends of two copies, producing the multigraph $D(F_X^k)$.
Each internal vertex keeps its $k$ incident edges, and the joined edges connect the two copies.
The result is therefore connected and $k$-regular.
The new proper $k$-edge-colorings $d_s,d_t$ repeat $c_s,c_t$ on both copies; each joined edge keeps the color of its old free end.

Step~4 (\cref{sec:blocks}) replaces each vertex of this multigraph with a block $K_{k,k-1}$, producing $G_X^k=B(D(F_X^k))$.
Distinct incident edges receive distinct attachment vertices, removing parallel edges.
Each attachment vertex has $k-1$ edges inside its block and one outside; every other block vertex has $k$ internal edges.
The blocks are connected, so $G_X^k$ is connected, simple and $k$-regular.
The proper $k$-edge-colorings $a_s,a_t$ retain the colors of $d_s,d_t$ on the edges between blocks and extend them inside each block by a fixed rule.
Sequences extend through both steps.
Conversely, reading one copy or the edges between blocks turns an arbitrary Kempe sequence into a sequence in the preceding graph.
Reachability between the endpoint pairs is therefore preserved in both directions, although a single move may become several.

Step~5 (\cref{sec:main-proof}) gives the output graph $H=L(G_X^k)$.
Each edge of $G_X^k$ becomes a vertex of $H$; reading $a_s,a_t$ on these vertices gives the proper $k$-vertex-colorings $\alpha,\beta$.
An edge has $k-1$ other edges at each end, with no overlap because $G_X^k$ is simple, giving degree $2k-2$ in $H$.
The neighbors of each vertex of $H$ fall into two groups, each pairwise adjacent, so no three neighbors are pairwise nonadjacent and $H$ is claw-free.
Connectedness also passes to the line graph~(\cref{lem:line-graphs}).
Edge-Kempe changes now correspond exactly to vertex-Kempe changes~(\cref{lem:transfer}).
Together with membership in $\PSPACE$, these steps prove \cref{thm:main}; in \cref{sec:main-proof}, we also bound the order of the resulting graph.
We conclude in \cref{sec:conclusion} with the questions left by the construction.

\section{Preliminaries}\label{sec:prelim}

The introduction outlined a reduction through edge-colorings; we now set up the objects and moves needed to carry it out.
We recall graph colorings and Kempe changes in \cref{sec:prelim-graphs}, then describe graphs with free edge ends in \cref{sec:prelim-multipoles}.
We prepare the constraint-logic source in \cref{sec:ncl,sec:source-problem,sec:tait-source}, handling protection and the Tait coloring that coordinates the later gadgets.

\subsection{Graphs, Colorings, and Kempe Changes}\label{sec:prelim-graphs}

The reduction will move between edge-colorings and vertex-colorings of a line graph.
We first make this correspondence precise and record the graph properties it preserves.
For standard graph notation and concepts not defined here, we follow Diestel~\cite{Diestel2025}. 

All graphs considered here are finite.
We use $V(G)$ and $E(G)$ to denote the sets of vertices and edges of a (simple, undirected) graph $G$, respectively.
A \emph{multigraph} may have several edges with the same two ends, which are called \emph{parallel}, but it has no loops.
This loopless convention is narrower than Diestel's convention for multigraphs~\cite{Diestel2025}. A graph, also called a \emph{simple graph}, is a loopless multigraph without parallel edges.
The \emph{degree} of a vertex of a multigraph is the number of edges at the vertex.
For an integer $d\geq 0$, a multigraph is \emph{$d$-regular} if every vertex has degree $d$, and it is \emph{cubic} if it is $3$-regular.
A graph is claw-free if it has no induced subgraph isomorphic to the star $K_{1,3}$.
The line graph $L(G)$ of a graph $G$ has the vertex set $E(G)$, and two distinct edges of $G$ are adjacent in $L(G)$ exactly when they have a common end.

Let $k\geq 1$ be a fixed positive integer.
A \emph{proper $k$-coloring} of a graph $G$ is a map $c\colon V(G)\to\{0,\dots,k-1\}$ such that $c(u)\neq c(v)$ for every edge $uv\in E(G)$.
A \emph{proper $k$-edge-coloring} of a multigraph $G$ is a map $\varphi\colon E(G)\to\{0,\dots,k-1\}$ that gives distinct colors to any two edges with a common end.
For a graph $G$, a proper $k$-edge-coloring is equivalently a proper $k$-coloring of $L(G)$.
The chromatic index $\chi'(G)$ of a multigraph $G$ is the least integer $\ell$ such that $G$ has a proper $\ell$-edge-coloring.
A proper $3$-edge-coloring of a cubic multigraph is a Tait coloring; it uses the colors $0$, $1$, and $2$.
For distinct colors $a,b\in\{0,1,2\}$, the third color is denoted by $\overline{ab}$.

A \emph{color pair}~\cite{belcastroHaas2017} is a two-element set of colors. Let $c$ be a proper $k$-coloring of $G$, and let $\{a,b\}$ be a color pair.
A \emph{Kempe chain} of $c$ for the pair $\{a,b\}$ is the vertex set of a component of the subgraph of $G$ induced by the vertices $v$ with $c(v)\in\{a,b\}$.
A Kempe change replaces $c$ by the coloring obtained from $c$ by exchanging $a$ and $b$ on one Kempe chain for $\{a,b\}$, and the result is again a proper $k$-coloring.
We write $c\sim_G c'$ if $c'$ is obtained from $c$ by a finite, possibly empty, sequence of Kempe changes.
For a proper $k$-edge-coloring $\varphi$ of $G$ and distinct colors $a$ and $b$, an edge-Kempe change exchanges $a$ and $b$ on the edge set of one component of the subgraph of $G$ formed by the edges with color $a$ or $b$.
The edge set of such a component is also called a Kempe chain of $\varphi$ for $\{a,b\}$; by \cref{lem:transfer}, it is a Kempe chain of $\varphi$ read as a coloring of $L(G)$.
For proper $k$-edge-colorings $\varphi$ and $\varphi'$ of $G$, we also write $\varphi\sim_G\varphi'$ if $\varphi'$ is obtained from $\varphi$ by a finite, possibly empty, sequence of edge-Kempe changes; the kind of the colorings determines which relation is meant.
Proper $k$-colorings $c$ and $c'$ with $c\sim_G c'$ are \emph{Kempe equivalent}, and proper $k$-edge-colorings $\varphi$ and $\varphi'$ with $\varphi\sim_G\varphi'$ are \emph{edge-Kempe equivalent}.

We now state the problem and the upper bound that we use.

\begin{definition}[\KR{}]\label{def:problem}
For a fixed $k\geq 1$, an instance of \KR{} is a triple $(G,\alpha_G,\beta_G)$ where $G$ is a graph and $\alpha_G,\beta_G$ are proper $k$-colorings of $G$.
It is a \emph{yes-instance} if $\alpha_G\sim_G\beta_G$.
\end{definition}

The following upper bound is standard, and it is the only one we need.

\begin{proposition}[{Bonamy et al.~\cite{Bonamy2020}}]\label{prop:membership} For every fixed $k\geq 1$, \KR{} is in $\PSPACE$.
\end{proposition}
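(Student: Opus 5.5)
The plan is to use the standard argument that reachability in a configuration graph of exponential size, whose configurations have polynomial size, can be decided in polynomial space via Savitch's theorem. Fix $k\geq 1$ and an instance $(G,\alpha_G,\beta_G)$ with $n=|V(G)|$. A proper $k$-coloring is a map $V(G)\to\{0,\dots,k-1\}$, so it can be stored in $O(n\log k)=O(n)$ bits. There are at most $k^n$ such colorings. Consider the directed graph $\mathcal{K}$ whose nodes are the proper $k$-colorings of $G$, with an arc from $c$ to $c'$ whenever $c'$ arises from $c$ by one Kempe change. Then $\alpha_G\sim_G\beta_G$ holds exactly when $\beta_G$ is reachable from $\alpha_G$ in $\mathcal{K}$. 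The graph $\mathcal{K}$ is in fact symmetric, because repeating a Kempe change on the same chain undoes it, but the argument does not need this.

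The key step is to show that the arc relation of $\mathcal{K}$ can be tested in polynomial time, and hence in polynomial space. Given $c$ and $c'$, first check that both are proper. Then compute the set $D=\{v: c(v)\neq c'(v)\}$. If $D$ is empty, the pair is a trivial step; we may allow this, since it does not affect reachability. Otherwise, require that there are colors $a\neq b$ with $\{c(v),c'(v)\}=\{a,b\}$ for every $v\in D$. Finally, verify that $D$ is exactly one component of $G[c^{-1}(\{a,b\})]$, using a breadth-first search from some vertex of $D$. If a single Kempe change yields $c'$, then $D$ is the swapped chain and every check passes. Conversely, if every check passes, then $c'$ is obtained by swapping $a$ and $b$ on that chain. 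All of these checks run in polynomial time.

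With this arc test in hand, apply a nondeterministic procedure. Start from $\alpha_G$, repeatedly guess a next coloring of $O(n)$ bits and check the arc in polynomial space, keeping a step counter up to $k^n$, which needs $O(n\log k)$ bits. Accept if $\beta_G$ is reached. This places the problem in $\mathsf{NPSPACE}$. Savitch's theorem then gives $\mathsf{NPSPACE}=\PSPACE$. Alternatively, one can directly run the deterministic recursive midpoint search on $\mathcal{K}$, which uses $O(n^2\log^2 k)$ space. The only step that needs real care is verifying a single Kempe change, in particular checking that the changed set is a whole component and not a proper subset of one. The breadth-first search handles this, so I expect no genuine obstacle.
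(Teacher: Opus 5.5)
Your proposal is correct and is the standard reachability argument that the paper invokes by citing Bonamy et al.\ without giving a proof: colorings take polynomially many bits, and a single Kempe change can be verified in polynomial time (including your check that the changed set is a whole component of $G[c^{-1}(\{a,b\})]$). Nondeterministic search then puts the problem in nondeterministic polynomial space, which equals $\PSPACE$ by Savitch's theorem.
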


The next lemma records the standard correspondence between edge-Kempe changes and vertex-Kempe changes on the line graph~\cite{McDonald2012}. 

\begin{lemma}\label{lem:transfer}
Let $G$ be a graph, let $\varphi$ be a proper $k$-edge-coloring of $G$, regarded as a proper $k$-coloring of $L(G)$, and let $a$ and $b$ be distinct colors.
Then two edges of $G$ colored $a$ or $b$ lie in the same component of the subgraph of $G$ formed by the edges colored $a$ or $b$ if and only if the corresponding vertices of $L(G)$ lie in the same Kempe chain of $\varphi$ for the pair $\{a,b\}$.
Consequently, one edge-Kempe change on $G$ is one Kempe change on $L(G)$, and conversely, and two proper $k$-edge-colorings of $G$ are connected by edge-Kempe changes on $G$ if and only if, read as proper $k$-colorings of $L(G)$, they are connected by Kempe changes on $L(G)$.
\end{lemma}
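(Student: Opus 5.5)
The first claim is a statement about subgraphs, so I would prove it by comparing the subgraph $G_{ab}$ of $G$ formed by the edges colored $a$ or $b$ with the subgraph $L_{ab}$ of $L(G)$ induced by the vertices colored $a$ or $b$. The vertex set of $L_{ab}$ is exactly $E(G_{ab})$. Two of its vertices are adjacent in $L_{ab}$ exactly when the corresponding edges share an end in $G$, and since both edges lie in $G_{ab}$, this holds exactly when they share an end in $G_{ab}$. Hence $L_{ab}$ is the line graph $L(G_{ab})$, where I read $G_{ab}$ as the graph with edge set $E(G_{ab})$ and only the vertices incident with those edges; isolated vertices play no role.

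So the claim reduces to a standard fact: two edges $e,e'$ of a graph lie in the same component exactly when $e,e'$ lie in the same component of its line graph. For the forward direction, let $e=uv$ and $e'=u'v'$, and take a path from $v$ to $u'$ inside the component. The edges $e,e_1,\dots,e_m,e'$, where $e_1,\dots,e_m$ are the edges of the path, form a sequence in which consecutive edges share an end, possibly with repetitions. This sequence gives a walk in $L(G_{ab})$. For the converse, consecutive vertices on a path in $L(G_{ab})$ are edges sharing an end, so all their ends lie in one component of $G_{ab}$.

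The consequences then follow directly. A Kempe chain of $\varphi$ for $\{a,b\}$ in $L(G)$ is by definition the vertex set of a component of $L_{ab}$. By the equivalence just proved, this is exactly the edge set of a component of $G_{ab}$, that is, an edge-Kempe chain. Exchanging $a$ and $b$ on this set is therefore the same operation whether we read $\varphi$ as an edge-coloring of $G$ or as a vertex-coloring of $L(G)$. We also use that a proper $k$-edge-coloring of $G$ is the same map as a proper $k$-coloring of $L(G)$. Sequences of moves therefore correspond step by step, which gives the final equivalence.

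There is no real obstacle here. The only care needed is the degenerate case where a component of $G_{ab}$ is a single edge, which is handled by the trivial walk, and remembering that $G$ is simple so that $L(G)$ is the graph as defined.
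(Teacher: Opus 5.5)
Your proof is correct. The paper does not prove this lemma at all; it records it as the standard correspondence between edge-Kempe changes and Kempe changes on the line graph, citing McDonald, Mohar and Scheide. Your argument is the standard one the paper leaves implicit: identify the induced subgraph $L_{ab}$ with $L(G_{ab})$, then use that two edges lie in the same component of a graph exactly when they lie in the same component of its line graph.

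One small fix is needed. A walk in $L(G_{ab})$ cannot contain the same vertex twice in a row. So when the path from $v$ to $u'$ starts with the edge $e$ itself, or ends with $e'$, delete the consecutive duplicates from $e,e_1,\dots,e_m,e'$ before reading it as a walk.
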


The graphs of our reduction are line graphs, so we also recall three elementary consequences of the line-graph definition~\cite{Diestel2025}. 

\begin{lemma}\label{lem:line-graphs}
Let $G$ be a graph and let $k\geq 1$ be an integer.
\begin{enumerate}[(a)]
\item The line graph $L(G)$ is claw-free.
\item If $G$ is $k$-regular, then $L(G)$ is $(2k-2)$-regular.
\item If $G$ is connected and has at least one edge, then $L(G)$ is connected.
\end{enumerate}
\end{lemma}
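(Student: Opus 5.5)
Each of the three parts follows directly from the definition of $L(G)$, so I will verify them one at a time. For each edge $e=uv$ of $G$, let $N_u(e)$ be the set of edges other than $e$ that have $u$ as an end, and define $N_v(e)$ in the same way. The neighborhood of $e$ in $L(G)$ is exactly $N_u(e)\cup N_v(e)$.

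For (a), I will show that this neighborhood is covered by two cliques of $L(G)$. Any two edges in $N_u(e)$ share the end $u$, so $N_u(e)$ is a clique, and likewise $N_v(e)$ is a clique. Among any three neighbors of $e$, the pigeonhole principle puts two in the same set, and those two are adjacent. So no three neighbors are pairwise nonadjacent, which means $L(G)$ has no induced claw.

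For (b), the key point is that $N_u(e)$ and $N_v(e)$ are disjoint. An edge in both sets would have ends $u$ and $v$ and differ from $e$, so it would be parallel to $e$, which is impossible since $G$ is simple. The only parallel edges in question are those that would share both ends with $e$, and simplicity excludes them. Each set has $k-1$ elements because $u$ and $v$ have degree $k$. Hence $e$ has exactly $2k-2$ neighbors in $L(G)$.

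For (c), take two edges $e$ and $f$ of $G$. Since $G$ is connected, there is a path in $G$ starting with an end of $e$ and ending with an end of $f$. Prepend $e$ and append $f$ to obtain a walk $e=e_0,e_1,\dots,e_m=f$ in which consecutive edges share an end. In $L(G)$, consecutive edges of this walk are either equal or adjacent, so $e$ and $f$ are joined in $L(G)$. The hypothesis that $G$ has at least one edge ensures that $L(G)$ is nonempty, so $L(G)$ is connected.

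None of these steps is a real obstacle. The only point needing care is the disjointness in (b), which is exactly where simplicity of $G$ is used.
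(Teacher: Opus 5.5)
Your proof is correct and follows the route the paper indicates. The paper gives no formal proof of this lemma; it cites these facts as standard, and its introduction sketches your two arguments: the neighborhood of each vertex of $L(G)$ splits into two cliques, which gives claw-freeness, and there are $k-1$ other edges at each end with no overlap by simplicity, which gives degree $2k-2$. Your walk argument for (c), including the care that consecutive terms may coincide, fills in the part the paper leaves to the reader.
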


\subsection{Multipoles}\label{sec:prelim-multipoles}

The line-graph correspondence lets us work with edge-colorings.
We now describe the small graphs with free ends that will be joined to form the reduction, together with the color restrictions at those ends.

A \emph{multipole of degree $k$} is a finite multigraph $F$ whose vertex set is partitioned into \emph{internal vertices} of degree $k$ and \emph{terminal vertices} of degree one, such that the neighbor of every terminal vertex is internal.
An edge at a terminal vertex is a \emph{dangling edge}, and the other edges of $F$ are \emph{internal edges}.
For $k=3$, these are the cubic multipoles of M{\'a}{\v{c}}ajov{\'a} and {\v{S}}koviera~\cite{MacajovaSkoviera2006} without loops and isolated edges, except that there the free end of a dangling edge is incident with no vertex; the terminal vertices take the place of these free ends. A $k$-regular multigraph is a multipole of degree $k$ without terminal vertices.
As in Leven and Galil~\cite{LevenGalil1983}, a \emph{multiple edge of multiplicity $i$} is a set of $i$ parallel internal edges, or a set of $i$ dangling edges at one internal vertex. A \emph{coloring} of $F$ is a proper $k$-edge-coloring of $F$.
Since a terminal vertex has only one edge, the colorings of $F$ are the maps from $E(F)$ to $\{0,\dots,k-1\}$ that give distinct colors to the $k$ edges at every internal vertex, and every internal vertex is then incident with exactly one edge of each color.
A multiple edge of multiplicity $1$ is a \emph{single edge}.
Let $c$ be a coloring of $F$, and let $P$ be a color pair.
For the multigraph $F$, we use the definitions of Kempe chains and edge-Kempe changes of \cref{sec:prelim-graphs}, without their line-graph interpretation, so a Kempe chain of $c$ for $P$ is the edge set of a component of the subgraph of $F$ formed by the edges colored in $P$.
Every internal vertex is incident with exactly two edges colored in $P$, while a terminal vertex is incident with at most one, so every Kempe chain of $c$ for $P$ is the edge set of a cycle through internal vertices or of a path whose two end edges are dangling edges, and we identify the Kempe chain with this cycle or path.
The result of an edge-Kempe change of $F$ with the color pair $P$ is again a coloring of $F$, because at every internal vertex on the Kempe chain its two edges colored in $P$ exchange their colors; we also call such a change a \emph{move} of $F$.
For colorings $c$ and $c'$ of $F$, we write $c\sim_F c'$ if $c'$ is obtained from $c$ by a finite, possibly empty, sequence of moves of $F$.

The Parity Lemma gives a common parity for the numbers of free edges of each color; Leven and Galil~\cite{LevenGalil1983} state it for graphs whose vertex degrees are $k$ or $1$ and attribute it to Izbicki. Izbicki proved this parity condition and a converse for finite connected loopless multigraphs whose vertex degrees are $k$ or $1$ and whose chromatic index is $k$, for $k=3$ in~\cite{Izbicki1962} and for every $k\geq3$ in~\cite{Izbicki1963}. The converse realizes any prescribed nonnegative counts with a common parity as the numbers of edges of each color incident with degree-$1$ vertices in some such multigraph.
We use the following form, which also identifies that parity with the number of internal vertices; the short counting argument below makes this extra conclusion explicit.

\begin{lemma}[Parity Lemma]\label{lem:parity}
Let $F$ be a multipole of degree $k$ with $n$ internal vertices, and let $c$ be a coloring of $F$.
Then, for every color $h$, the number of dangling edges of color $h$ has the same parity as $n$.
\end{lemma}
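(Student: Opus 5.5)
Fix a color $h$ and count the edges of color $h$ by their incidences with internal vertices. Let $E_h$ be the set of edges colored $h$, split into the internal edges $I_h$ and the dangling edges $D_h$. As recalled just before the lemma, in a coloring every internal vertex is incident with exactly one edge of each color. So the number of incidences between internal vertices and edges of color $h$ is exactly $n$.

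On the other side, count the same incidences edge by edge. An internal edge has both ends internal, since $F$ has no loops and the ends of an internal edge are not terminal vertices. It therefore contributes exactly two incidences. A dangling edge has one terminal end and one internal end, because the neighbor of a terminal vertex is internal, so it contributes exactly one. This gives
\[
n = 2|I_h| + |D_h|,
\]
and hence $|D_h| \equiv n \pmod 2$, which is the claim.

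The only point needing care is that an internal edge really has two distinct internal ends, so that it counts twice and not once. This is where the loopless convention for multigraphs from \cref{sec:prelim-graphs} is used. Parallel internal edges cause no trouble, since each is counted separately at both of its ends. I do not expect any genuine obstacle; the argument is a direct double count and needs no earlier result beyond the definitions of multipoles and colorings.
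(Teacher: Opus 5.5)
Your proposal is correct and is the same argument as the paper's proof: double count the pairs of an internal vertex and an edge of color $h$ at it, obtaining $n = 2m_h + d_h$, where $m_h$ and $d_h$ count the internal and dangling edges of color $h$. Your remarks on looplessness (distinct internal ends) and on terminal vertices having internal neighbors are exactly the facts the paper uses for the edge-side count.
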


\begin{proof}
We count the edges of color $h$ at the internal vertices.
Every internal vertex has $k$ edges of $k$ distinct colors, so exactly one of its edges has the color $h$.
Let $m_h$ and $d_h$ be the numbers of internal edges and of dangling edges of color $h$.
Every internal edge has two distinct internal ends, and every dangling edge has one internal end.
Counting the pairs of an internal vertex and an edge of color $h$ at it therefore gives $n=2m_h+d_h$, so $d_h$ has the same parity as $n$.
\end{proof}

As in~\cite{MacajovaSkoviera2006}, some dangling edges of a multipole are grouped into connectors. A \emph{connector} of a multipole is an ordered pair of two distinct dangling edges, called its \emph{first edge} and its \emph{second edge}, and distinct connectors of one multipole have no common edge.
The \emph{state} of a connector $(f_1,f_2)$ under $c$ is the ordered pair $(c(f_1),c(f_2))$; the connector is \emph{equal} if $c(f_1)=c(f_2)$ and \emph{unequal} otherwise, and this property is its \emph{equality type}.
If a multipole has connectors, then a dangling edge that lies in no connector is a \emph{residual edge}, and if it has exactly two connectors, then we call one of them its \emph{input connector} and the other its \emph{output connector}.

Two dangling edges, at internal vertices $g$ and $g'$, are \emph{identified} by deleting their two terminal vertices and adding the edge $gg'$, as in the junction of~\cite{MacajovaSkoviera2006}, and we regard the new edge as each of them. The two dangling edges may belong to two multipoles, in which case the identification joins the multipoles, or to one multipole, in which case we require $g\neq g'$ so that no loop arises.
All identifications in this paper satisfy this condition.
Two ordered pairs of dangling edges, such as two connectors, are identified coordinate by coordinate, and we regard the resulting ordered pair of edges as each of them.
We also attach dangling edges to new vertices: a dangling edge is \emph{attached} to a new vertex $g$ if its terminal vertex is replaced by $g$.
In our constructions every new vertex ends with degree $k$, so the new vertices, the edges among them and one dangling edge for every edge at a new vertex that is not among them form a multipole of degree $k$, and attaching is the identification of dangling edges with dangling edges of this multipole.
Let a multipole $F$ be obtained from multipoles $F_1,\dots,F_m$ by identifications.
A multipole formed by some of the multipoles $F_i$ with the identifications of $F$ among them is a \emph{part} of $F$, and every edge of a part is an edge of $F$, since an identified dangling edge is regarded as the new edge.
For a coloring $c$ of $F$ and a part $F'$ of $F$, the \emph{restriction} $c|_{F'}$ gives every edge of $F'$ its color under $c$, and it is a coloring of $F'$, because every internal vertex of $F'$ has the same $k$ edges in $F'$ and in $F$.

\subsection{Protected Constraint Logic}\label{sec:ncl}

Multipoles describe the graphs that will carry our colorings; we now prepare the constraint-logic problem that those colorings must represent.
We use the loopless protected replacement from the book of Hearn and Demaine~\cite{HearnDemaine2009}, and then supply a Tait coloring in \cref{sec:tait-source}. Protection will hold for every legal source orientation, including those later recovered from arbitrary target colorings.

An \emph{$\mathrm{AND}/\mathrm{OR}$ constraint graph} has edges of weight $1$ or $2$ and two kinds of vertices~\cite{HearnDemaine2005}. An $\mathrm{AND}$ vertex has one incidence of weight $2$ and two of weight $1$, and an $\mathrm{OR}$ vertex has three incidences of weight $2$.
We call weight-$1$ edges \emph{red} and weight-$2$ edges \emph{blue}, following Hearn and Demaine~\cite{HearnDemaine2005}. All constraint graphs used below are loopless, and parallel edges remain distinct even when their ends agree.
The type of a vertex $z$ is denoted by $\theta(z)$.
A \emph{configuration} is an orientation of the edges, and it is \emph{legal} if the incoming weight at every vertex is at least $2$.
At an $\mathrm{AND}$ vertex, the local rule permits the blue edge to point outward exactly when both red edges point inward; it does not force the blue edge outward.
At an $\mathrm{OR}$ vertex, the local rule permits an edge to point outward exactly when at least one of the other two edges points inward. A \emph{legal reversal} reverses one edge and leaves the configuration legal.
For legal configurations $x,y$ of a constraint graph $X$, we write $x\sim_X y$ if a finite, possibly empty, sequence of legal reversals joins them.
The \emph{configuration-to-configuration problem} asks whether $x_s\sim_X x_t$ for two given legal configurations.
A pair of incidences at an $\mathrm{OR}$ vertex is \emph{protected} if no legal configuration of the whole graph directs both incidences toward that vertex.

We start from the configuration-to-configuration hardness result of Hearn and Demaine~\cite{HearnDemaine2005}. The later source constructions will preserve its endpoint reachability.

\begin{proposition}[{Hearn and Demaine~\cite{HearnDemaine2005}}]\label{prop:source} The configuration-to-configuration problem is $\PSPACE$-complete on connected simple loopless cubic $\mathrm{AND}/\mathrm{OR}$ constraint graphs.
\end{proposition}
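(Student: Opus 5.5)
The plan is to derive this from the configuration-to-configuration hardness theorem of Hearn and Demaine~\cite{HearnDemaine2005} for $\mathrm{AND}/\mathrm{OR}$ constraint graphs. Membership in $\PSPACE$ is the usual one: a configuration has polynomial size and legality of a reversal is checked locally, so Savitch's theorem applies to nondeterministic reachability. For hardness I would not re-prove the reduction from QBF. Instead I would use its output, a cubic $\mathrm{AND}/\mathrm{OR}$ constraint graph $Y$ with two legal configurations $y_s,y_t$, and apply local, reachability-preserving transformations that make it connected, simple and loopless while keeping it cubic.

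First, connectivity. The Hearn--Demaine construction is a single circuit of variable and clause gadgets wired by edges, so I expect it to be connected already. If it is not, components act independently, and $y_s\sim_Y y_t$ holds exactly when it holds on every component. Components on which $y_s$ and $y_t$ agree can be deleted. The remaining components would be joined by a \emph{free} connector: a small cubic subgraph whose internal edges can be reversed freely in every legal configuration and that never constrains its attachment edges. An example is a blue edge whose two end vertices are $\mathrm{OR}$ vertices, each kept satisfied by an auxiliary always-inward structure. I would also check that this preserves reachability, since moves on the original components never depend on the connector. For each component, the connector is inserted by splitting one of its edges.

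Second, loops and parallel edges. Each offending edge $e$ would be replaced by a short wire, a path with intermediate vertices, each of which gets a third edge into a gadget that keeps it permanently satisfied. The wire must transmit direction faithfully. For a blue $e$, the intermediate vertices are $\mathrm{OR}$ vertices. For a red $e$, I would use the standard red--blue conversion gadgets of Hearn and Demaine. The correspondence of configurations is that $e$ is directed from $u$ to $v$ exactly when the first wire edge leaves $u$ and the last enters $v$. Every legal reversal of $e$ lifts to a sequence of wire reversals. Conversely, every legal wire configuration projects to a legal configuration of $Y$, with the projection changing only when the wire is fully flipped, so reachability is preserved in both directions.

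The main obstacle is this second step. The padding gadgets must be cubic, loopless and simple themselves, and in every legal configuration they must impose no constraint, or only a fixed one, on the rest of the graph. I would first try the gadgets from \cite{HearnDemaine2009}, in particular the "free edge" terminators, and verify their two properties by exhaustive case analysis over the local orientations.
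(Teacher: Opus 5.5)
\textbf{Where the paper stands.} The paper does not reprove this statement. It takes from Hearn and Demaine both the hardness and their polynomial-time transformation to an equivalent simple planar $3$-connected constraint graph with corresponding endpoint configurations. Simplicity and $3$-connectivity already give everything claimed. Re-deriving these restrictions with local gadgets is a legitimate alternative route, but the gadget property you choose is the wrong one, and both of your steps fail with it.

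\textbf{The gap.} Suppose, as you write, that an intermediate vertex $w$ of the wire replacing $e=uv$ is kept permanently satisfied by its third edge, which is exactly what a free edge terminator allows. Then $w$ imposes nothing on the two wire edges at it, so the wire edges at $u$ and at $v$ can be directed independently. In particular, both can point toward the old ends, and $u$ and $v$ then receive the weight of $e$ simultaneously. Consequently your projection is undefined on such configurations, and the new graph has legal moves with no counterpart in $Y$. The implication from reachability in the new graph back to $y_s\sim_Y y_t$ is therefore no longer justified. The free connector has the same defect: splitting an edge of a component at a vertex whose attachment edge may point into it decouples the split edge. So ``never constrains its attachment edges'' is precisely what must not hold.

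\textbf{The missing idea.} You need the opposite requirement. The third edge at every splitting vertex must be forced to point \emph{away} from that vertex in every legal configuration, that is, into a constrained edge terminator. Then the splitting vertex, an $\mathrm{OR}$ vertex on a blue edge, must draw its inflow from the wire itself, so at most one old end receives the weight of $e$. This is what the paper does in \cref{prop:tait-source}: mismatched edges are stretched through two new $\mathrm{OR}$ vertices whose ports enter copies of $\mathcal{Q}$, and \cref{lem:q-core} forces these ports inward. A single copy of $\mathcal{Q}$ receiving one port from each of two components would also serve as your connector.

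\textbf{Remaining issue with the projection.} Even with this fix, the wire has a state in which neither end receives the weight. Your projection should therefore read the direction at one designated end, as the map $\mathrm{down}$ does. It should then use the fact that the other end receives no more inflow than the projection assigns it.
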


Hearn and Demaine~\cite{HearnDemaine2005} give a polynomial-time transformation to an equivalent simple planar $3$-connected constraint graph, including corresponding endpoint configurations. We then replace each $\mathrm{OR}$ vertex by the protected-$\mathrm{OR}$ construction from their book~\cite{HearnDemaine2009}, shown in full in \cref{fig:hd-source}. The three exterior blue edges take the place of the old incident edges, and a legal orientation extends through the replacement exactly when at least one of them points inward.
The construction first uses a pair of degree-$2$ red--blue vertices, each requiring incoming weight at least $1$.
We replace this pair by the six-vertex graph in \cref{fig:hd-source}(a), using the conversion of Hearn and Demaine~\cite{HearnDemaine2009}. 

For each old endpoint configuration, we first choose a legal orientation of the intermediate replacement with its two red--blue vertices and the same exterior directions.
We then expand the pair, using the fixed directions shown in \cref{fig:hd-source}(a) for the added edges in both endpoints.
These directions supply incoming weight $1$ at each attachment vertex, so the same fixed pattern works with every legal orientation of the intermediate replacement~\cite{HearnDemaine2009}. The endpoint choices preserve and reflect reachability by \cref{lem:protected-endpoint-completion}.

\begin{figure}[ht]
\centering
\begin{adjustbox}{max width=\textwidth}
\begin{tikzpicture}[
 hdand/.style={rectangle,draw,fill=white,minimum width=7.5mm,minimum height=7.5mm,inner sep=0pt,font=\small},
 hdor/.style={circle,draw,fill=white,minimum size=7.5mm,inner sep=0pt,font=\small},
 hdred/.style={draw=red!70!black,line width=0.65pt,dash pattern=on 3pt off 2pt},
 hdblue/.style={draw=blue!70!black,line width=1.5pt},
 hdarrow/.style={->,>=stealth},
 hdl/.style={font=\small,inner sep=2.5pt,fill=white},
 hdnote/.style={font=\fontsize{9}{10.5}\selectfont,inner sep=1.5pt,align=center},
 hdgroup/.style={draw=black!55,densely dotted,line width=0.55pt,rounded corners=2pt},
 hdtitle/.style={font=\small,inner sep=1pt}]
\path[use as bounding box] (-2.6,-5.3) rectangle (12.4,3.7);
\node[hdand] (detail-p) at (-2,.5) {$p$};
\node[hdand] (detail-q) at (2,.5) {$q$};
\node[hdand] (detail-l) at (-.8,.5) {$\ell$};
\node[hdand] (detail-r) at (.8,.5) {$r$};
\node[hdor] (detail-t) at (0,1.65) {$t$};
\node[hdand] (detail-b) at (0,-.65) {$b$};
\coordinate (pblue) at (-2,2);
\coordinate (qblue) at (2,2);
\coordinate (pred) at (-2,-1);
\coordinate (qred) at (2,-1);
\draw[hdblue] (detail-p) -- node[hdl,left] {$2$} (pblue);
\draw[hdblue] (detail-q) -- node[hdl,right] {$2$} (qblue);
\draw[hdred] (detail-p) -- node[hdl,left] {$1$} (pred);
\draw[hdred] (detail-q) -- node[hdl,right] {$1$} (qred);
\draw[hdblue,hdarrow] (detail-t) -- (detail-l);
\draw[hdblue,hdarrow] (detail-t) -- (detail-r);
\draw[hdblue,hdarrow] (detail-b) -- (detail-t);
\draw[hdred,hdarrow] (detail-l) -- (detail-b);
\draw[hdred,hdarrow] (detail-r) -- (detail-b);
\draw[hdred,hdarrow] (detail-l) -- (detail-p);
\draw[hdred,hdarrow] (detail-r) -- (detail-q);
\node[hdtitle] at (0,3.1) {(a) Paired Red--Blue Converter};
\node[hdnote] at (0,-1.65) {Chosen Internal Orientation};
\node[hdnote] at (0,-2.4) {The same pattern is used\\at both designated endpoints.};
\begin{scope}[xshift=6.1cm]
\node[hdand] (tL) at (0,2.1) {$t_L$};
\node[hdand] (tM) at (1.8,2.1) {$t_M$};
\node[hdand] (tR) at (3.6,2.1) {$t_R$};
\node[hdand] (bL) at (0,0) {$b_L$};
\node[hdand] (bR) at (3.6,0) {$b_R$};
\node[hdor] (uL) at (.9,1.05) {$u_L$};
\node[hdor] (uR) at (2.7,1.05) {$u_R$};
\node[hdand] (aL) at (-1.45,0) {$a_L$};
\node[hdand] (aR) at (5.05,0) {$a_R$};
\node[hdor] (t) at (1.8,-1.6) {$t$};
\node[hdand] (l) at (.3,-2.75) {$\ell$};
\node[hdand] (r) at (3.3,-2.75) {$r$};
\node[hdand] (b) at (1.8,-3.9) {$b$};
\coordinate (extA) at (-2.5,0);
\coordinate (extB) at (6.1,0);
\coordinate (extC) at (1.8,2.9);
\draw[hdred] (tL) -- (tM);
\draw[hdred] (tM) -- node[hdl,above] {$H$} (tR);
\draw[hdred] (tL) -- (bL);
\draw[hdred] (tR) -- (bR);
\draw[hdblue] (tL) -- node[hdl,above right] {$I$} (uL);
\draw[hdblue] (bL) -- (uL);
\node[hdl] at (.83,.3) {$D$};
\draw[hdblue] (uL) -- node[hdl,above] {$E$} (uR);
\draw[hdblue] (tR) -- node[hdl,above left] {$G$} (uR);
\draw[hdblue] (bR) -- (uR);
\node[hdl] at (2.77,.3) {$F$};
\draw[hdred] (aL) -- (bL);
\draw[hdred] (aR) -- (bR);
\draw[hdblue] (t) -- (l);
\draw[hdblue] (t) -- (r);
\draw[hdblue] (b) -- (t);
\draw[hdred] (l) -- (b);
\draw[hdred] (r) -- (b);
\draw[hdred] (l) -- (aL);
\draw[hdred] (r) -- (aR);
\draw[hdblue] (aL) -- (extA);
\node[hdl] at (-2.32,.35) {$A$};
\draw[hdblue] (aR) -- (extB);
\node[hdl] at (5.92,.35) {$B$};
\draw[hdblue] (tM) -- node[hdl,right] {$C$} (extC);
\draw[hdgroup] (-2,.75) -- (-.95,.75) -- (-.95,-.75)
 -- (4.55,-.75) -- (4.55,.75) -- (5.6,.75)
 -- (5.6,-4.4) -- (-2,-4.4) -- (-2,.75);
\node[hdnote] at (-1.15,1.2) {$u_L:\{I,D\}$};
\node[hdnote] at (4.75,1.2) {$u_R:\{G,F\}$};
\node[hdnote] at (4.35,-3.85) {Protected at $t$:\\$\{tb,t\ell\}$};
\node[hdnote] at (-.75,-3.85) {Paired\\Converter};
\node[hdtitle] at (1.8,3.3) {(b) Expanded Protected-$\mathrm{OR}$ Replacement};
\end{scope}
\draw[hdred] (-1.95,-4.95) -- (-1.25,-4.95);
\node[hdnote,anchor=west] at (-1.15,-4.95) {Weight $1$};
\draw[hdblue] (1.35,-4.95) -- (2.05,-4.95);
\node[hdnote,anchor=west] at (2.15,-4.95) {Weight $2$};
\node[hdand] at (5.15,-4.95) {};
\node[hdnote,anchor=west] at (5.6,-4.95) {$\mathrm{AND}$};
\node[hdor] at (8.75,-4.95) {};
\node[hdnote,anchor=west] at (9.2,-4.95) {$\mathrm{OR}$};
\end{tikzpicture}
\end{adjustbox}
\caption{Source gadgets redrawn from Hearn and Demaine~\cite{HearnDemaine2009}. Squares are $\mathrm{AND}$ vertices and circles are $\mathrm{OR}$ vertices.
Thin dashed red edges have weight $1$, and thick solid blue edges have weight $2$.
Part~(a) shows the six-vertex paired converter and the internal orientation chosen at both designated endpoints; the four exterior ports are left unoriented.
These arrows describe the chosen endpoint pattern, not a restriction on every legal configuration.
Part~(b) has $13=7+6$ vertices, with the paired converter expanded inside the gray dotted outline, which is only an annotation.
The pairs $\{I,D\}$ at $u_L$, $\{G,F\}$ at $u_R$, and $\{tb,t\ell\}$ at $t$ are protected.
The three blue ports $A,B,C$ continue to the rest of the source graph; their unmarked ends are not vertices.}
\label{fig:hd-source}
\end{figure}
 
The fixed converter orientation lets us prepare the interior of each replacement before reversing an old edge, without imposing a restriction on intermediate configurations in the reverse implication.
The following argument makes the endpoint choices explicit for the construction of Hearn and Demaine~\cite{HearnDemaine2009}. In the intermediate replacement, the seven core vertices of \cref{fig:hd-source}(b) retain threshold $2$, while $a_L,a_R$ have only their exterior blue edge and their red edge to $b_L,b_R$, respectively, and have threshold $1$.
The \emph{boundary} records the directions of $A,B,C$, with inward meaning toward the replacement.
Let $X$ be a loopless cubic constraint graph, and let $X^+$ be obtained by expanding every $\mathrm{OR}$ vertex as described above.
For a legal configuration $x$ of $X$, let $x^+$ denote any completion obtained by choosing legal intermediate orientations with the old edge directions of $x$ and adding the fixed converter arrows from \cref{fig:hd-source}(a).

\begin{lemma}\label{lem:protected-endpoint-completion}
For every pair of legal configurations $x,y$ of $X$ and every independent choice of their completions $x^+,y^+$ as above, both completions are legal, and $x\sim_X y$ if and only if $x^+\sim_{X^+}y^+$.
\end{lemma}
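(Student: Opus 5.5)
The plan has three parts: show that the completions are legal, prove the forward implication by simulating each old reversal, and prove the reverse implication by projecting arbitrary sequences in $X^+$ back to $X$. Everything rests on one \emph{boundary lemma} for a single replacement $R$, the graph of \cref{fig:hd-source}(b). Fix the directions of the exterior edges $A,B,C$.
\begin{enumerate}[(i)]
\item A legal orientation of $R$'s interior exists exactly when at least one of $A,B,C$ points inward.
\item For a fixed boundary, all legal interior orientations of $R$ are connected by legal reversals of interior edges only.
\item Let a boundary with at least two inward edges be given. Then there is a legal interior orientation in which a chosen inward boundary edge, say $A$, can be reversed outward; that is, its attachment vertex $a_L$ still gets incoming weight at least $1$ from its red edge.
\end{enumerate}
Part~(i) is the property of Hearn and Demaine~\cite{HearnDemaine2009} already quoted in the text. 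I would verify (ii) and (iii) by a finite case check on the $13$ vertices, using the converter's fixed pattern as a convenient hub: every legal interior orientation can be moved to one agreeing with the fixed converter arrows.

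\textbf{Legality of completions.} Take a legal configuration $x$ of $X$. At each $\mathrm{OR}$ vertex of $X$, the old vertex has an inward edge, so by (i) there is a legal orientation of the intermediate replacement with those exterior directions. The text states that adding the fixed converter arrows of \cref{fig:hd-source}(a) supplies incoming weight $1$ at each attachment vertex. I would check that the six converter vertices are then legal independently of the rest. $\mathrm{AND}$ vertices of $X$ are untouched. Hence $x^+$ is legal.

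\textbf{Forward direction.} It suffices to treat a single legal reversal $x\to x'$ of an edge $e=uv$. Independence of the choice of completion then follows from (ii): any two completions of the same $x$ differ only in interiors with a common boundary, so they are connected. Before reversing $e$, I would use (ii) and (iii) in the replacements at $u$ and $v$ to move to interior orientations in which $e$'s attachment vertex does not depend on $e$ pointing inward. This uses that $u$ keeps another inward edge after the move, since $x'$ is legal. Then I reverse $e$ and rebalance the interiors if needed. Non-$\mathrm{OR}$ ends need nothing.

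\textbf{Reverse direction.} Project any legal configuration $z$ of $X^+$ to $X$ by reading the old edges, which are the exterior edges of replacements. By (i), with its easy direction (a legal interior forces an inward boundary edge), the projection is a legal configuration of $X$. Each legal reversal in $X^+$ either fixes the projection or reverses one old edge between two legal projections, which is a legal reversal in $X$. Projecting a sequence from $x^+$ to $y^+$ therefore gives $x\sim_X y$.

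\textbf{Main obstacle.} I expect the case analysis behind (ii) and (iii) to be the main obstacle. This means proving interior connectivity of the 13-vertex gadget for each of the seven nonzero boundaries, and handling the converter, where the $\ell$--$b$--$t$ core must be reoriented without stranding $p$ or $q$. I would first organise the check around the fixed converter pattern and the protected pairs, and fall back on an explicit enumeration if needed.
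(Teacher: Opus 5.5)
Your architecture matches the paper's: fixed-boundary connectivity, then preparing at each end an interior that is legal for the boundary with the changing port turned outward (hence also with it inward), then projection for the converse. Your reverse direction is exactly the paper's. The gap is claim (ii). You state it for the full $13$-vertex replacement, and you propose to prove it by moving every legal interior to the fixed converter pattern. Both claims are false.

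Fix a boundary with $A$ inward and $B$ outward.
\begin{enumerate}[(1)]
\item Since $B$ is outward, $a_R$ is legal only with $r\to a_R$, and then $r$ is legal only with $t\to r$.
\item If moreover $\ell\to t$, legality at $\ell$ forces $b\to\ell$ and $a_L\to\ell$. Legality at $b$ then forces $t\to b$.
\item So $\ell t$ is the only edge entering the $\mathrm{OR}$ vertex $t$, and it can never be reversed. The set of legal interiors with $\ell\to t$ is therefore closed under interior reversals.
\end{enumerate}
This set is nonempty. The orientation with $\ell\to t$, $t\to b$, $b\to\ell$, $a_L\to\ell$, $t\to r$, $r\to a_R$, combined with any legal intermediate orientation for this boundary, is legal. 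Indeed, $A$ supplies $a_L$, and $b_R\to a_R$ together with $r\to a_R$ supplies $a_R$. Every completion has $t\to\ell$, so these interiors are connected to no completion. A case check of (ii) on the $13$ vertices would fail for every boundary in which exactly one of $A,B$ is inward.

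The repair is what the paper does: never reverse a converter edge. Every completion carries the fixed converter arrows. These arrows satisfy $t,\ell,r,b$ and supply weight $1$ to $a_L$ and $a_R$. So it suffices to prove fixed-boundary connectivity for the intermediate nine-vertex replacement, in which $a_L,a_R$ have threshold $1$, and to lift each step with the converter frozen. The paper does this by driving every legal intermediate interior to one canonical orientation, in three cases: $A$ inward; $A$ outward and $B$ inward; only $C$ inward.

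Your (iii) should likewise output an interior carrying the converter arrows. For instance, take the completion of any legal intermediate orientation for the boundary with the changing port outward. It exists because another port is inward, and it remains legal when that port turns inward. With (ii) restricted to such interiors, and with the case analysis actually carried out rather than deferred, your argument goes through.
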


\begin{proof}
We first join all legal intermediate orientations with the same boundary, then prepare each old reversal using an interior legal on both sides of that reversal.
An instruction to orient an edge leaves it unchanged when it already has the indicated direction.
The blue edges are $I=t_Lu_L$, $D=b_Lu_L$, $E=u_Lu_R$, $G=t_Ru_R$, and $F=b_Ru_R$.

\begin{itemize}
\item \textbf{Case 1: $A$ is inward.}
First, we orient $a_Lb_L$ toward $b_L$; the blue input supplies $a_L$.
We next arrange that $I$ enters $t_L$.
If $I$ instead enters $u_L$, both red edges enter $t_L$, so $C$ must be inward.
In this situation, if neither $G$ nor $F$ enters $u_R$, both enter their adjacent $\mathrm{AND}$ vertices; we may then orient $t_Mt_R$ and $b_Rt_R$ toward $t_R$, and reverse $G$ toward $u_R$.
These reversals are legal because $C$ supplies $t_M$, $F$ supplies $b_R$, and both prepared red edges supply $t_R$.
Still assuming that $I$ initially enters $u_L$, at least one of $G$ and $F$ now enters $u_R$, either from the start or after this preparation.
We therefore orient $E$ toward $u_L$ and then $I$ toward $t_L$.
Now we orient, in order, $t_Lb_L$ toward $b_L$, $D$ toward $u_L$, $E$ toward $u_R$, $G$ toward $t_R$, and $F$ toward $b_R$.
The reversal of $D$ is legal because both red edges now enter $b_L$, and each subsequent reversal leaves the relevant $\mathrm{OR}$ vertex supplied.
Finally, we orient the remaining red edges as $t_L\to t_M$, $t_R\to t_M$, $b_R\to t_R$, and $b_R\to a_R$.
Their tails have incoming blue edges, and their heads gain incoming weight.
All eleven interior directions are now fixed.

\item \textbf{Case 2: $A$ is outward and $B$ is inward.}
The reflection exchanging left and right gives the same argument, with $I,D$ exchanged with $G,F$ and the direction of $E$ reversed.

\item \textbf{Case 3: only $C$ is inward.}
The red attachment edges must point toward $a_L,a_R$, forcing $D,F$ toward $b_L,b_R$.
Since $C$ supplies $t_M$ and $D$ supplies $b_L$, we can orient both red edges at $t_L$ toward $t_L$, then orient $I$ toward $u_L$.
We next orient $E$ toward $u_R$ and $G$ toward $t_R$, and finish with $t_R\to t_M$ and $b_R\to t_R$.
Each reversal is legal by the indicated incoming blue edge or the two prepared red edges.
Again, every interior direction is fixed.
\end{itemize}

Each case reaches one fixed orientation, which is legal for every boundary in that case.
Reversing these sequences proves connectivity for each fixed legal boundary.
The converter arrows supply weight $1$ at both attachments and satisfy all four added vertices, so completion preserves legality and every intermediate reversal while keeping those arrows fixed.

\begin{itemize}
\item[($\Rightarrow$)]
For one old reversal, we regard its changing port as outward at each incident replacement and keep the other ports fixed.
This boundary still has an inward port, since the old configurations on both sides are legal and looplessness allows only one changing port in a replacement.
An interior legal for this boundary is also legal when that port is inward.
Fixed-boundary connectivity prepares this common interior independently at both ends, the old edge reverses once, and connectivity then reaches any chosen interior for the new boundary.
The fixed completion lifts every step.
Fixed-boundary connectivity also adjusts untouched replacements, including when no old edge reverses.
Concatenation therefore joins arbitrary independent choices at the two endpoints.

\item[($\Leftarrow$)]
Deleting the four added converter vertices from any legal expanded orientation leaves incoming weight at least $1$ at each attachment.
If all three ports were outward, the attachment edges would force $D,F$ away from $u_L,u_R$, and the two red inputs at $t_M$ would force $I,G$ away from them as well.
Then $E$ would have to enter both $u_L$ and $u_R$, which is impossible.
Thus, retaining the old edge directions gives a legal configuration of $X$.
An internal reversal projects to no change, and a reversal of an old edge projects to one legal reversal.
The projected sequence joins $x$ to $y$.
\end{itemize}
\end{proof}

The expanded replacement has seven core vertices and six vertices from the paired converter, for a total of thirteen.
Every edge joins distinct vertices, including those inside the converter, so the replacement is loopless.
The old graph is loopless and different old vertices receive disjoint replacements, so joining their exterior edges creates no loop either.
The connected replacement preserves connectedness.
All three $\mathrm{OR}$ vertices in the replacement have protected pairs, as indicated in \cref{fig:hd-source} and established by Hearn and Demaine~\cite{HearnDemaine2009}. Protection holds in every legal configuration of the whole graph, not only in the chosen endpoint configurations.
Thus, the source is already loopless and protected after this replacement.

If $n_0$ is the order of the simple source before protection, the replacement gives at most $13n_0$ vertices.
Indeed, each replaced $\mathrm{OR}$ vertex contributes $7+6=13$ vertices, while an old $\mathrm{AND}$ vertex remains one vertex.
This count does not include the earlier polynomial transformation to a simple $3$-connected source~\cite{HearnDemaine2005}.

\paragraph{Slots and Local States.}
The source is now loopless and protected, so we can give its three incidences fixed local names.
At an $\mathrm{AND}$ vertex, the incident edges are numbered by \emph{slots} $0,1,2$, with its blue edge in slot $0$.
At an $\mathrm{OR}$ vertex, a designated protected pair occupies slots $0$ and $2$.
The slot of an edge $f$ at its end $z$ is denoted by $s_z(f)$, and the \emph{local state} $x|_z$ is the word $\beta_0\beta_1\beta_2$ whose bit $\beta_j$ is $1$ exactly when the edge at slot $j$ points toward $z$.
The legal $\mathrm{AND}$ states are $011,100,101,110,111$.
Ordinary $\mathrm{OR}$ legality excludes only $000$, but protection further confines states occurring in the whole graph to
\[
001,\quad011,\quad010,\quad110,\quad100.
\]
The pairs of these five states that differ in one bit form the path in the displayed order.
Protection is used only after an orientation has been shown ordinarily legal.

\subsection{The Source Used by the Reduction}\label{sec:source-problem}

We have obtained a protected loopless source, but the two ends of each future wire still need a common color label.
The next proposition supplies a Tait coloring without changing reachability or losing protection.
We state the resulting source problem here and give the construction and proof in \cref{sec:tait-source}.

\begin{proposition}\label{prop:tait-source}
For every nonempty connected loopless cubic constraint multigraph $X_0$ with a protected pair at every $\mathrm{OR}$ vertex, there are such a multigraph $X$ with at most $10|V(X_0)|$ vertices, a Tait coloring $\tau$ of $X$, and a map $x\mapsto x'$ on legal configurations such that $x\sim_{X_0}y$ if and only if $x'\sim_X y'$.
The graph, the Tait coloring, the protected pairs and the endpoint map are computable in linear time.
\end{proposition}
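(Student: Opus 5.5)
The plan is to make $X_0$ $3$-edge-colorable by local surgery that does not change reachability. A cubic multigraph need not have a Tait coloring, but every cubic multigraph has a proper edge-coloring once each edge is subdivided by suitable gadgets, or, more simply, once we replace each vertex by a structure whose boundary edges can take prescribed colors.

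\textbf{Step 1: reduce to a free colouring of the edges.} Replace every edge $f=uv$ of $X_0$ by a constant-size \emph{edge gadget}: a path of red/blue vertices whose two end edges carry the original weight of $f$. The gadget must transmit the direction of $f$ exactly. A natural choice is a chain of $\mathrm{AND}$ vertices used as ``wires'' in the sense of Hearn and Demaine. Each such vertex has its third (red) incidence tied off into a small legal pendant cubic gadget that is always satisfiable. The aim is that the two ends of $f$ can be assigned arbitrary, independent colors from $\{0,1,2\}$ by a suitable Tait coloring of the gadget. I would build one gadget per ordered pair of end colors $(a,b)$ and check a Tait coloring of each by hand. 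By the Parity Lemma (\cref{lem:parity}) each gadget needs the right number of internal vertices, so I would include two variants, one for $a=b$ and one for $a\neq b$.

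\textbf{Step 2: colour the old vertices.} Each old vertex $z$ gives its three incidences the three distinct colors $0,1,2$ in an arbitrary order. The gadget for $f$ is then chosen according to the two end colors of $f$. At an $\mathrm{OR}$ vertex the protected pair keeps its slots $0,2$, because slots are assigned independently of colors. This yields $\tau$. Each gadget has at most $8$ vertices per edge, and $|E(X_0)|=\tfrac32|V(X_0)|$, which I expect to give the bound of $10|V(X_0)|$ after choosing the gadgets economically. Everything is local, so it runs in linear time.

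\textbf{Step 3: reachability and protection.} The map $x\mapsto x'$ keeps the old directions, orients each wire gadget consistently with its edge, and puts every pendant in a fixed legal orientation. For ($\Rightarrow$), one old reversal becomes the sequential reversal of the wire edges from the head end, as in the standard Hearn--Demaine wire argument. For ($\Leftarrow$), I project, as in the proof of \cref{lem:protected-endpoint-completion}. First I show that in every legal configuration each wire is ``consistent up to a monotone prefix''. I then read the old direction from, say, the end edge at $u$, and show that its changes yield legal reversals of $X_0$.

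New $\mathrm{OR}$ vertices inside gadgets would also need protected pairs, so I would either use only $\mathrm{AND}$ vertices in gadgets or prove protection directly. Existing protected pairs stay protected, because their incoming directions are preserved by the projection.

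\textbf{Main obstacle.} I expect the hard part to be designing pendant and wire gadgets that are simultaneously cubic, loopless, Tait-colorable with prescribed boundary colors, and inert for reachability; that is, they must never force or permit extra flow. If pure-$\mathrm{AND}$ pendants fail, I would first try a $K_4$-minus-edge-based pendant whose red output is permanently satisfiable.
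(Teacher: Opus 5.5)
\textbf{There is a genuine gap in Step~1, and the obstruction is the Parity Lemma you cite.}

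Each pendant hangs off its wire vertex by a single edge, so it is a cubic multipole with exactly one dangling edge. If that edge has color $h$, then \cref{lem:parity} makes the number of internal vertices odd when you count color $h$ and even when you count either other color. So no pendant has a Tait coloring, whatever its interior (a cubic graph with a bridge is never $3$-edge-colorable). This also rules out your $K_4$-minus-an-edge fallback.

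Pairing the pendants inside the gadget of $f$ does not help either. The gadget is then a multipole with exactly two dangling edges, and the same lemma forces these two edges to have equal colors in every coloring, since end colors $a\neq b$ would require the parities $1,1,0$ at once. So the lemma does not merely prescribe a number of internal vertices: it makes your $a\neq b$ variant impossible.

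More generally, for colorability a gadget with two terminals behaves like an edge, and one with three terminals behaves like a vertex. Hence no self-contained local substitution of this kind can produce a Tait coloring when $X_0$ has none, and in general $X_0$ need not have one.

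\textbf{The missing idea} is that a color mismatch can only be repaired by a gadget whose extra terminals leave the edge and are closed off globally, together with terminals of the same color elsewhere in the graph. The paper does this as follows.
\begin{enumerate}[(1)]
\item It first colors the red edges greedily, so every mismatched edge is blue.
\item It replaces each mismatched $e=uv$ by a blue path $u,g^e_u,g^e_v,v$ through two new $\mathrm{OR}$ vertices, whose ports get the colors $c_v(e)$ and $c_u(e)$.
\item Every vertex of $X_0$ has exactly one incidence of each color and $|V(X_0)|$ is even, so each color occurs on an even number of ports (\cref{lem:free-parity}).
\item Each same-colored pair of ports is closed by a copy of the four-vertex $\mathrm{AND}$ core $\mathcal{Q}$, which by \cref{lem:q-core} always directs both ports into itself.
\end{enumerate}
Each new $\mathrm{OR}$ vertex must then draw its inflow from the path. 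This makes the path a wire, and it makes the port and a path edge a protected pair. The vertex count is $|V(X_0)|+6r\leq 10|V(X_0)|$, where $r\leq\tfrac32|V(X_0)|$ is the number of mismatched edges.

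Your wire and projection reasoning in Step~3 is close in spirit to the paper's maps $\mathrm{up}$ and $\mathrm{down}$. Note, however, that eight new vertices per edge would give up to $13|V(X_0)|$ vertices, not the required $10|V(X_0)|$.
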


Combining this proposition with the protected replacement of Hearn and Demaine~\cite{HearnDemaine2009} gives the source used throughout the remainder of the paper. 

\begin{corollary}\label{cor:tait-source}
The configuration-to-configuration problem is $\PSPACE$-complete on nonempty connected loopless cubic $\mathrm{AND}/\mathrm{OR}$ multigraphs supplied with a Tait coloring and a protected pair at every $\mathrm{OR}$ vertex.
\end{corollary}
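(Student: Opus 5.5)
The corollary follows by a chain of polynomial-time reductions from \cref{prop:source}. The target problem must also be shown to lie in $\PSPACE$.

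For membership, I would use the standard argument. A configuration of a multigraph with $m$ edges is an orientation, which can be stored in $m$ bits. Legal reversals can be checked locally, so reachability is decidable in nondeterministic polynomial space, and hence in $\PSPACE$ by Savitch's theorem. The Tait coloring and the protected pairs are auxiliary input data that the algorithm ignores.

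For hardness, I would start from an instance $(X_{00},x_s^{00},x_t^{00})$ given by \cref{prop:source}. Here $X_{00}$ is a connected simple loopless cubic $\mathrm{AND}/\mathrm{OR}$ constraint graph. I may assume it is nonempty, since the empty instance is trivial and can be mapped to a fixed yes-instance.

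\emph{Step one} is the protected-$\mathrm{OR}$ replacement of \cref{sec:ncl}. Every $\mathrm{OR}$ vertex is replaced by the thirteen-vertex gadget of \cref{fig:hd-source}(b). The two endpoints are mapped to completions $x_s^{00,+}$ and $x_t^{00,+}$, chosen as described before \cref{lem:protected-endpoint-completion}. Computing these completions requires choosing a legal intermediate orientation for each boundary pattern. This is a constant-size local computation per gadget, which can be done by table lookup, or directly from the fixed orientations built in the three cases of that proof.

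By \cref{lem:protected-endpoint-completion}, reachability is preserved in both directions. The discussion after that lemma shows that the resulting graph $X_0$ has the following properties:
\begin{itemize}
\item it is loopless, connected and cubic;
\item it has at most $13|V(X_{00})|$ vertices;
\item every $\mathrm{OR}$ vertex carries a protected pair, valid in every legal configuration of the whole graph.
\end{itemize}
Every vertex of the replacement is $\mathrm{AND}$ or $\mathrm{OR}$ with the standard weights, so $X_0$ is an $\mathrm{AND}/\mathrm{OR}$ multigraph.

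\emph{Step two} applies \cref{prop:tait-source} to $X_0$. This yields $X$, a Tait coloring $\tau$, protected pairs at every $\mathrm{OR}$ vertex, and the endpoint map $x\mapsto x'$, all in linear time. The output instance is $(X,\tau,x_s',x_t')$ with $x_s=(x_s^{00,+})'$ and $x_t=(x_t^{00,+})'$. Chaining the two equivalences gives $x_s^{00}\sim_{X_{00}}x_t^{00}$ if and only if $x_s\sim_X x_t$. The total size is linear in $|V(X_{00})|$, so this is a polynomial-time many-one reduction.

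The only point needing care is a typing issue. \cref{prop:tait-source} says only ``such a multigraph''. I must confirm that $X$ is again a nonempty connected loopless cubic $\mathrm{AND}/\mathrm{OR}$ multigraph, and that the endpoint map sends legal configurations to legal ones. I would read both facts off the statement of \cref{prop:tait-source}, whose hypothesis class is exactly this class. I expect this bookkeeping to be the main, though mild, obstacle. Everything else is composing the two earlier reachability-preserving transformations.
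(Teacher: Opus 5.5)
Your proposal is correct and matches the paper's proof. Hardness comes from \cref{prop:source} by composing a completion map of \cref{lem:protected-endpoint-completion} with the map of \cref{prop:tait-source}, and membership comes from polynomial-space search over orientations that ignores the supplied coloring and protected pairs; your remarks on the empty instance and on the output class of \cref{prop:tait-source} are bookkeeping the paper leaves implicit.
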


\begin{proof}
Hardness follows from \cref{prop:source} by composing a completion map of \cref{lem:protected-endpoint-completion} with the map of \cref{prop:tait-source}.
There are at most $130n_0$ vertices after protection and Tait preprocessing, where $n_0$ is the order of the simple source before protection.
Membership follows by polynomial-space search through orientations; the supplied coloring and protected-pair designations do not alter legality or the move rule.
\end{proof}

From now on, a \emph{source graph} means a multigraph satisfying \cref{cor:tait-source}, with its slots fixed as above.
Parallel source edges remain distinct throughout the construction.
Each edge has an arbitrarily fixed first end and second end, which are distinct by looplessness.
For a vertex $z$, the permutation $\tau_z$ of $\{0,1,2\}$ sends slot $j$ to the Tait color of its edge.
 \subsection{Constructing a Tait-Colored Source}\label{sec:tait-source}

\Cref{sec:source-problem} stated the source we need; we now construct its Tait coloring while preserving legal reversals and protection.
The idea is to color the three incidences at each old vertex separately, then replace edges whose ends received different colors by short paths.
Small attached graphs force these paths to carry enough incoming weight to recover the old orientation.

We first give the three edges at each vertex different colors, allowing an edge to receive different colors at its ends.
Each edge with different end colors is stretched into a three-edge path through two new $\mathrm{OR}$ vertices.
Its middle edge takes the third color, and each new vertex gets one further edge to make its degree three.
Pairs of these extra edges attach to the core of Hearn and Demaine's constrained edge terminator~\cite{HearnDemaine2009}. This small graph forces both attaching edges inward, so each new vertex must receive inflow from the path.
We can therefore recover a legal configuration of the old graph from every legal configuration of the new graph.
A legal reversal of the old graph becomes one or three legal reversals of the new graph.
Conversely, a legal reversal of the new graph changes at most one edge of the recovered legal configuration.

\paragraph{The Core $\mathcal{Q}$.}
The extra path vertices need attachments that force their added edges outward.
The following four-vertex graph provides two such attachments at once.
Let $\mathcal{Q}$ be the constraint graph with four $\mathrm{AND}$ vertices $v_{\mathrm{L}}$, $v_{\mathrm{T}}$, $v_{\mathrm{R}}$ and $v_{\mathrm{B}}$, the edges $v_{\mathrm{L}}v_{\mathrm{T}}$, $v_{\mathrm{T}}v_{\mathrm{R}}$, $v_{\mathrm{R}}v_{\mathrm{B}}$ and $v_{\mathrm{B}}v_{\mathrm{L}}$ of weight $1$, the edge $v_{\mathrm{T}}v_{\mathrm{B}}$ of weight $2$, and two further edges of weight $2$, the \emph{ports}, one at $v_{\mathrm{L}}$ and one at $v_{\mathrm{R}}$, whose other ends lie outside $\mathcal{Q}$.
The indices stand for left, top, right and bottom, as in \cref{fig:q}.
Thus, the edges of $\mathcal{Q}$ other than the ports form the complete graph on its four vertices without the edge $v_{\mathrm{L}}v_{\mathrm{R}}$.
In a legal configuration of the constrained edge terminator~\cite{HearnDemaine2009}, the two edges that join its $\mathrm{OR}$ vertex to $\mathcal{Q}$ point into $\mathcal{Q}$, and the next lemma proves this fact for every orientation that satisfies the constraints of the four vertices of $\mathcal{Q}$. \usetikzlibrary{arrows.meta}
\begin{figure}[ht]
\centering
\begin{adjustbox}{max width=\textwidth}
\begin{tikzpicture}[
  vx/.style={circle, draw, fill=white, inner sep=0pt, minimum size=6.2mm, font=\footnotesize},
  lb/.style={font=\footnotesize, inner sep=2pt},
  hv/.style={line width=1.6pt},
  pt/.style={line width=1.6pt, -{Stealth[length=2.2mm]}, shorten >=1pt}
]
\usetikzlibrary{arrows.meta}
\node[vx] (vL) at (-1.6,0) {$v_{\mathrm{L}}$};
\node[vx] (vT) at (0,1.3) {$v_{\mathrm{T}}$};
\node[vx] (vR) at (1.6,0) {$v_{\mathrm{R}}$};
\node[vx] (vB) at (0,-1.3) {$v_{\mathrm{B}}$};
\node[vx] (g) at (-3.6,0) {$g$};
\node[vx] (gp) at (3.6,0) {$g'$};
\draw (vL) -- node[lb, above left] {$\zeta'$} (vT);
\draw (vT) -- node[lb, above right] {$\overline{\zeta\zeta'}$} (vR);
\draw (vR) -- node[lb, below right] {$\zeta'$} (vB);
\draw (vB) -- node[lb, below left] {$\overline{\zeta\zeta'}$} (vL);
\draw[hv] (vT) -- node[lb, right] {$\zeta$} (vB);
\draw[pt] (g) -- node[lb, above] {$\zeta$} (vL);
\draw[pt] (gp) -- node[lb, above] {$\zeta$} (vR);
\end{tikzpicture}
\end{adjustbox}
\caption{The drawing shows a copy of the constraint graph $\mathcal{Q}$ attached to two new $\mathrm{OR}$ vertices $g$ and $g'$ whose ports have the color $\zeta$.
The four $\mathrm{AND}$ vertices $v_{\mathrm{L}}$, $v_{\mathrm{T}}$, $v_{\mathrm{R}}$ and $v_{\mathrm{B}}$ are drawn as a diamond without the edge $v_{\mathrm{L}}v_{\mathrm{R}}$; thin lines are the edges of weight $1$ and thick lines the edges of weight $2$.
The labels give the colors of the Tait coloring $\tau$, where $\zeta'\ne\zeta$.
The arrows show the direction of the two ports in every legal configuration, which \cref{lem:q-core} proves.}
\label{fig:q}
\end{figure}
 
\begin{lemma}\label{lem:q-core}
In every orientation of the edges of $\mathcal{Q}$ and of its two ports in which the inflow at each of $v_{\mathrm{L}}$, $v_{\mathrm{T}}$, $v_{\mathrm{R}}$ and $v_{\mathrm{B}}$ is at least $2$, both ports are directed toward $\mathcal{Q}$.
\end{lemma}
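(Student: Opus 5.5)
The plan is a direct local argument that uses only the four inflow constraints; no earlier result is needed. First I record the incidences. At $v_{\mathrm{L}}$ there are its port, of weight $2$, and the red edges $v_{\mathrm{L}}v_{\mathrm{T}}$ and $v_{\mathrm{L}}v_{\mathrm{B}}$. At $v_{\mathrm{R}}$ there are its port and the red edges $v_{\mathrm{R}}v_{\mathrm{T}}$ and $v_{\mathrm{R}}v_{\mathrm{B}}$. At $v_{\mathrm{T}}$ there are the red edges to $v_{\mathrm{L}}$ and $v_{\mathrm{R}}$ and the blue edge $v_{\mathrm{T}}v_{\mathrm{B}}$. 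At $v_{\mathrm{B}}$ there are the red edges to $v_{\mathrm{L}}$ and $v_{\mathrm{R}}$ and the same blue edge. By the left--right symmetry of $\mathcal{Q}$, which exchanges $v_{\mathrm{L}}$ with $v_{\mathrm{R}}$ and fixes $v_{\mathrm{T}}$ and $v_{\mathrm{B}}$, it suffices to show that the port at $v_{\mathrm{L}}$ points toward $v_{\mathrm{L}}$.

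I argue by contradiction and suppose that the port at $v_{\mathrm{L}}$ points away from $\mathcal{Q}$. Then $v_{\mathrm{L}}$ has only its two red edges as possible sources of inflow, so both must enter $v_{\mathrm{L}}$: $v_{\mathrm{T}}v_{\mathrm{L}}$ is directed toward $v_{\mathrm{L}}$, and so is $v_{\mathrm{B}}v_{\mathrm{L}}$. Next I look at $v_{\mathrm{T}}$. Its red edge to $v_{\mathrm{L}}$ leaves it, so its inflow comes at most from the red edge $v_{\mathrm{T}}v_{\mathrm{R}}$, contributing at most $1$, and the blue edge $v_{\mathrm{T}}v_{\mathrm{B}}$. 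Reaching inflow $2$ therefore forces the blue edge to be directed toward $v_{\mathrm{T}}$. The same reasoning at $v_{\mathrm{B}}$ forces the blue edge to be directed toward $v_{\mathrm{B}}$. A single edge cannot point toward both of its ends, which is the contradiction.

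As a sanity check I would also note a weight count. The internal edges have total weight $4\cdot1+2=6$, and each contributes its weight to exactly one of the four vertices. The required total inflow is $8$, so the ports must supply at least $2$, and hence at least one port is inward. The case analysis above sharpens this to both ports. There is no real obstacle; the only point needing care is to check that the red edge $v_{\mathrm{T}}v_{\mathrm{R}}$ alone gives $v_{\mathrm{T}}$ inflow at most $1<2$, which is exactly what forces the blue edge.
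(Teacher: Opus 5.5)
Your proposal is correct and follows essentially the same argument as the paper: assume the port at $v_{\mathrm{L}}$ points outward, so both red edges must enter $v_{\mathrm{L}}$, which forces the blue edge $v_{\mathrm{T}}v_{\mathrm{B}}$ toward both $v_{\mathrm{T}}$ and $v_{\mathrm{B}}$. The case of $v_{\mathrm{R}}$ then follows by the left--right symmetry. Your weight-count remark is a valid side check that at least one port is inward, but the proof does not need it.
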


\begin{proof}
We show that a port directed away from $\mathcal{Q}$ leaves the edge $v_{\mathrm{T}}v_{\mathrm{B}}$ no legal direction.
Suppose that the port at $v_{\mathrm{L}}$ is directed away from $v_{\mathrm{L}}$.
Then the inflow at $v_{\mathrm{L}}$ comes from the edges $v_{\mathrm{L}}v_{\mathrm{T}}$ and $v_{\mathrm{B}}v_{\mathrm{L}}$ of weight $1$ only, so both are directed toward $v_{\mathrm{L}}$.
The inflow at $v_{\mathrm{T}}$ is then at most the weight $1$ of $v_{\mathrm{T}}v_{\mathrm{R}}$ plus the weight of $v_{\mathrm{T}}v_{\mathrm{B}}$ if this edge is directed toward $v_{\mathrm{T}}$, so $v_{\mathrm{T}}v_{\mathrm{B}}$ is directed toward $v_{\mathrm{T}}$.
In the same way, the inflow at $v_{\mathrm{B}}$ is at most the weight $1$ of $v_{\mathrm{R}}v_{\mathrm{B}}$ plus the weight of $v_{\mathrm{T}}v_{\mathrm{B}}$ if this edge is directed toward $v_{\mathrm{B}}$, so $v_{\mathrm{T}}v_{\mathrm{B}}$ is directed toward $v_{\mathrm{B}}$, a contradiction.
Hence, the port at $v_{\mathrm{L}}$ is directed toward $\mathcal{Q}$.
The permutation that exchanges $v_{\mathrm{L}}$ and $v_{\mathrm{R}}$ and fixes $v_{\mathrm{T}}$ and $v_{\mathrm{B}}$ maps the edges of $\mathcal{Q}$ onto themselves with their weights, so by applying the above argument to $v_{\mathrm{R}}$ instead of $v_{\mathrm{L}}$, the port at $v_{\mathrm{R}}$ is also directed toward $\mathcal{Q}$.
\end{proof}

\paragraph{The Construction.}
We now build the graph $X$ and the Tait coloring $\tau$ of \cref{prop:tait-source}.
Let $X_0$ be a connected loopless cubic $\mathrm{AND}/\mathrm{OR}$ constraint multigraph with protected pairs, with slots at its vertices numbered as in \cref{sec:ncl}.
Every $\mathrm{AND}$ vertex is incident with exactly two edges of weight $1$ and every $\mathrm{OR}$ vertex with none, so each edge of weight $1$ shares an end with at most two other edges of weight $1$.
We color the edges of weight $1$ one at a time, each with a color that differs from the colors of the edges of weight $1$ that share an end with it and are already colored.
In the resulting coloring, the two edges of weight $1$ at every $\mathrm{AND}$ vertex have different colors.
For a vertex $v$ of $X_0$ and an edge $e$ at $v$, the \emph{color of $e$ at $v$}, denoted by $c_v(e)$, is defined as follows, and the notation $c_v(e)$ is local to this subsection.
\begin{itemize}
\item If $e$ has weight $1$, then $c_v(e)$ is the color of $e$.
\item If $v$ is an $\mathrm{AND}$ vertex and $e$ is its edge of weight $2$, then $c_v(e)$ is the color other than the colors of the two edges of weight $1$ at $v$.
\item If $v$ is an $\mathrm{OR}$ vertex, then $c_v(e)$ is the slot of $e$ at $v$.
\end{itemize}
The three colors $c_v(e)$ of the edges $e$ at a vertex $v$ are distinct.
An edge $e=uv$ is \emph{matched} if $c_u(e)=c_v(e)$, and \emph{mismatched} otherwise; every edge of weight $1$ is matched, so every mismatched edge has weight $2$.

The graph $X$ is obtained from $X_0$ in two steps.
First, every mismatched edge $e=uv$ is replaced by the path $u,g^e_u,g^e_v,v$ through two new $\mathrm{OR}$ vertices, with all three path edges of weight $2$.
Each replaced edge uses its own two new vertices, even when old edges are parallel.
Each new vertex receives one further edge of weight $2$, its port.
The port at $g^e_u$ has color $c_v(e)$, and the port at $g^e_v$ has color $c_u(e)$.
Second, for every color $\zeta$, the new vertices whose ports have the color $\zeta$ are partitioned into pairs, which is possible by \cref{lem:free-parity} below, and for every pair $\{g,g'\}$ a new copy of $\mathcal{Q}$ is added whose ports join $v_{\mathrm{L}}$ to $g$ and $v_{\mathrm{R}}$ to $g'$.
The vertices of $X_0$ keep their types and their slots in $X$, where the edge $ug^e_u$ takes the slot of $e$ at $u$ and the edge $g^e_vv$ takes the slot of $e$ at $v$.
At each new vertex $g^e_u$ or $g^e_v$, the port and one chosen path edge occupy slots $0$ and $2$, and at the vertices $v_{\mathrm{L}}$, $v_{\mathrm{T}}$, $v_{\mathrm{R}}$ and $v_{\mathrm{B}}$ of every copy of $\mathcal{Q}$ the incident edges are numbered so that slot $0$ holds the edge of weight $2$.

The Tait coloring $\tau$ of $X$ gives every matched edge $e=uv$ the color $c_u(e)=c_v(e)$.
For a mismatched edge $e=uv$, it gives the edges $ug^e_u$, $g^e_ug^e_v$ and $g^e_vv$ the colors $c_u(e)$, $\overline{c_u(e)c_v(e)}$ and $c_v(e)$, and it gives each port its color.
In a copy of $\mathcal{Q}$ whose ports have the color $\zeta$, it gives $v_{\mathrm{T}}v_{\mathrm{B}}$ the color $\zeta$, the edges $v_{\mathrm{L}}v_{\mathrm{T}}$ and $v_{\mathrm{R}}v_{\mathrm{B}}$ one color $\zeta'\ne\zeta$, and the edges $v_{\mathrm{T}}v_{\mathrm{R}}$ and $v_{\mathrm{B}}v_{\mathrm{L}}$ the color $\overline{\zeta\zeta'}$.

The partition into pairs in the second step needs an even number of ports of each color, which the next lemma provides.

\begin{lemma}\label{lem:free-parity}
For every color $\zeta$, the number of new vertices whose ports have the color $\zeta$ is even.
\end{lemma}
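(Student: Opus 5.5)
The plan is a double count in the spirit of the Parity Lemma (\cref{lem:parity}), applied to the local colors $c_v(e)$ on $X_0$ instead of to a genuine edge-coloring. First I will record which ports get the color $\zeta$. Every mismatched edge $e=uv$ creates exactly two new vertices. The port at $g^e_u$ has color $c_v(e)$, and the port at $g^e_v$ has color $c_u(e)$. So the number $N_\zeta$ of new vertices whose ports have color $\zeta$ equals the number of pairs $(v,e)$ such that $e$ is a mismatched edge at $v$ with $c_v(e)=\zeta$. Here I use that each edge has two distinct ends, by looplessness, so each mismatched edge contributes one incidence at each of its two ends. Parallel edges cause no trouble, since each replaced edge has its own two new vertices.

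Next I will count all incidences $(v,e)$ of $X_0$ with $c_v(e)=\zeta$. The three colors $c_v(e)$ at a vertex $v$ are distinct and lie in $\{0,1,2\}$. Hence each vertex contributes exactly one such incidence, and the total is $|V(X_0)|$. Now split these incidences by the type of the edge. A matched edge $e=uv$ with $c_u(e)=c_v(e)=\zeta$ contributes exactly two incidences. A matched edge of any other color contributes none. A mismatched edge contributes only the incidences counted by $N_\zeta$. Writing $m_\zeta$ for the number of matched edges whose common color is $\zeta$, this gives $|V(X_0)| = 2m_\zeta + N_\zeta$. Therefore $N_\zeta \equiv |V(X_0)| \pmod 2$.

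Finally, $X_0$ is a cubic multigraph, so the handshake identity $3|V(X_0)|=2|E(X_0)|$ shows that $|V(X_0)|$ is even. Hence $N_\zeta$ is even for every color $\zeta$.

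I do not expect a real obstacle. The only point that needs care is that the port colors are the \emph{opposite} end's colors, $c_v(e)$ at $g^e_u$ and $c_u(e)$ at $g^e_v$. Because each mismatched edge produces both of these colors, the multiset of port colors it contributes is still $\{c_u(e),c_v(e)\}$, and the incidence count goes through unchanged.
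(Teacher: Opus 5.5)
Your proof is correct and follows the paper's argument: both count the incidences $(v,e)$ with $c_v(e)=\zeta$, obtain $|V(X_0)|$ (even because $X_0$ is cubic), and use that matched edges contribute in pairs while each mismatched incidence of color $\zeta$ yields exactly one new vertex whose port has color $\zeta$. The only difference is cosmetic: you phrase this as a bijection with incidences, whereas the paper counts mismatched edges $e=uv$ with $\zeta\in\{c_u(e),c_v(e)\}$.
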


\begin{proof}
We count the pairs $(v,e)$ of a vertex $v$ of $X_0$ and an edge $e$ at $v$ with $c_v(e)=\zeta$.
Every vertex contributes exactly one such pair, so there are $|V(X_0)|$ of them, and $|V(X_0)|$ is even because $X_0$ is cubic.
A matched edge contributes two pairs or none, and a mismatched edge $e=uv$ contributes one pair if $\zeta\in\{c_u(e),c_v(e)\}$ and none otherwise.
Hence, the number of mismatched edges $e=uv$ with $\zeta\in\{c_u(e),c_v(e)\}$ is even.
Each such edge gives exactly one new vertex whose port has the color $\zeta$, namely $g^e_v$ if $c_u(e)=\zeta$ and $g^e_u$ if $c_v(e)=\zeta$, and every new vertex whose port has the color $\zeta$ arises in this way.
\end{proof}

\paragraph{The Maps $\mathrm{up}$ and $\mathrm{down}$.}
The proof of \cref{prop:tait-source} compares the legal configurations of $X_0$ and of $X$ through two maps between their orientations.
For an orientation $x$ of $E(X_0)$, let $\mathrm{up}(x)$ be the orientation of $E(X)$ defined as follows.
It directs every matched edge as $x$ does.
For every mismatched edge $e=uv$, it directs the three edges of the path $u,g^e_u,g^e_v,v$ toward $u$ if $x$ directs $e$ toward $u$, and toward $v$ otherwise.
It directs every port toward its copy of $\mathcal{Q}$, and in every copy of $\mathcal{Q}$ it directs $v_{\mathrm{L}}v_{\mathrm{T}}$, $v_{\mathrm{T}}v_{\mathrm{R}}$ and $v_{\mathrm{T}}v_{\mathrm{B}}$ toward $v_{\mathrm{T}}$ and $v_{\mathrm{R}}v_{\mathrm{B}}$ and $v_{\mathrm{B}}v_{\mathrm{L}}$ toward $v_{\mathrm{B}}$.
Before the next definition, one end of every mismatched edge $e$ is fixed as $u$ and the other end as $v$, in an arbitrary but fixed way.
For an orientation $x'$ of $E(X)$, let $\mathrm{down}(x')$ be the orientation of $E(X_0)$ that directs every matched edge as $x'$ does and every mismatched edge $e=uv$ toward $u$ if $x'$ directs $ug^e_u$ toward $u$, and toward $v$ otherwise.
The maps $\mathrm{up}$ and $\mathrm{down}$ are local to this subsection.

\begin{proof}[Proof of \cref{prop:tait-source}]
We first check the shape of $X$ and the coloring $\tau$, and then we compare the legal configurations through the maps $\mathrm{up}$ and $\mathrm{down}$, with $x'_s=\mathrm{up}(x_s)$ and $x'_t=\mathrm{up}(x_t)$.
\begin{enumerate}[(a)]
\item \textbf{The graph $X$.}
Every vertex of $X_0$ keeps three edges with their weights, because a mismatched edge $e$ at $u$ is replaced by $ug^e_u$ of the same weight $2$.
Every new vertex $g^e_u$ or $g^e_v$ is an $\mathrm{OR}$ vertex with two path edges and one port, all of weight $2$, and $v_{\mathrm{L}}$, $v_{\mathrm{T}}$, $v_{\mathrm{R}}$ and $v_{\mathrm{B}}$ are $\mathrm{AND}$ vertices with one edge of weight $2$ and two of weight $1$.
Every new edge has a new end, the two path edges at a vertex $g^e_u$ or $g^e_v$ lead to different vertices, and each of these vertices has exactly one port.
The two ports of a copy of $\mathcal{Q}$ lead to the two distinct vertices of a pair, and the edges of $\mathcal{Q}$ other than the ports form a simple graph.
Hence, $X$ has no loops.
Matched parallel old edges may remain; every new edge is indexed by its own old edge or copy of $\mathcal{Q}$, so these cause no collision.
Replacing an edge by a path keeps $X_0$ connected, and every copy of $\mathcal{Q}$ is joined to the rest of $X$ by its ports, so $X$ is connected.
\item \textbf{The coloring $\tau$.}
At a vertex $v$ of $X_0$, every edge $e$ of $X_0$ at $v$, or the path edge that replaces it, has the color $c_v(e)$, so the three colors at $v$ are distinct.
For a mismatched edge $e=uv$, the colors at $g^e_u$ are $c_u(e)$, $\overline{c_u(e)c_v(e)}$ and the port color $c_v(e)$, and at $g^e_v$ they are $\overline{c_u(e)c_v(e)}$, $c_v(e)$ and $c_u(e)$.
In a copy of $\mathcal{Q}$ with ports of color $\zeta$, each of its four vertices has one edge of each of the colors $\zeta$, $\zeta'$ and $\overline{\zeta\zeta'}$.
\item \textbf{Size and time.}
Let $r$ be the number of mismatched edges.
The first step adds $2r$ vertices, the second adds $r$ copies of $\mathcal{Q}$ with four vertices each, and $r\le|E(X_0)|=3|V(X_0)|/2$, so $|V(X)|=|V(X_0)|+6r\le 10|V(X_0)|$.
Every step handles each vertex and each edge a bounded number of times, and the pairs can be read off one list of new vertices for each color, so the algorithm runs in linear time.
The orientation $\mathrm{up}(x)$ is constructed in linear time as well, since every edge of $X$ is directed by reading the direction of at most one edge of $X_0$.
\item \textbf{The map $\mathrm{up}$.}
Let $x$ be a legal configuration of $X_0$.
Under $\mathrm{up}(x)$, every vertex of $X_0$ has the same inflow as under $x$, because the path edge at $u$ that replaces a mismatched edge $e=uv$ is directed toward $u$ exactly when $x$ directs $e$ toward $u$, and in the same way at $v$.
If the path of a mismatched edge $e=uv$ is directed toward $u$, then $g^e_u$ receives the inflow $2$ from $g^e_ug^e_v$ and $g^e_v$ receives the inflow $2$ from $g^e_vv$, and the other case is symmetric.
In every copy of $\mathcal{Q}$, the vertices $v_{\mathrm{L}}$ and $v_{\mathrm{R}}$ receive the inflow $2$ from their ports, $v_{\mathrm{T}}$ receives $1+1+2$, and $v_{\mathrm{B}}$ receives $1+1$.
Hence, $\mathrm{up}(x)$ is a legal configuration, and $\mathrm{down}(\mathrm{up}(x))=x$.
\item \textbf{The map $\mathrm{down}$.}
Let $x'$ be a legal configuration of $X$.
By \cref{lem:q-core}, every port is directed toward its copy of $\mathcal{Q}$, so each new vertex receives its inflow from its two path edges only.
We compare, at every vertex of $X_0$, the contribution of each of its edges to the inflow under $\mathrm{down}(x')$ with the contribution of the corresponding edge of $X$ under $x'$.
For a matched edge, the two contributions are equal.
Let $e=uv$ be a mismatched edge.
At $u$, the edge $ug^e_u$ contributes $2$ under $x'$ exactly when it is directed toward $u$, which is exactly when $\mathrm{down}(x')$ directs $e$ toward $u$, so the two contributions are equal.
At $v$, the edge $g^e_vv$ contributes $2$ under $x'$ only if it is directed toward $v$.
In that case, the inflow at $g^e_v$ comes from $g^e_ug^e_v$, which is therefore directed toward $g^e_v$, and then the inflow at $g^e_u$ comes from $ug^e_u$, which is therefore not directed toward $u$, so $\mathrm{down}(x')$ directs $e$ toward $v$ and $e$ also contributes $2$ at $v$.
Hence, the inflow under $\mathrm{down}(x')$ at every vertex of $X_0$ is at least its inflow under $x'$, which is at least $2$, and $\mathrm{down}(x')$ is a legal configuration of $X_0$.
\item \textbf{Protection.}
The proof for $\mathrm{down}$ shows more than the inflow inequality: every incoming transformed incidence at an old vertex remains incoming at that vertex under $\mathrm{down}$.
At the designated first end this is equality, and at the other end it follows from the implication proved in~(e).
Thus, if the transformed incidences of an old protected pair were both incoming, their projected incidences would violate protection in the old legal configuration.
Every new $\mathrm{OR}$ vertex has a port that always points outward by \cref{lem:q-core}; the port and its designated path edge therefore form a protected pair.
All vertices of the new cores are $\mathrm{AND}$ vertices, so this accounts for every $\mathrm{OR}$ vertex of $X$.
\item \textbf{Equivalence.} The map $x\mapsto x'$ of \cref{prop:tait-source} is $\mathrm{up}$, and $x_s$ and $x_t$ are arbitrary legal configurations of $X_0$.
\begin{itemize}
\item[($\Rightarrow$)] Let $x\to y$ be a legal reversal of an edge $e$ of $X_0$.
If $e$ is matched, then $\mathrm{up}(x)$ and $\mathrm{up}(y)$ differ only in $e$ and are legal by (d), so $\mathrm{up}(x)\to\mathrm{up}(y)$ is a legal reversal.
If $e=uv$ is mismatched and $x$ directs $e$ toward $u$, then reversing $ug^e_u$, $g^e_ug^e_v$ and $g^e_vv$ in this order leads from $\mathrm{up}(x)$ to $\mathrm{up}(y)$, and every vertex other than $u$, $g^e_u$, $g^e_v$ and $v$ keeps its inflow.
After the first reversal, $u$ has its inflow under $y$, which is at least $2$, the vertex $v$ keeps its inflow under $x$, the vertex $g^e_u$ receives the inflow $4$ from $ug^e_u$ and $g^e_ug^e_v$, and $g^e_v$ receives the inflow $2$ from $g^e_vv$.
After the second reversal, $v$ still has its inflow under $x$, $g^e_u$ receives the inflow $2$ from $ug^e_u$ and $g^e_v$ receives the inflow $4$, and the third reversal gives $\mathrm{up}(y)$.
The case in which $x$ directs $e$ toward $v$ is symmetric.
Hence, $x_s\sim_{X_0}x_t$ implies $\mathrm{up}(x_s)\sim_X\mathrm{up}(x_t)$.
\item[($\Leftarrow$)] Let $x'\to y'$ be a legal reversal of an edge of $X$.
The orientations $\mathrm{down}(x')$ and $\mathrm{down}(y')$ are legal configurations by (e), and they differ at most in the edge of $X_0$ whose direction $\mathrm{down}$ reads from the reversed edge, so $\mathrm{down}(x')=\mathrm{down}(y')$ or $\mathrm{down}(x')\to\mathrm{down}(y')$ is a legal reversal.
Applying this to every reversal of a sequence from $\mathrm{up}(x_s)$ to $\mathrm{up}(x_t)$ gives $\mathrm{down}(\mathrm{up}(x_s))\sim_{X_0}\mathrm{down}(\mathrm{up}(x_t))$, which is $x_s\sim_{X_0}x_t$ by (d).
\end{itemize}
\end{enumerate}
This completes the proof.
\end{proof}
 
\section{The Construction and Its Projection}\label{sec:construction}

With the protected Tait-colored source from \cref{sec:tait-source}, we now build a network whose colorings represent legal source configurations.
We construct its negators, gates, and wires in \cref{sec:negator,sec:gates,sec:network}, then check the local constraints and define the projection in \cref{sec:laws,sec:projection-map}.
Finally, \cref{sec:reflection} shows that one Kempe exchange changes only a matching of source edges, which can be reversed one at a time.

These facts apply to every proper coloring, without restricting the locations of any colors.
Throughout the section, $k\geq3$ is fixed and $X$ is the source graph of \cref{cor:tait-source}.

\subsection{The Negator}\label{sec:negator}

The preprocessing of \cref{sec:tait-source} supplied a protected source with a Tait coloring.
We first construct a negator whose two connectors always have opposite equality types, for every number $k\geq 3$ of colors.

We build it from the components of Leven and Galil~\cite{LevenGalil1983}. Their components compare a single edge with a multiple edge: at exactly one end, the single edge's color occurs among the multiple edges.
Joining the components through new vertices reverses this relation again, so the entire chain still inverts it.
At the two ends of the chain, the multiple edges have multiplicity one, giving the two required connectors.
The remaining dangling edges carry exactly the colors absent from the unequal connector, which will allow the triangle to control this connector.
We keep Leven and Galil's names for the multiple edges and give names to the vertices.

The statement of \cref{lem:inverting}(b) below compares the colors of two multiple edges, and the following terms make this precise for a coloring $c$ of a multipole.
A multiple edge $a$ \emph{contains} a color $h$ under $c$ if an edge of $a$ has the color $h$.
A single edge $b$ \emph{shares a color with} $a$ if $a$ contains the color of $b$; this is Leven and Galil's relation~\cite{LevenGalil1983} between two multiple edges, restricted to a single edge $b$. 

\begin{definition}\label{def:inverting}
Let $1\leq i\leq k-2$.
The \emph{inverting component} $H_i$ is the multipole of degree $k$ with the internal vertices $B$, $D$, $M$, $L$, $R$, $A$ and $C$, the multiple edges $u=BD$, $t=BL$, $v=DR$, $o=ML$, $p=MR$, $q=AR$, $s=LC$ and $r=AC$ of the multiplicities $k-2$, $1$, $1$, $k-1-i$, $i$, $k-1-i$, $i$ and $1$, respectively, and the multiple dangling edges $a$ at $A$, $b$ at $B$, $c$ at $C$, $d$ at $D$ and $e$ at $M$ of the multiplicities $i$, $1$, $k-1-i$, $1$ and $1$, respectively.
\end{definition}

Every internal vertex of $H_i$ has degree $k$; for example, $B$ is incident with the $k-2$ edges of $u$, the edge $t$ and the dangling edge $b$.
\Cref{fig:inverting} shows $H_i$.

\begin{figure}[ht]
\centering
\begin{adjustbox}{max width=\textwidth}
\begin{tikzpicture}[
  vx/.style={circle, draw, fill=white, inner sep=0pt, minimum size=4.8mm, font=\footnotesize},
  tv/.style={circle, fill, inner sep=0pt, minimum size=3pt},
  lb/.style={font=\footnotesize, inner sep=2pt},
  me/.style={line width=1.6pt}
]
\node[vx] (B) at (-2.2,1.6) {$B$};
\node[vx] (D) at (2.2,1.6) {$D$};
\node[vx] (L) at (-3.4,0) {$L$};
\node[vx] (R) at (3.4,0) {$R$};
\node[vx] (C) at (-2.2,-1.6) {$C$};
\node[vx] (A) at (2.2,-1.6) {$A$};
\node[vx] (M) at (0,0) {$M$};
\node[tv] (xb) at (-2.2,2.7) {};
\node[tv] (xd) at (2.2,2.7) {};
\node[tv] (xa) at (2.2,-2.7) {};
\node[tv] (xc) at (-2.2,-2.7) {};
\node[tv] (xe) at (0,-2.7) {};
\draw[me] (B) -- node[lb, above] {$u$ ($k-2$)} (D);
\draw (B) -- node[lb, above left] {$t$} (L);
\draw (D) -- node[lb, above right] {$v$} (R);
\draw[me] (A) -- node[lb, below right] {$q$ ($k-1-i$)} (R);
\draw[me] (L) -- node[lb, below left] {$s$ ($i$)} (C);
\draw (A) -- node[lb, below, pos=0.75] {$r$} (C);
\draw[me] (M) -- node[lb, above] {$o$ ($k-1-i$)} (L);
\draw[me] (M) -- node[lb, above] {$p$ ($i$)} (R);
\draw (B) -- (xb);
\draw (D) -- (xd);
\draw[me] (A) -- (xa);
\draw[me] (C) -- (xc);
\draw[white, line width=3pt] (0,-1.42) -- (0,-1.78);
\draw (M) -- (xe);
\node[lb, left] at (xb.west) {$b$};
\node[lb, right] at (xd.east) {$d$};
\node[lb, right] at (xa.east) {$a$ ($i$)};
\node[lb, left] at (xc.west) {$c$ ($k-1-i$)};
\node[lb, right] at (xe.east) {$e$};
\end{tikzpicture}
\end{adjustbox}
\caption{The drawing shows the inverting component $H_i$, whose internal vertices $B$, $D$, $M$, $L$, $R$, $A$ and $C$ are drawn as circles and whose dangling edges end at small black dots.
A thin line is a single edge, labeled with its name, and a thick line is a multiple edge, labeled with its name and, in parentheses, its multiplicity.
The vertices $B$, $D$, $R$, $A$, $C$ and $L$ form a hexagon, and the vertex $M$ inside the hexagon is joined to $L$ by $o$ and to $R$ by $p$.
The dangling edges $b$, $d$, $a$ and $c$ leave the hexagon at $B$, $D$, $A$ and $C$.
The dangling edge $e$ at $M$ passes over the edge $r$ without meeting it.
The edges $o$, $p$, $q$, $s$, $a$ and $c$ are drawn thick for a general $i$, although for $i=1$ and for $i=k-2$ some of them are single edges, and $u$ is a single edge for $k=3$.}
\label{fig:inverting}
\end{figure}
 
The boundary counts and the exclusive alternative in the next lemma are due to Leven and Galil~\cite{LevenGalil1983}. We state them in our notation; they are the only properties of $H_i$ needed for the construction and its projection.

\begin{lemma}\label{lem:inverting}
Let $1\leq i\leq k-2$, and let $\varphi$ be a coloring of $H_i$.
\begin{enumerate}[(a)]
\item One color occurs on exactly three dangling edges of $H_i$, and every other color occurs on exactly one.
\item Exactly one of the following holds: $b$ shares a color with $a$, or $d$ shares a color with $c$.
\end{enumerate}
\end{lemma}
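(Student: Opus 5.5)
Part~(a) follows from the Parity Lemma (\cref{lem:parity}) and a count. The dangling edges of $H_i$ are $a,b,c,d,e$, so their number is $i+1+(k-1-i)+1+1=k+2$. The number of internal vertices is $7$, which is odd. By \cref{lem:parity}, every color therefore occurs an odd number of times, hence at least once, on the dangling edges. With $k$ colors and $k+2$ dangling edges, the excesses over one sum to $2$. Since each excess is even, exactly one color has excess $2$. So that color occurs on exactly three dangling edges and every other color on exactly one.

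For part~(b) I would trace colors around the hexagon, vertex by vertex. At $B$ and at $D$ the $k-2$ edges of $u$ carry the same $k-2$ colors. So there are two colors $x,y$ with $\{\varphi(t),\varphi(b)\}=\{\varphi(v),\varphi(d)\}=\{x,y\}$. The vertex conditions give the following:
\begin{itemize}
\item at $L$, $o\cup s$ carries all colors but $\varphi(t)$;
\item at $M$, $o\cup p$ carries all colors but $\varphi(e)$;
\item at $R$, $p\cup q$ carries all colors but $\varphi(v)$;
\item at $C$ and at $A$, $s,r,c$ and $q,r,a$ each carry every color exactly once.
\end{itemize}
I would split into two cases: $\varphi(t)=\varphi(v)$ (so $\varphi(b)=\varphi(d)$), and $\varphi(t)\neq\varphi(v)$ (so $\varphi(b)=\varphi(v)$ and $\varphi(d)=\varphi(t)$). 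In each case I follow the one or two relevant colors through $L,M,R$ to decide whether they lie on $o$, $s$, $p$ or $q$. Then $A$ and $C$, joined by the single edge $r$, determine whether the color of $b$ lies in $a$ and whether the color of $d$ lies in $c$.

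For example, in the first case with $\varphi(b)=\varphi(d)=y$, the analysis runs as follows. If $y$ lies on $o$, then $y\in q$, hence $y\notin a$ and $y\notin r$, which forces $y\in c$. If $y$ lies on $s$ and $p$, then $y\notin r$ and $y\notin q$, which forces $y\in a$ and $y\notin c$. In both subcases exactly one alternative holds.

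The main obstacle is the leftover subcases where the local tracing alone yields neither alternative. An example is $y=\varphi(e)$ in the first case, where $y$ would appear on $b$, $d$ and $e$ and would seem to miss both $a$ and $c$. To exclude these I would apply \cref{lem:parity} to suitable parts of $H_i$, such as the sub-multipole induced by $\{M,L,R\}$ or by $\{L,C,A,R\}$ with the cut edges made dangling. I would combine this with the count from~(a) to derive a parity contradiction on the number of edges of color $y$ crossing the cut. I expect the second case, $\varphi(t)\ne\varphi(v)$, to need the same two-step treatment with the roles of $x$ and $y$ swapped between the two sides of the hexagon. The symmetry of $H_i$ that exchanges $B,L,C$ with $D,R,A$ and swaps $i$ with $k-1-i$ should let me halve the work there.
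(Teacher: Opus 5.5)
Your part~(a) is correct and complete. The paper gives no proof of this lemma and cites Leven and Galil, but in the proof of \cref{lem:negator-law}(b) it reuses ``the argument of \cref{lem:inverting}(a)'', which is exactly your parity-plus-counting step. Part~(b), however, is still only a plan, and it has a gap in each of its two cases.

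\textbf{First case, $\varphi(t)=\varphi(v)=x$.} Your leftover subcase is $y=\varphi(e)$, in which $y$ lies on $s$ and hence, at $R$, on $q$. A parity count on the color $y$, as you propose, cannot exclude it:
on the part $\{M,L,R\}$ the cut edges are $t,s,e,v,q$, and $y$ occurs on $s,e,q$, an odd number of times;
on $\{L,C,A,R\}$ it occurs on none of $t,o,v,p,a,c$;
among the dangling edges of $H_i$ it occurs exactly on $b,d,e$, which is consistent with~(a).
The obstruction lies in the other color: $x$ occurs only on $t$ and $v$ among $t,s,e,v,q$.

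There is a simpler way to see this. Compare the colors not on $o$ at $L$ and at $M$, and the colors not on $p$ at $R$ and at $M$. This gives
\[
\{\varphi(t)\}\cup\varphi(s)=\varphi(p)\cup\{\varphi(e)\},\qquad
\{\varphi(v)\}\cup\varphi(q)=\varphi(o)\cup\{\varphi(e)\}.
\]
So $x$ lies both in $\varphi(p)\cup\{\varphi(e)\}$ and in $\varphi(o)\cup\{\varphi(e)\}$. Since $\varphi(o)\cap\varphi(p)=\emptyset$, this forces $x=\varphi(e)$. It follows that $\varphi(s)=\varphi(p)$ and $\varphi(q)=\varphi(o)$. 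Hence the leftover subcase never arises, and your two subcases are exhaustive.

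\textbf{Second case, $\varphi(t)\neq\varphi(v)$.} Here the difficulty is the opposite of the one you anticipate, so ``the same two-step treatment'' does not address it. We have $\varphi(d)=\varphi(t)=x$ and $\varphi(b)=\varphi(v)=y$, with $x\notin\varphi(s)$ and $y\notin\varphi(q)$. Properness at $A$ and $C$ then says:
$b$ shares a color with $a$ exactly when $\varphi(r)\neq y$;
$d$ shares a color with $c$ exactly when $\varphi(r)\neq x$.

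So ``neither'' is impossible in this case. Tracing only $x$ and $y$ leaves open that \emph{both} alternatives hold, which happens exactly when $\varphi(r)\notin\{x,y\}$. The missing step is to show $\varphi(r)\in\{x,y\}$. The displayed identities give
\[
\varphi(s)\cup\varphi(q)\supseteq(\varphi(o)\cup\varphi(p)\cup\{\varphi(e)\})\setminus\{x,y\}.
\]
By properness at $M$, the right-hand side is every color except $x$ and $y$. Meanwhile $\varphi(r)$ avoids $\varphi(q)$ at $A$ and $\varphi(s)$ at $C$, so $\varphi(r)\in\{x,y\}$.

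With these two observations, (b) follows by purely local reasoning at the seven vertices. No parity argument on sub-multipoles is needed.
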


We now chain the inverting components into one multipole with two connectors.

\begin{definition}\label{def:negator}
The negator $W_k$ is obtained from $H_1,\dots,H_{k-2}$ as follows.
For $1\leq i\leq k-3$, the dangling edges of $c$ in $H_i$ and of $a$ in $H_{i+1}$ are attached to a new vertex $Z_i$, and the dangling edge $d$ of $H_i$ is identified with the dangling edge $b$ of $H_{i+1}$.
The input connector of $W_k$ is $(a,b)$ of $H_1$, its output connector is $(c,d)$ of $H_{k-2}$, and its residual edges are the edges $e$ of $H_1,\dots,H_{k-2}$.
\end{definition}

The vertex $Z_i$ has degree $(k-1-i)+(i+1)=k$, and the multiple edges $a$ of $H_1$ and $c$ of $H_{k-2}$ are single edges, so the two connectors of $W_k$ consist of single dangling edges.
For $k=3$, the negator $W_3$ is $H_1$, which is, up to isomorphism, the inverting component of Holyer~\cite{Holyer1981}. \Cref{fig:negator} shows the negator.

\begin{figure}[ht]
\centering
\begin{adjustbox}{max width=\textwidth}
\begin{tikzpicture}[
  vx/.style={circle, draw, fill=white, inner sep=0pt, minimum size=4.8mm, font=\footnotesize},
  tv/.style={circle, fill, inner sep=0pt, minimum size=3pt},
  lb/.style={font=\footnotesize, inner sep=2pt},
  bx/.style={draw, minimum width=16mm, minimum height=14mm, inner sep=0pt, font=\footnotesize},
  me/.style={line width=1.6pt}
]
\node[bx] (H1) at (0,0) {$H_1$};
\node[bx] (H2) at (3.9,0) {$H_2$};
\node[bx] (H3) at (6.6,0) {$H_{k-3}$};
\node[bx] (H4) at (10.5,0) {$H_{k-2}$};
\foreach \x/\n in {0/1,3.9/2,6.6/3,10.5/4} {
  \node[lb] at (\x-0.58,0.48) {$a$};
  \node[lb] at (\x-0.58,-0.48) {$b$};
  \node[lb] at (\x+0.58,0.48) {$c$};
  \node[lb] at (\x+0.58,-0.48) {$d$};
  \node[tv] (e\n) at (\x,-1.3) {};
  \draw (\x,-0.7) -- (e\n);
  \node[lb, right] at (e\n.east) {$e$};
}
\node[vx, label={[lb, label distance=1pt]below:$Z_1$}] (Z1) at (1.95,0.5) {};
\node[vx, label={[lb, label distance=1pt]below:$Z_{k-3}$}] (Z2) at (8.55,0.5) {};
\node[tv] (i1) at (-1.4,0.5) {};
\node[tv] (i2) at (-1.4,-0.5) {};
\node[tv] (o1) at (11.9,0.5) {};
\node[tv] (o2) at (11.9,-0.5) {};
\draw (-0.8,0.5) -- (i1);
\draw (-0.8,-0.5) -- (i2);
\draw (11.3,0.5) -- (o1);
\draw (11.3,-0.5) -- (o2);
\draw[me] (0.8,0.5) -- node[lb, above, inner ysep=3pt, pos=0.55] {$k-2$} (Z1);
\draw[me] (Z1) -- node[lb, above, inner ysep=3pt, pos=0.45] {$2$} (3.1,0.5);
\draw (0.8,-0.5) -- (3.1,-0.5);
\draw[me] (4.7,0.5) -- (5.05,0.5);
\draw (4.7,-0.5) -- (5.05,-0.5);
\node[lb] at (5.25,0) {$\cdots$};
\draw[me] (5.45,0.5) -- (5.8,0.5);
\draw (5.45,-0.5) -- (5.8,-0.5);
\draw[me] (7.4,0.5) -- node[lb, above, inner ysep=3pt, pos=0.55] {$2$} (Z2);
\draw[me] (Z2) -- node[lb, above, inner ysep=3pt, pos=0.45] {$k-2$} (9.7,0.5);
\draw (7.4,-0.5) -- (9.7,-0.5);
\node[lb, align=center, below] at (-1.4,-0.7) {input\\connector};
\node[lb, align=center, below] at (11.9,-0.7) {output\\connector};
\end{tikzpicture}
\end{adjustbox}
\caption{The drawing shows the negator $W_k$ as a chain of the inverting components $H_1,\dots,H_{k-2}$, each drawn as a box whose corner letters name its dangling edges $a$, $b$, $c$ and $d$ of \cref{fig:inverting}.
The dots stand for the components $H_3,\dots,H_{k-4}$, the vertices $Z_2,\dots,Z_{k-4}$ and the identified edges between $H_2$ and $H_{k-3}$, so that for $k=6$ they stand for $Z_2$ and one identified edge.
For $k\leq 5$, the chain is shorter, and for $k=3$ it consists of $H_1$ alone.
A thick line is a multiple edge, and a thin line is a single edge.
Between $H_i$ and $H_{i+1}$, the dangling edge $c$ of $H_i$ and the dangling edge $a$ of $H_{i+1}$ are attached to the new vertex $Z_i$, of which $Z_1$ and $Z_{k-3}$ are drawn as circles, and the dangling edge $d$ of $H_i$ is identified with the dangling edge $b$ of $H_{i+1}$.
The vertex $Z_i$ is incident with $k-1-i$ edges of $H_i$ and $i+1$ edges of $H_{i+1}$, and the labels at $Z_1$ and $Z_{k-3}$ give these multiplicities.
The two dangling edges on the far left form the input connector $(a,b)$ of $W_k$, the two on the far right form its output connector $(c,d)$, and the dangling edges $e$ below the boxes are its residual edges.}
\label{fig:negator}
\end{figure}
 
The next lemma shows that $W_k$ reverses equality types and identifies the set of residual colors.
These two properties let the gates read connector states and the triangles control later Kempe changes.

\begin{lemma}\label{lem:negator-law}
The negator $W_k$ is a multipole of degree $k$ with $8k-17$ internal vertices and $k+2$ dangling edges, and every coloring of $W_k$ has the following two properties.
\begin{enumerate}[(a)]
\item Exactly one of the input and the output connector is equal.
\item The $k-2$ residual edges have pairwise distinct colors, and these colors are the colors that do not occur on the unequal connector.
\end{enumerate}
\end{lemma}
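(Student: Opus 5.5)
The plan is to prove the counts by bookkeeping over the chain, part (a) by propagating \cref{lem:inverting}(b) along the chain, and part (b) by a global count using the Parity Lemma (\cref{lem:parity}).

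\textbf{Counts.} Each $H_i$ has the seven internal vertices $B,D,M,L,R,A,C$, and there are $k-3$ new vertices $Z_i$. This gives $7(k-2)+(k-3)=8k-17$ internal vertices.

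Degrees are preserved by the construction. Attaching $c$ of $H_i$ and $a$ of $H_{i+1}$ to $Z_i$ gives $Z_i$ degree $(k-1-i)+(i+1)=k$, and identifying $d$ of $H_i$ with $b$ of $H_{i+1}$ changes no internal degree. So $W_k$ is a multipole of degree $k$, and no loop arises because the identified edges lie in different components.

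The surviving dangling edges are $a$ and $b$ of $H_1$, $c$ and $d$ of $H_{k-2}$, and the $k-2$ edges $e$. Since $a$ of $H_1$ and $c$ of $H_{k-2}$ have multiplicity $1$, this gives $k+2$ dangling edges.

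\textbf{Part (a).} Fix a coloring $\varphi$ of $W_k$ and let $\varphi_i$ be its restriction to the part $H_i$, which is a coloring of $H_i$. Let $S_i$ denote the statement ``$b$ shares a color with $a$ in $H_i$''. By \cref{lem:inverting}(b), $\neg S_i$ holds exactly when $d$ shares a color with $c$ in $H_i$.

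At $Z_i$ the $k$ edges have distinct colors. Hence the color sets of $c$ of $H_i$ and of $a$ of $H_{i+1}$ partition $\{0,\dots,k-1\}$. The edge $d$ of $H_i$ is the edge $b$ of $H_{i+1}$, so $d$ shares a color with $c$ in $H_i$ if and only if $b$ of $H_{i+1}$ does \emph{not} share a color with $a$ of $H_{i+1}$. That is, $\neg S_i\iff\neg S_{i+1}$, and by induction all $S_i$ are equivalent to $S_1$.

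Now read the two connectors:
\begin{itemize}
\item Because $a$ of $H_1$ is a single edge, the input connector is equal exactly when $S_1$ holds.
\item Because $c$ of $H_{k-2}$ is a single edge, the output connector is equal exactly when $d$ shares a color with $c$ in $H_{k-2}$, that is, exactly when $\neg S_{k-2}$, equivalently $\neg S_1$, holds.
\end{itemize}
Hence exactly one connector is equal. This propagation step is the main point to get right, in particular the complementation at $Z_i$ and the identification $d_i=b_{i+1}$.

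\textbf{Part (b).} Apply \cref{lem:parity} to $W_k$, whose number of internal vertices $8k-17$ is odd. Every color then occurs an odd number of times, hence at least once, on the $k+2$ dangling edges. Since there are $k$ colors and $k+2$ edges, exactly one color $h$ occurs three times and every other color occurs once.

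By (a), one connector is equal with some color $h'$. Since $h'$ occurs at least twice, $h'=h$, and $h$ has exactly one further occurrence. Split according to where it lies:
\begin{itemize}
\item \emph{The third $h$ is on the unequal connector.} That connector has colors $h$ and some $g\neq h$. The $k-2$ residual edges then carry the remaining $k-2$ colors, other than $h$ and $g$, once each.
\item \emph{The third $h$ is on a residual edge.} The unequal connector has two colors $g,g'$ different from $h$ and from each other. The residual edges carry $h$ together with all colors other than $h,g,g'$, each once.
\end{itemize}
In both cases the residual colors are pairwise distinct and are exactly the colors absent from the unequal connector. This argument does not need \cref{lem:inverting}(a).
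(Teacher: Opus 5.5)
Your proof is correct and follows essentially the same route as the paper: the same vertex and dangling-edge count, the same propagation of \cref{lem:inverting}(b) through the complementary color sets at each $Z_i$ and the identification of $d$ of $H_i$ with $b$ of $H_{i+1}$, and the same Parity Lemma count for~(b). The only differences are presentational: you derive the ``one color three times, every other color once'' count directly and split~(b) into two cases. The paper instead removes the two $h$-colored edges of the equal connector and observes that every color then occurs exactly once on the remaining edges.
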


\begin{proof}
We count the vertices first, then chain \cref{lem:inverting}(b) along the negator for~(a), and finally apply the Parity Lemma for~(b).
The negator consists of $k-2$ inverting components with seven internal vertices each and of the $k-3$ vertices $Z_i$, so it has $7(k-2)+(k-3)=8k-17$ internal vertices.
Its dangling edges are the two edges of each connector and the $k-2$ residual edges, since the other dangling edges of the components are attached to the vertices $Z_i$ or identified.
\begin{enumerate}[(a)]
\item For $1\leq i\leq k-2$, let $\delta_i$ be the statement that $b$ shares a color with $a$ in $H_i$, and let $\delta'_i$ be the statement that $d$ shares a color with $c$ in $H_i$.
By \cref{lem:inverting}(b), $\delta_i$ holds exactly when $\delta'_i$ fails.
For $1\leq i\leq k-3$, the edges of $c$ in $H_i$ and of $a$ in $H_{i+1}$ are the $k$ edges at $Z_i$, so their color sets are complementary, and the single edge $d$ of $H_i$, which is the edge $b$ of $H_{i+1}$, shares a color with exactly one of them.
Hence, $\delta'_i$ holds exactly when $\delta_{i+1}$ fails, and therefore $\delta_i$ holds exactly when $\delta_{i+1}$ holds.
Consequently, $\delta_1$ holds exactly when $\delta'_{k-2}$ fails.
Since $a$ in $H_1$ and $c$ in $H_{k-2}$ are single edges, $\delta_1$ says that the input connector is equal and $\delta'_{k-2}$ says that the output connector is equal.
\item The number $8k-17$ is odd, so by \cref{lem:parity} and the argument of \cref{lem:inverting}(a), one color $h$ occurs on three of the $k+2$ dangling edges and every other color on one.
By~(a), one connector is equal, and its two edges have a common color, which is therefore $h$.
Removing these two edges leaves every color exactly once on the unequal connector and the residual edges.
\end{enumerate}
\end{proof}

The name comes from M{\'a}{\v{c}}ajov{\'a} and {\v{S}}koviera~\cite{MacajovaSkoviera2006}, who call a colorable cubic multipole with two connectors of two edges each and one residual edge a negator if every coloring makes exactly one of the two connectors equal. By \cref{lem:negator-law}(a), together with the colorings of $W_3$ given in \cref{lem:w3}(a), the negator $W_3$ is a negator in their sense, and we keep their name for every $k$.

\subsection{The Gates}\label{sec:gates}

The negator law from \cref{sec:negator} reverses equality types; we now combine negators into gates for the source vertices.
Each gate has three ordered connectors, called its \emph{slots}, that represent the three incident source directions.

The $\mathrm{OR}$ gate joins three negators around a new vertex; its slot at zero reverses the physical output order of its negator.
The $\mathrm{AND}$ gate uses two negators and three new vertices.
Together with a triangle, this gate forces both light edges inward whenever the heavy edge points outward.
For a copy $g$ of $W_k$, its ordered input and output are denoted by $(a(g),b(g))$ and $(c(g),d(g))$, and its residual edges by $e_1(g),\ldots,e_{k-2}(g)$, in the order of their inverting components.
A new vertex \emph{receives} a dangling edge when a new terminal vertex is joined to it.

\begin{definition}\label{def:gates}
The \emph{$\mathrm{OR}$ gate} consists of three copies $g_0,g_1,g_2$ of $W_k$ and a new vertex $T$.
The edge $b(g_0)$ is identified with $a(g_1)$, and $b(g_1)$ with $a(g_2)$.
The edges $a(g_0)$ and $b(g_2)$ attach to $T$, which receives $k-2$ further dangling edges.
Its ordered slots $0,1,2$ are
\[
(d(g_0),c(g_0)),\qquad(c(g_1),d(g_1)),\qquad(c(g_2),d(g_2)).
\]
The \emph{$\mathrm{AND}$ gate} consists of two copies $g_0,g_1$ of $W_k$ and new vertices $x_0,x_1,x_2$.
The edge $a(g_0)$ attaches to $x_0$, which receives a new dangling edge $s_0$ and $k-2$ further dangling edges, denoted collectively by $r_0$.
The edge $d(g_0)$ and all residual edges of $g_1$ attach to $x_1$, which receives a new dangling edge $s_1$.
The edge $b(g_1)$ attaches to $x_2$, which receives a new dangling edge $r_2$ and $k-2$ further dangling edges, denoted collectively by $r'_2$.
Its ordered slots $0,1,2$ are
\[
(s_0,b(g_0)),\qquad(s_1,c(g_0)),\qquad(c(g_1),d(g_1)).
\]
All dangling edges outside these slots are residual edges of their gate.
\end{definition}

Each new vertex has degree $k$ by its listed incidences.
The two gates are connected multipoles of degree $k$, with $24k-50$ and $16k-31$ internal vertices, respectively.
The $\mathrm{AND}$ gate is connected because $x_1$ joins the two negators through $d(g_0)$ and the nonempty residual bundle of $g_1$; the other new vertices attach to one negator each.
\Cref{fig:gates} shows the identifications and ordered slots.

\usetikzlibrary{arrows.meta}
\begin{figure}[ht]
\centering
\begin{adjustbox}{max width=\textwidth}
\begin{tikzpicture}[
 vx/.style={circle,draw,fill=white,inner sep=0pt,minimum size=5mm,font=\footnotesize},
 tv/.style={circle,fill,inner sep=0pt,minimum size=3pt},
 lb/.style={font=\footnotesize,inner sep=2pt},
 bx/.style={draw,minimum width=16mm,minimum height=11mm,inner sep=0pt,font=\footnotesize},
 me/.style={line width=1.6pt},
 pr/.style={densely dotted,-{Stealth[length=4pt]},shorten <=2pt,shorten >=2pt}]
\usetikzlibrary{arrows.meta}
\begin{scope}
\node[bx] (g0) at (0,2.4) {$g_0$};
\node[bx] (g1) at (0,0) {$g_1$};
\node[bx] (g2) at (0,-2.4) {$g_2$};
\node[vx] (T) at (-2.5,0) {$T$};
\node[tv] (xT) at (-3.7,0) {};
\draw[me] (T) -- node[lb,above] {$k-2$} (xT);
\draw (T.north) -- (-2.5,2.7) -- (-0.8,2.7);
\draw (T.south) -- (-2.5,-2.7) -- (-0.8,-2.7);
\draw (-0.8,2.1) -- (-1.6,2.1) -- (-1.6,0.3) -- (-0.8,0.3);
\draw (-0.8,-0.3) -- (-1.6,-0.3) -- (-1.6,-2.1) -- (-0.8,-2.1);
\foreach \y/\j in {2.4/0,0/1,-2.4/2} {
 \node[tv] (r\j) at (0,\y-1.15) {};
 \draw[me] (0,\y-0.55) -- (r\j);
 \node[lb,right] at (r\j.east) {$k-2$};
 \node[tv] (sa\j) at (1.7,\y+0.3) {};
 \node[tv] (sb\j) at (1.7,\y-0.3) {};
 \draw (0.8,\y+0.3) -- (sa\j);
 \draw (0.8,\y-0.3) -- (sb\j);
}
\draw[pr] (sb0) to[bend right=75] node[lb,right] {slot $0$} (sa0);
\draw[pr] (sa1) to[bend left=75] node[lb,right] {slot $1$} (sb1);
\draw[pr] (sa2) to[bend left=75] node[lb,right] {slot $2$} (sb2);
\node[lb] at (-0.5,-4.2) {(a) $\mathrm{OR}$ gate};
\end{scope}
\begin{scope}[xshift=7.7cm]
\node[bx] (h0) at (0,1.8) {$g_0$};
\node[bx] (h1) at (0,-1.8) {$g_1$};
\node[vx] (x0) at (-2,2.1) {$x_0$};
\node[vx] (x1) at (3,1.5) {$x_1$};
\node[vx] (x2) at (-2,-2.1) {$x_2$};
\node[tv] (s0) at (-2,3.25) {};
\node[tv] (r0) at (-3.2,2.1) {};
\node[tv] (b0) at (-1.2,0.6) {};
\node[tv] (c0) at (1.65,2.1) {};
\node[tv] (s1) at (3,3.25) {};
\node[tv] (e0) at (0,0.6) {};
\node[tv] (a1) at (-1.5,-1.5) {};
\node[tv] (c1) at (1.65,-1.5) {};
\node[tv] (d1) at (1.65,-2.1) {};
\node[tv] (r2) at (-2,-3.25) {};
\node[tv] (rp2) at (-3.2,-2.1) {};
\draw (x0) -- node[lb,above] {$a(g_0)$} (-0.8,2.1);
\draw (x0) -- node[lb,left] {$s_0$} (s0);
\draw[me] (x0) -- node[lb,above] {$r_0$} (r0);
\draw (-0.8,1.5) -- (-1.2,1.5) -- node[lb,left] {$b(g_0)$} (b0);
\draw (0.8,2.1) -- node[lb,above] {$c(g_0)$} (c0);
\draw (0.8,1.5) -- node[lb,above] {$d(g_0)$} (x1);
\draw (x1) -- node[lb,right] {$s_1$} (s1);
\draw[me] (0,1.25) -- node[lb,right] {$k-2$} (e0);
\draw (-0.8,-1.5) -- node[lb,above] {$a(g_1)$} (a1);
\draw (-0.8,-2.1) -- node[lb,above] {$b(g_1)$} (x2);
\draw (0.8,-1.5) -- node[lb,above] {$c(g_1)$} (c1);
\draw (0.8,-2.1) -- node[lb,below] {$d(g_1)$} (d1);
\draw (x2) -- node[lb,left] {$r_2$} (r2);
\draw[me] (x2) -- node[lb,above] {$r'_2$} (rp2);
\draw[me] (0,-2.35) -- (0,-3.25) -- (3,-3.25) -- node[lb,right] {$e_1(g_1),\ldots,e_{k-2}(g_1)$} (x1);
\node[lb] at (0,-4.2) {(b) $\mathrm{AND}$ gate};
\end{scope}
\end{tikzpicture}
\end{adjustbox}
\caption{The two gates, with each box representing an entire copy of $W_k$.
Its input edges $a,b$ leave the left side and its output edges $c,d$ the right side, in each case first above second; its residual bundle leaves below.
Circles are the new internal vertices, small black dots are terminal vertices, thin lines are single edges, and thick lines represent $k-2$ edges.
Dotted arrows in part~(a) annotate slot order and are not graph edges: slot zero runs from $d(g_0)$ to $c(g_0)$, the reverse of the other outputs.
In part~(b), the ordered slots are $(s_0,b(g_0))$, $(s_1,c(g_0))$, and $(c(g_1),d(g_1))$.
All residual edges of $g_1$ attach to $x_1$; the edge $a(g_1)$ remains dangling.}
\label{fig:gates}
\end{figure}
 
The \emph{equality indicator} of a connector is $1$ when it is equal and $0$ otherwise.
The \emph{polarity vectors}, indexed by slots $0,1,2$, are $p_{\mathrm{OR}}=(0,0,0)$ and $p_{\mathrm{AND}}=(1,1,0)$.
If slot $j$ of a gate of type $\theta$ has equality indicator $\varepsilon$, its \emph{represented bit} is $\varepsilon\mathbin\oplus p_\theta(j)$, where $\oplus$ is addition modulo two.
Thus, an equal $\mathrm{OR}$ slot represents one; in the $\mathrm{AND}$ gate, an unequal slot represents one at slots zero and one, and an equal slot represents one at slot two.

\subsection{The Network}\label{sec:network}

The gates of \cref{sec:gates} represent the local source states, and we now connect them along the source edges.
The wire lengths account for slot polarities, while attached triangles will control which slots can change together.

A \emph{wire} joins the two slots belonging to one source edge.
Its connector equality types alternate, and its length compensates for the two endpoint polarities.
Each wire uses its own copies of the negator, even when source edges are parallel.
Its end negators are distinct and attach to triangles at the two distinct source ends.

\begin{definition}\label{def:network}
The \emph{network} $F^k_X$ has one gate of type $\theta(z)$ and one triangle with vertices $S^z_0,S^z_1,S^z_2$ for each source vertex $z$.
For every source edge $f$ with first end $u$ and second end $v$, its wire has copies $w_0,\ldots,w_{q_f-1}$ of $W_k$, where
\[
q_f=\begin{cases}
3,&p_{\theta(u)}(s_u(f))=p_{\theta(v)}(s_v(f)),\\
2,&p_{\theta(u)}(s_u(f))\ne p_{\theta(v)}(s_v(f)).
\end{cases}
\]
The output of $w_i$ is identified coordinate by coordinate with the input of $w_{i+1}$.
The input of $w_0$ is identified with slot $s_u(f)$, and the output of $w_{q_f-1}$ with slot $s_v(f)$, in their ordered coordinates.
The resulting connectors along the wire are $\kappa_0,\ldots,\kappa_{q_f}$, with $\kappa_0$ at $u$ and $\kappa_{q_f}$ at $v$.
The residual edges of $w_0$ attach to $S^u_{s_u(f)}$, and those of $w_{q_f-1}$ attach to $S^v_{s_v(f)}$.
Every other residual edge of a wire or gate remains dangling.
\end{definition}

Every triangle vertex has its two triangle edges and $k-2$ attached residual edges, so its degree is $k$.
Every identification joins internal vertices of distinct copies, and every attachment joins a negator to a separate new vertex.
Hence, the network has no loops and is a multipole of degree $k$.
A pair of parallel source edges has separate wires and separate gate slots.
\Cref{fig:wire} records the short wires and their endpoint attachments.

\begin{figure}[ht]
\centering
\begin{adjustbox}{max width=\textwidth}
\begin{tikzpicture}[
 bx/.style={draw,minimum width=15mm,minimum height=10mm,font=\footnotesize},
 vx/.style={circle,draw,fill=white,minimum size=4mm,inner sep=0pt},
 tv/.style={circle,fill,minimum size=3pt,inner sep=0pt},
 lb/.style={font=\footnotesize,inner sep=2pt},me/.style={line width=1.6pt}]
\foreach \yy/\qq in {0/2,-4.4/3} {
 \begin{scope}[yshift=\yy cm]
 \node[lb] at (-1.15,0) {slot at $u$};
 \node[lb] at (8.15,0) {slot at $v$};
 \node[bx] (L\qq) at (1,0) {$w_0$};
 \node[bx] (R\qq) at (6,0) {$w_{q_f-1}$};
 \draw (-0.35,0.25) -- (0.25,0.25);
 \draw (-0.35,-0.25) -- (0.25,-0.25);
 \draw (6.75,0.25) -- (7.35,0.25);
 \draw (6.75,-0.25) -- (7.35,-0.25);
 \node[lb,above] at (-0.05,0.3) {$\kappa_0$};
 \node[lb,above] at (7.05,0.3) {$\kappa_{\qq}$};
 \node[vx,label={[lb]below:$S^u_{s_u(f)}$}] (U\qq) at (1,-1.8) {};
 \node[vx,label={[lb]below:$S^v_{s_v(f)}$}] (V\qq) at (6,-1.8) {};
 \draw[me] (1,-0.5) -- node[lb,right] {$k-2$} (U\qq);
 \draw[me] (6,-0.5) -- node[lb,right] {$k-2$} (V\qq);
 \draw (U\qq) -- (0,-2.7) -- (2,-2.7) -- (U\qq);
 \draw (V\qq) -- (5,-2.7) -- (7,-2.7) -- (V\qq);
 \node[vx] at (0,-2.7) {};
 \node[vx] at (2,-2.7) {};
 \node[vx] at (5,-2.7) {};
 \node[vx] at (7,-2.7) {};
 \end{scope}
}
\draw (1.75,0.25) -- (5.25,0.25);
\draw (1.75,-0.25) -- (5.25,-0.25);
\node[lb,above] at (3.5,0.3) {$\kappa_1$};
\node[lb] at (3.5,1) {$q_f=2$: different polarities};
\begin{scope}[yshift=-4.4cm]
\node[bx] (M) at (3.5,0) {$w_1$};
\draw (1.75,0.25) -- (2.75,0.25);
\draw (1.75,-0.25) -- (2.75,-0.25);
\draw (4.25,0.25) -- (5.25,0.25);
\draw (4.25,-0.25) -- (5.25,-0.25);
\node[lb,above] at (2.25,0.3) {$\kappa_1$};
\node[lb,above] at (4.75,0.3) {$\kappa_2$};
\node[tv] (D) at (3.5,-1.3) {};
\draw[me] (3.5,-0.5) -- node[lb,right] {$k-2$} (D);
\node[lb] at (3.5,1) {$q_f=3$: equal polarities};
\end{scope}
\end{tikzpicture}
\end{adjustbox}
\caption{The two wire lengths in \cref{def:network}.
Each box represents $W_k$, and each pair of horizontal edges is identified coordinate by coordinate, first above second.
The endpoint residual bundles attach to the displayed vertices of the two triangles; the other triangle attachments and the endpoint gates are omitted.
For $q_f=3$, the middle residual bundle remains dangling.
The two endpoint negators are distinct in both cases.
A separate copy of this construction is used for each source edge, including parallel edges.}
\label{fig:wire}
\end{figure}
  \subsection{The Triangle and the Gates}\label{sec:laws}

The network of \cref{sec:network} is now assembled, but we still need to show that its colorings represent legal source configurations.
We first use the triangle colors to control unequal slots, then check the $\mathrm{OR}$ and $\mathrm{AND}$ laws.

The $\mathrm{OR}$ gate cannot have all three connectors unequal: the negators would force two edges at one vertex to share a color.
The $\mathrm{AND}$ gate needs the triangle to exclude one further possibility.
The three triangle edges have distinct colors, so their three vertices have different pairs of triangle colors.
The residual edges carry this information to the slots: an unequal connector has the pair at its triangle vertex.
This relation forces both light slots to represent inward edges whenever the heavy slot represents an outward edge.
It will also keep one Kempe change from reversing two source edges at the same vertex.
For a coloring $c$ of $F^k_X$, a vertex $z$ of $X$ and a slot $j$, let $\Pi^z_j$ be the color pair formed by the colors under $c$ of the two triangle edges at $S^z_j$; when several colorings are compared, we say under which coloring $\Pi^z_j$ is read.

\begin{lemma}\label{lem:triangle}
Let $c$ be a coloring of $F^k_X$, and let $z$ be a vertex of $X$.
\begin{enumerate}[(a)]
\item The color pairs $\Pi^z_0$, $\Pi^z_1$ and $\Pi^z_2$ are pairwise distinct, and, for each slot $j$, the $k-2$ residual edges at $S^z_j$ have the $k-2$ colors outside $\Pi^z_j$.
\item If the connector at slot $j$ of the gate at $z$ is unequal, then the colors of its two edges form the color pair $\Pi^z_j$.
\end{enumerate}
\end{lemma}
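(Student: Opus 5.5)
Part~(a) is a local count at the triangle vertices, and part~(b) follows by combining it with \cref{lem:negator-law}(b) for the endpoint negator of the wire at that slot.

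For~(a), each triangle vertex $S^z_j$ is an internal vertex of degree $k$. Its incident edges are the two triangle edges at $S^z_j$ and the $k-2$ residual edges attached to it. Under $c$ these $k$ edges have $k$ distinct colors. So the residual edges at $S^z_j$ carry exactly the $k-2$ colors outside $\Pi^z_j$, and the two triangle edges at $S^z_j$ form the pair $\Pi^z_j$.

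The three triangle edges $S^z_0S^z_1$, $S^z_1S^z_2$ and $S^z_2S^z_0$ pairwise share an end, so they have three distinct colors $h_0,h_1,h_2$. Each $\Pi^z_j$ consists of the two colors on the edges at $S^z_j$, so it is the set of these three colors minus the color of the opposite edge. Different slots omit different colors, so the three pairs are pairwise distinct.

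For~(b), let $f$ be the source edge occupying slot $j$ at $z$, and let $w$ be the end negator of its wire at $z$. This is $w_0$ if $z$ is the first end of $f$ and $w_{q_f-1}$ otherwise. By \cref{def:network}, slot $j$ of the gate is identified coordinate by coordinate with the input connector of $w$ (for $w_0$) or its output connector (for $w_{q_f-1}$). The residual edges of $w$ are attached to $S^z_j$.

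The restriction of $c$ to the part $w$ is a coloring of $W_k$. Suppose the slot connector is unequal. It is the same pair of edges as the corresponding connector of $w$, so that connector is unequal too. By \cref{lem:negator-law}(a) it is then the unique unequal connector of $w$, and by \cref{lem:negator-law}(b) the $k-2$ residual edges of $w$ carry exactly the $k-2$ colors not on that connector. Those residual edges are exactly the residual edges at $S^z_j$, so by~(a) their colors are the complement of $\Pi^z_j$. Taking complements in the $k$-element palette, the two connector colors form $\Pi^z_j$.

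The main point to check carefully is the bookkeeping in~(b). First, the end negators $w_0$ and $w_{q_f-1}$ are distinct, since $q_f\ge2$, so each is attached to the right triangle vertex. Second, when $f$ is a parallel edge or its two ends lie at the same type of gate, the slot $j$ at $z$ must still correspond to exactly one negator end whose residual edges go to $S^z_j$. This holds by construction in \cref{def:network}, since every source edge gets its own wire and slots.
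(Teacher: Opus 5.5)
Your proposal is correct and follows essentially the same argument as the paper. Properness at $S^z_j$ and the three distinct triangle colors give~(a). Applying \cref{lem:negator-law}(a) and~(b) to the end negator of the wire at slot $j$, whose residual edges are attached to $S^z_j$, then gives~(b) by complementing the residual colors.
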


\begin{proof}
We use properness at the vertices of the triangle for~(a), and \cref{lem:negator-law} for the negator whose residual edges are attached to $S^z_j$ for~(b).
\begin{enumerate}[(a)]
\item Any two edges of the triangle have a common end, so the three triangle edges have three distinct colors.
For each slot $j$, let $h_j$ be the color of the triangle edge that is not incident with $S^z_j$.
Then $\Pi^z_j=\{h_0,h_1,h_2\}\setminus\{h_j\}$, and these three color pairs are distinct.
The $k$ edges at $S^z_j$ are its two triangle edges and $k-2$ residual edges, and they have distinct colors, so the residual edges have the $k-2$ colors outside $\Pi^z_j$.
\item By \cref{def:network}, the connector at slot $j$ is also a connector of a negator $w$ of a wire, and the residual edges of $w$ are attached to $S^z_j$.
This connector is unequal, so by \cref{lem:negator-law}(a) it is the unequal connector of $w$, and by \cref{lem:negator-law}(b) the residual edges of $w$ have the $k-2$ colors that do not occur on it.
By~(a), these are the colors outside $\Pi^z_j$, so the two colors of the connector form the color pair $\Pi^z_j$.
\end{enumerate}
\end{proof}

With the triangle, the gates obey the constraints of their vertices.

\begin{lemma}\label{lem:gate-laws}
Let $c$ be a coloring of $F^k_X$, and let $z$ be a vertex of $X$.
\begin{enumerate}[(a)]
\item If $z$ is an $\mathrm{OR}$ vertex, then some slot of the gate at $z$ represents $1$.
\item If $z$ is an $\mathrm{AND}$ vertex and slot $0$ of its gate represents $0$, then slots $1$ and $2$ represent $1$.
\end{enumerate}
\end{lemma}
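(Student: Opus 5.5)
The plan is to unwind the slot definitions into equality conditions on specific dangling edges. I then chain \cref{lem:negator-law}(a) through the identifications of \cref{def:gates}, use properness at the new vertices $T,x_0,x_1$, and in the $\mathrm{AND}$ case finish with \cref{lem:triangle}(b). Recall that a slot of an $\mathrm{OR}$ gate represents $1$ exactly when it is equal. In the $\mathrm{AND}$ gate, slots $0,1$ represent $1$ exactly when unequal, and slot $2$ represents $1$ exactly when equal.

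For~(a), I would argue by contradiction and suppose all three slots of the $\mathrm{OR}$ gate are unequal. Each slot is, up to the order of its coordinates, the output connector of $g_j$, and reversing the order of slot $0$ does not change its equality type. So by \cref{lem:negator-law}(a) the input connector of every $g_j$ is equal. The identifications $b(g_0)=a(g_1)$ and $b(g_1)=a(g_2)$ then chain these equalities, so $a(g_0)$ and $b(g_2)$ receive the same color. Both edges are incident with $T$, which contradicts properness. No triangle is needed here.

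For~(b), suppose slot $0$ represents $0$, so that $c(s_0)=c(b(g_0))$. I would first handle slot $1$. Properness at $x_0$ gives $c(a(g_0))\ne c(s_0)=c(b(g_0))$, so the input of $g_0$ is unequal. By \cref{lem:negator-law}(a) its output is then equal, that is, $c(c(g_0))=c(d(g_0))$. Properness at $x_1$ gives $c(s_1)\neq c(d(g_0))=c(c(g_0))$. Hence slot $1$ is unequal and represents $1$.

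The remaining step, and the main obstacle, is slot $2$; this is where the triangle is essential. I would suppose slot $2$, the output $(c(g_1),d(g_1))$, is unequal. Slot $2$ is identified with an end connector of a wire whose negator has its residual edges attached to $S^z_2$. Hence \cref{lem:triangle}(b) gives $\{c(c(g_1)),c(d(g_1))\}=\Pi^z_2$. Since this output is the unequal connector of $g_1$, \cref{lem:negator-law}(b) says the $k-2$ residual edges of $g_1$ carry exactly the colors outside $\Pi^z_2$. These residual edges all lie at $x_1$, so properness at $x_1$ forces the two remaining edges there, $d(g_0)$ and $s_1$, to carry exactly the pair $\Pi^z_2$.

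On the other hand, slot $1$ is unequal, so \cref{lem:triangle}(b) gives $\{c(s_1),c(c(g_0))\}=\Pi^z_1$. Using $c(c(g_0))=c(d(g_0))$, this pair is also $\{c(s_1),c(d(g_0))\}=\Pi^z_2$. That contradicts \cref{lem:triangle}(a), which says $\Pi^z_1\ne\Pi^z_2$. Therefore slot $2$ is equal and represents $1$.
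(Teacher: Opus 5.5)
Your proof is correct and follows the paper's argument essentially step for step. In~(a) you use the same chain of equal input connectors ending in a clash at $T$, and in~(b) the same properness at $x_0$ and $x_1$, \cref{lem:negator-law}, and \cref{lem:triangle} at slots $1$ and $2$; the only cosmetic difference is that you deduce the pair on $d(g_0),s_1$ from the residual colors of $g_1$, whereas the paper deduces the residual colors from that pair, and both yield the forbidden $\Pi^z_1=\Pi^z_2$.
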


\begin{proof}
We use a contradiction in each part, using \cref{lem:negator-law}(a) for the negators of the gate and properness at its new vertices.
\begin{enumerate}[(a)]
\item Suppose to the contrary that no slot of the gate at $z$ represents $1$.
Then the output connectors of $g_0$, $g_1$ and $g_2$ are unequal, so by \cref{lem:negator-law}(a) their input connectors are equal.
The second edge of the input connector of $g_0$ is the first edge of the input connector of $g_1$, and the second edge of the input connector of $g_1$ is the first edge of the input connector of $g_2$, so all edges of the three input connectors have the same color.
Among them are the first edge of the input connector of $g_0$ and the second edge of the input connector of $g_2$, which are two edges at $T$ and therefore have distinct colors, a contradiction.
\item Suppose that slot $0$ represents $0$, so $c(s_0)=c(b(g_0))$.
Properness at $x_0$ gives $c(a(g_0))\ne c(b(g_0))$.
The input of $g_0$ is unequal, so its output is equal by \cref{lem:negator-law}(a).
Properness at $x_1$ then gives $c(s_1)\ne c(d(g_0))=c(c(g_0))$.
Thus, slot $1$ is unequal and represents $1$.
The residual edges of $g_1$ at $x_1$ have exactly the colors outside the pair on $(s_1,c(g_0))$, which is $\Pi^z_1$ by \cref{lem:triangle}(b).
If slot $2$ were unequal, \cref{lem:negator-law}(b) for $g_1$ and \cref{lem:triangle}(b) would give these same residual edges exactly the colors outside $\Pi^z_2$.
These two descriptions would imply $\Pi^z_1=\Pi^z_2$, contrary to \cref{lem:triangle}(a).
Hence, slot $2$ is equal and also represents $1$.
\end{enumerate}
\end{proof}

\subsection{The Legal Configuration of a Coloring}\label{sec:projection-map}

The local gate laws of \cref{sec:laws} are now in place.
We read each source direction from the bits at its two wire ends and show that the resulting orientation is legal.

Each negator reverses the equality type, and the wire length accounts for the endpoint polarities.
The two ends therefore read opposite bits.
Directing the source edge toward the end that reads $1$ gives an orientation whose local states satisfy the gate laws.
Hence, the orientation is a legal configuration.

\begin{lemma}\label{lem:projection}
Let $c$ be a coloring of $F^k_X$.
For every edge $f$ of $X$ with ends $u$ and $v$, slot $s_u(f)$ of the gate at $u$ represents $1$ exactly when slot $s_v(f)$ of the gate at $v$ represents $0$.
Consequently, directing every edge $f$ toward the end whose slot represents $1$ defines an orientation $x_c$ of $X$, and $x_c$ is a legal configuration.
\end{lemma}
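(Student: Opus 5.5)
The proof has two parts: a parity computation along each wire, then a legality check at each gate using \cref{lem:gate-laws}.

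\textbf{Opposite bits at the two ends of a wire.} Fix an edge $f$ with first end $u$ and second end $v$, and let $\varepsilon_i$ be the equality indicator of the connector $\kappa_i$ of its wire under $c$. The restriction of $c$ to each negator $w_i$ is a coloring of $W_k$, and its input is $\kappa_i$ and its output is $\kappa_{i+1}$. So \cref{lem:negator-law}(a) gives $\varepsilon_{i+1}=\varepsilon_i\oplus1$, and by induction $\varepsilon_{q_f}=\varepsilon_0\oplus q_f$. The connectors $\kappa_0$ and $\kappa_{q_f}$ are the slots $s_u(f)$ and $s_v(f)$ with their ordered coordinates. Equality type does not depend on the order of the two edges, so the reversed order at slot $0$ of the $\mathrm{OR}$ gate plays no role here. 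The represented bits are therefore $\varepsilon_0\oplus p_{\theta(u)}(s_u(f))$ at $u$ and $\varepsilon_0\oplus q_f\oplus p_{\theta(v)}(s_v(f))$ at $v$. Their sum modulo two is $q_f\oplus p_{\theta(u)}(s_u(f))\oplus p_{\theta(v)}(s_v(f))$. This equals $1$ in both cases of \cref{def:network}: when the polarities agree, $q_f=3$ is odd, and when they differ, $q_f=2$ is even. Hence exactly one end reads $1$, and $x_c$ is well defined. Parallel edges cause no trouble, since each has its own wire and its own slots.

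\textbf{Legality.} By construction, the local state of $x_c$ at $z$ has bit $j$ equal to $1$ exactly when slot $j$ of the gate at $z$ represents $1$, i.e., exactly when the edge in slot $j$ points toward $z$. At an $\mathrm{OR}$ vertex, \cref{lem:gate-laws}(a) gives some incoming edge, and this supplies weight $2$. At an $\mathrm{AND}$ vertex, the blue edge sits in slot $0$. If that slot represents $1$, the blue edge is incoming and supplies weight $2$. Otherwise \cref{lem:gate-laws}(b) makes both red edges incoming, supplying weight $1+1$. So every vertex has incoming weight at least $2$, and $x_c$ is a legal configuration.

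The only delicate step is checking that the slot orderings are consistent with how connectors were identified. I expect no real obstacle, because equality indicators are symmetric in the two coordinates of a connector.
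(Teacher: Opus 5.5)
Your proof is correct and follows essentially the same route as the paper's. Like the paper, you propagate the equality indicators along the wire with Lemma~\ref{lem:negator-law}(a) and get odd parity from the choice of $q_f$. You then read legality off Lemma~\ref{lem:gate-laws} at each $\mathrm{OR}$ and $\mathrm{AND}$ vertex.
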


\begin{proof}
We follow the equality types of the connectors along the wire of $f$, and then we use \cref{lem:gate-laws} at every vertex.
Let $u$ be the first end and $v$ the second end of $f$, let $q=q_f$, and let $w_0,\dots,w_{q-1}$ and $\kappa_0,\dots,\kappa_q$ be the negators and the connectors of the wire of $f$ as in \cref{def:network}.
Then $\kappa_0$ is the connector at slot $s_u(f)$ of the gate at $u$, and $\kappa_q$ is the connector at slot $s_v(f)$ of the gate at $v$.
For $0\leq i\leq q$, let $\varepsilon_i=1$ if $\kappa_i$ is equal and $\varepsilon_i=0$ otherwise.
By \cref{lem:negator-law}(a), applied to the restriction of $c$ to $w_i$, we have $\varepsilon_{i+1}=1-\varepsilon_i$ for $0\leq i<q$, so $\varepsilon_q\equiv\varepsilon_0+q\pmod 2$.
By the definition of the represented bits, slot $s_u(f)$ represents the bit congruent to $\varepsilon_0+p_{\theta(u)}(s_u(f))$ modulo $2$, and slot $s_v(f)$ represents the bit congruent to $\varepsilon_q+p_{\theta(v)}(s_v(f))$ modulo $2$.
The sum of the two bits is therefore congruent to $q+p_{\theta(u)}(s_u(f))+p_{\theta(v)}(s_v(f))$ modulo $2$, and this number is odd by the choice of $q_f$ in \cref{def:network}.
Hence, exactly one of the two slots represents $1$.

The orientation $x_c$ is therefore well defined, and its local state at every vertex $z$ is the local state represented by the gate at $z$.
If $z$ is an $\mathrm{OR}$ vertex, then this state is not $000$ by \cref{lem:gate-laws}(a).
If $z$ is an $\mathrm{AND}$ vertex, then its first bit is $1$, or the state is $011$ by \cref{lem:gate-laws}(b).
In both cases, the local state is ordinarily legal for the type of $z$, so $x_c$ is a legal configuration.
Only now do we apply the global protection promise: at every $\mathrm{OR}$ vertex, $x_c$ belongs to the five states listed in \cref{sec:ncl}.
\end{proof}

We call $x_c$ the \emph{projection} of $c$.

\subsection{One Kempe Change}\label{sec:reflection}

By \cref{sec:projection-map}, every coloring has a legal projection.
We now show that a Kempe exchange changes that projection only by reversing source edges with no common end.
These edges can be reversed one at a time through legal configurations.

We allow an exchange on several Kempe chains of the same color pair, as the final graph construction will need.
If a slot changes its bit, exactly one of its two edges is recolored.
Its connector therefore has the exchanged pair before or after the change, when it is unequal.
The residual edges at its triangle vertex avoid this pair and keep their colors.
Thus, the triangle vertex has the exchanged pair in both colorings, and no other slot at that gate can change its bit.
The reversed source edges consequently have no common end.
Reversing them one at a time leaves each vertex in either its old or its new legal local state.

\begin{lemma}\label{lem:guard}
Let $c$ be a coloring of $F^k_X$, let $P$ be a color pair, and let $c'$ be obtained from $c$ by exchanging the colors of $P$ on a set of Kempe chains of $c$ for $P$.
Then, at every vertex of $X$, at most one slot of the gate represents different bits under $c$ and under $c'$.
\end{lemma}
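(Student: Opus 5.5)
The plan follows the sketch given just before the statement: show that a slot whose bit changes forces the exchanged pair $P$ to equal the triangle pair at its slot, and then use \cref{lem:triangle}(a) to rule out a second such slot at the same gate. Fix a vertex $z$ of $X$ and a slot $j$ whose represented bit differs under $c$ and $c'$. The polarity $p_{\theta(z)}(j)$ does not depend on the coloring, so the equality type of the connector $(f_1,f_2)$ at slot $j$ changes. The exchange only swaps the two colors of $P$ on some edges, so $f_1,f_2$ cannot keep their colors. If both were recolored, each would be swapped within $P$ and equality would be preserved. If exactly one is recolored, say $f_1$ goes from $a$ to $b$ with $P=\{a,b\}$, then the connector is unequal in whichever of $c,c'$ it is not equal. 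In the equal coloring $f_2$ has the color of $f_1$ there, so $c(f_2)=c'(f_2)\in P$. Hence in the coloring where the connector is unequal, its two colors are exactly $P$. By \cref{lem:triangle}(b), read in that coloring, $\Pi^z_j=P$.

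The next step transfers this identity to both colorings. By \cref{lem:triangle}(a), in the coloring where $\Pi^z_j=P$, the $k-2$ residual edges at $S^z_j$ carry exactly the colors outside $P$. The exchange never recolors an edge whose color lies outside $P$, so these residual edges keep their colors. At $S^z_j$ the remaining two edges are the triangle edges, so in the other coloring their colors are again the complement, namely $P$. Thus $\Pi^z_j=P$ under both $c$ and $c'$.

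Now suppose two distinct slots $j\neq j'$ at $z$ both change their bits. Then, under $c$ say, $\Pi^z_j=P=\Pi^z_{j'}$, contradicting the pairwise distinctness in \cref{lem:triangle}(a). This proves the lemma.

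The main point needing care is the first step. One must argue that a change of equality type with colors swapped only inside $P$ forces exactly one connector edge to be recolored and the other to already lie in $P$. This is a short case check, which I would write out by considering whether each of $f_1,f_2$ lies on an exchanged chain. Note that working with an arbitrary set of Kempe chains, rather than one chain, causes no difficulty: the argument uses only that colors outside $P$ are fixed and colors in $P$ are swapped or fixed edgewise.
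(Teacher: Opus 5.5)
Your proposal is correct and follows the paper's proof essentially step for step: exactly one connector edge is recolored, so the unequal connector carries $P$ and \cref{lem:triangle}(b) gives $\Pi^z_j=P$; the residual edges at $S^z_j$ avoid $P$ and keep their colors, so $\Pi^z_j=P$ under both colorings; and \cref{lem:triangle}(a) then excludes a second slot. The only step the paper adds is the preliminary remark that $c'$ is again a coloring of $F^k_X$, since the exchange preserves the set of edges colored in $P$ and is therefore a sequence of moves; you use this implicitly whenever you apply \cref{lem:triangle} or properness at $S^z_j$ to $c'$, so you should state it.
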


\begin{proof}
We show that every slot whose bit changes has the color pair $P$ at its triangle vertex under $c$, and then we apply \cref{lem:triangle}(a).
The exchange maps the colors of $P$ to colors of $P$ and keeps the other colors, so the edges colored in $P$ are the same under $c$ and under $c'$, and the Kempe chains of $c$ and of $c'$ for $P$ are the same.
Hence, $c'$ is obtained from $c$ by moves, one for each Kempe chain of the set, and it is a coloring of $F^k_X$.
Since this exchange is an involution on the same Kempe chains, $c$ is obtained from $c'$ in the same way, so $c$ and $c'$ play symmetric roles below.

Let $z$ be a vertex of $X$, and let $j$ be a slot at which the gate at $z$ represents different bits under $c$ and under $c'$.
The bit of a slot depends only on the equality type of its connector $(f_1,f_2)$, so this connector is unequal under one of $c$ and $c'$, say under $d$, and equal under the other, say under $d'$.
The exchange changes the color of an edge by the transposition of the two colors of $P$ or not at all, and this transposition is a bijection, so if both $f_1$ and $f_2$ or neither of them changed their colors, then the connector would remain unequal.
Hence, exactly one of them changes its color; by symmetry, let it be $f_1$.
Then $d(f_1)\in P$, and $d'(f_1)$ is the other color of $P$, which is $d'(f_2)=d(f_2)$, so $\{d(f_1),d(f_2)\}=P$.
By \cref{lem:triangle}(b) applied to $d$, the set $\Pi^z_j$ under $d$ is $P$, and by \cref{lem:triangle}(a) the residual edges at $S^z_j$ have the colors outside $P$ under $d$.
These residual edges lie in no Kempe chain for $P$, so they have the same colors under $c$ and under $c'$, and by \cref{lem:triangle}(a) applied to $c$, the set $\Pi^z_j$ under $c$ is also $P$.
If two distinct slots $j$ and $j'$ of the gate at $z$ changed their bits, then $\Pi^z_j$ and $\Pi^z_{j'}$ under $c$ would both be $P$, which contradicts \cref{lem:triangle}(a).
\end{proof}

By \cref{lem:guard}, the source edges whose directions change form a matching, and the next lemma reverses them one at a time.

\begin{lemma}\label{lem:reflection}
Let $c$ and $c'$ be as in \cref{lem:guard}.
Then no two edges of $X$ that are directed differently by $x_c$ and $x_{c'}$ have a common end, and reversing these edges one at a time, in any order, leads from $x_c$ to $x_{c'}$ through legal configurations.
Consequently, $x_d\sim_X x_{d'}$ for all colorings $d$ and $d'$ of $F^k_X$ with $d\sim_{F^k_X}d'$.
\end{lemma}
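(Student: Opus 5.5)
The plan is to derive everything from \cref{lem:guard} and \cref{lem:projection}, together with the fact that the projection is determined locally by the represented bits at the slots. First, for an edge $f$ of $X$ with ends $u$ and $v$, I note that $x_c$ and $x_{c'}$ direct $f$ differently exactly when slot $s_u(f)$ of the gate at $u$ represents different bits under $c$ and under $c'$. By \cref{lem:projection}, the same then holds at $s_v(f)$. Now suppose two distinct edges $f,f'$ that change direction share an end $z$. Their slots $s_z(f)$ and $s_z(f')$ are distinct; this also covers parallel edges, since they occupy separate slots. So two slots of the gate at $z$ would change their bits, contradicting \cref{lem:guard}. Hence the changed edges form a matching $M$.

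Next, I show that reversing the edges of $M$ one at a time, in any order, keeps every intermediate orientation legal. Take an intermediate orientation $y$ in which some subset $M'\subseteq M$ has already been reversed. At a vertex $z$, at most one edge of $M$ is incident. If none is, the local state at $z$ under $y$ equals the common state under $x_c$ and $x_{c'}$. If exactly one edge $f\in M$ is incident, the local state at $z$ is the state under $x_{c'}$ when $f\in M'$ and the state under $x_c$ otherwise. Legality is a condition on local states at each vertex, and both $x_c$ and $x_{c'}$ are legal by \cref{lem:projection}. Therefore $y$ is legal at every vertex, and each single reversal in the sequence is a legal reversal, ending at $x_{c'}$.

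For the consequence, I take a sequence of moves from $d$ to $d'$ in $F^k_X$. Each move exchanges $P$ on a single Kempe chain, which is a special case of \cref{lem:guard}, so the first part applies to each consecutive pair. This gives $x_{d_i}\sim_X x_{d_{i+1}}$ for each step, and transitivity yields $x_d\sim_X x_{d'}$.

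The only delicate point is the first step: identifying a change of direction with a change of represented bit at each end. This requires that the projection reads the direction from the bit at one end, with the other end determined by \cref{lem:projection}, and that distinct incident edges at $z$, parallel ones included, use distinct slots. Once this is in place, the rest is bookkeeping, so I do not expect a real obstacle.
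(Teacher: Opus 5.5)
Your proposal is correct and follows essentially the same route as the paper. You use \cref{lem:projection} to identify each reversed edge with a changed represented bit at its slot, and \cref{lem:guard} to obtain a matching, handling parallel edges by their distinct slots. You then note that every intermediate local state is either the legal state under $x_c$ or the one under $x_{c'}$, and apply the first part to each single-chain move of a Kempe sequence.
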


\begin{proof}
We show that the edges directed differently form a matching, and then we reverse them one at a time.
Let $R$ be the set of edges of $X$ that are directed differently by $x_c$ and $x_{c'}$.
By \cref{lem:projection}, an edge $f$ with an end $z$ lies in $R$ exactly when slot $s_z(f)$ of the gate at $z$ represents different bits under $c$ and under $c'$.
By \cref{lem:guard}, every vertex of $X$ is therefore an end of at most one edge of $R$.
This matching property also excludes two parallel edges from $R$; the argument uses their distinct slots, not simplicity.
Let $f_1,\dots,f_m$ be the edges of $R$ in any order, let $x^{(0)}=x_c$, and, for $1\leq i\leq m$, let $x^{(i)}$ be obtained from $x^{(i-1)}$ by reversing $f_i$, so that $x^{(m)}=x_{c'}$.
Since at most one edge of $R$ is incident with a vertex $z$ of $X$, the local state of $x^{(i)}$ at $z$ is $x_c|_z$ if no edge of $R$ at $z$ lies among $f_1,\dots,f_i$, and it is $x_{c'}|_z$ otherwise.
Both states are legal by \cref{lem:projection}, so every $x^{(i)}$ is a legal configuration, and every step from $x^{(i-1)}$ to $x^{(i)}$ is a legal reversal.

Finally, if $d\sim_{F^k_X}d'$, then $d'$ is obtained from $d$ by a sequence of moves, and every move exchanges the colors of a color pair on one Kempe chain.
By the first part, the projections of any two consecutive colorings of this sequence are related by $\sim_X$, and hence $x_d\sim_X x_{d'}$.
\end{proof}
 \section{Simulation of Legal Reversals}\label{sec:three}

\Cref{sec:reflection} shows that Kempe reachability implies source reachability; we now prove the reverse implication.
We analyze cubic negators in \cref{sec:w3}, choose canonical colorings in \cref{sec:encoder}, and join the gate paths of \cref{sec:direct-paths} into one global path in \cref{sec:composition}.
Finally, in \cref{sec:colours,sec:lift}, we add fixed colors and lift the moves through digons for every $k\geq3$.

There are four possible local transitions at a protected $\mathrm{OR}$ vertex and five at an $\mathrm{AND}$ vertex; seven explicit path patterns cover all nine.
Until \cref{sec:colours}, the palette is $\{0,1,2\}$.
The source graph $X$ satisfies \cref{cor:tait-source} throughout.

\subsection{Three Colors and the Negator}\label{sec:w3}

The projection argument of \cref{sec:reflection} handled arbitrary colorings; to construct moves, we now need exact routes through $W_3$.
We determine each coloring from its connector states and list the Kempe paths that meet its dangling edges.

At a vertex, the colors of two edges force the color of the third.
Applying this rule along the internal paths gives the coloring and its Kempe chains.
When the connectors use only two colors, two paths join their corresponding edges.
Exchanging along the path joining their second edges changes both equality types.
For every other color pair, we also determine which dangling edges a Kempe chain joins.
Finally, turning the negator around exchanges its connectors, so a wire can be read from either end.

For a color pair $P=\{a,b\}\subseteq\{0,1,2\}$, we write $\overline{P}=\overline{ab}$ for its third color.
For an unequal state $(a,b)$ of a connector, the color $\overline{ab}$ is the color \emph{missing} from the connector, in analogy with a color missing at a vertex~\cite{Mohar2007}. For $k=3$, a negator has exactly one residual edge, and its color under a coloring is the \emph{residual color} of the negator; by \cref{lem:negator-law}(b), it is the color missing from the unequal connector.
For a copy $w$ of $W_3$ and an edge name $x$ of \cref{def:inverting}, we write $x(w)$ for the edge $x$ of $w$.
In \cref{lem:w3,lem:reversal}, which concern a single negator, we write $x$ for this edge; elsewhere we write $x(w)$.
This convention distinguishes the edge names $s$, $u$, $v$, $p$ and $q$ from a slot $s$, the ends $u$ and $v$ of an edge of $X$, the map $p_\theta$ of \cref{sec:network} and the number $q_f$.

The following rule describes how a coloring passes along a path through two internal vertices, each of which is incident with one further edge; it is the basic step in the analysis of $W_3$.

\begin{lemma}\label{lem:path-rule}
Let $c$ be a coloring of a multipole of degree three, let $g_0,g_1,g_2,g_3$ be a path in which $g_1$ and $g_2$ are internal vertices, and let $\xi$ and $\xi'$ be the colors of the third edges at $g_1$ and at $g_2$, that is, of the edges at $g_1$ and $g_2$ that do not lie on the path.
If $\xi\neq\xi'$, then the edges $g_0g_1$, $g_1g_2$ and $g_2g_3$ have the colors $\xi'$, $\overline{\xi\xi'}$ and $\xi$, respectively.
If $\xi=\xi'$, then the edges $g_0g_1$ and $g_2g_3$ have a common color $\zeta\neq\xi$, and the edge $g_1g_2$ has the color $\overline{\xi\zeta}$.
Conversely, every coloring of the three path edges and the two third edges as in one of these two cases gives distinct colors to the three edges at $g_1$ and to the three edges at $g_2$.
\end{lemma}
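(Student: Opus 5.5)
The plan is a direct local case analysis using only properness at $g_1$ and $g_2$. In a coloring of a multipole of degree three with palette $\{0,1,2\}$, each internal vertex carries exactly one edge of each color. So at $g_1$ the two path edges $g_0g_1$ and $g_1g_2$ carry the two colors other than $\xi$. At $g_2$ the edges $g_1g_2$ and $g_2g_3$ carry the two colors other than $\xi'$. Everything follows from tracking the color of the middle edge $g_1g_2$.

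\textbf{Case $\xi\ne\xi'$.} The middle edge avoids $\xi$, because it lies at $g_1$, and avoids $\xi'$, because it lies at $g_2$. Hence its color is the third color $\overline{\xi\xi'}$. At $g_1$, the remaining edge $g_0g_1$ must then take the color outside $\{\xi,\overline{\xi\xi'}\}$, which is $\xi'$. Symmetrically, $g_2g_3$ takes the color outside $\{\xi',\overline{\xi\xi'}\}$, which is $\xi$.

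\textbf{Case $\xi=\xi'$.} Let $\zeta$ be the color of $g_0g_1$; it differs from $\xi$ by properness at $g_1$. Then the middle edge has the color $\overline{\xi\zeta}$. At $g_2$ the third edge has color $\xi$ and the middle edge has color $\overline{\xi\zeta}$, so $g_2g_3$ is forced to have color $\zeta$.

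\textbf{Converse.} In each case the colors at $g_1$ are $\{\xi,\xi',\overline{\xi\xi'}\}$ or $\{\xi,\zeta,\overline{\xi\zeta}\}$, and likewise at $g_2$. In both cases these are all three colors, so they are distinct.

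The only point needing care is that $g_0$ and $g_3$ may be terminal vertices, or may coincide with each other. This does not matter, because only properness at the internal vertices $g_1$ and $g_2$ is used; the vertices $g_0$ and $g_3$ impose no constraint in the statement. I expect no real obstacle. The argument is a short forced-color computation, and the main effort is just stating the two cases cleanly.
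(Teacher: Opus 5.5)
Your proposal is correct and follows essentially the paper's argument: a forced-color case analysis that uses only properness at $g_1$ and $g_2$, with the converse checked by listing the three colors at each of these two vertices. The only cosmetic difference is the order: the paper first notes that $\{c(g_0g_1),\xi\}=\{c(g_2g_3),\xi'\}$, since both sets are the complement of the middle edge's color, whereas you determine the color of $g_1g_2$ first.
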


\begin{proof}
In order to prove both cases and the converse, we compare the two color sets that properness gives at $g_1$ and at $g_2$.
At $g_1$, the colors of $g_0g_1$ and of the third edge are the two colors other than the color of $g_1g_2$, and at $g_2$ the same holds for $g_2g_3$ and the third edge at $g_2$.
Hence, the color sets $\{c(g_0g_1),\xi\}$ and $\{c(g_2g_3),\xi'\}$ are equal.
If $\xi\neq\xi'$, then $\xi'$ lies in the first set and differs from $\xi$, so $c(g_0g_1)=\xi'$, and in the same way $c(g_2g_3)=\xi$, while $c(g_1g_2)$ differs from both and is $\overline{\xi\xi'}$.
If $\xi=\xi'$, then $c(g_0g_1)=c(g_2g_3)$, this color $\zeta$ differs from $\xi$, and $c(g_1g_2)$, which differs from $\xi$ and $\zeta$, is $\overline{\xi\zeta}$.
For the converse, in the first case the three edges at $g_1$ have the colors $\xi'$, $\overline{\xi\xi'}$ and $\xi$, and so do the three edges at $g_2$.
In the second case, the three edges at $g_1$ and the three edges at $g_2$ have the colors $\zeta$, $\overline{\xi\zeta}$ and $\xi$.
\end{proof}

The next lemma gives every coloring of $W_3$ explicitly, in the names of \cref{def:inverting}, and it describes, for every color pair, the Kempe chains inside the negator that end at its dangling edges.
In its formulas, the name of an edge stands for the color of that edge.

\begin{lemma}\label{lem:w3}
Let $(a,b)$ and $(c,d)$ be states of the input and output connectors of $W_3$ of opposite equality types, and let $h$ be the color missing from the unequal one.
\begin{enumerate}[(a)]
\item The negator $W_3$ has exactly one coloring with these connector states; its internal edges have the colors
\begin{equation*}
(t,v,u)=\begin{cases}(d,b,\overline{bd})&\text{if } b\neq d,\\(h,h,\overline{bh})&\text{if } b=d,\end{cases}\qquad
(s,q,r)=\begin{cases}(a,c,\overline{ac})&\text{if } a\neq c,\\(h,h,\overline{ah})&\text{if } a=c,\end{cases}
\end{equation*}
and $o=\overline{ts}$, $p=\overline{vq}$, while the residual edge $e$ has the color $h$.
\item If the four connector colors lie in a color pair $P$, so that $h=\overline{P}$, then the edges colored in $P$ form two paths, the \emph{first path} joining the first edges $a$ and $c$ of the two connectors and the \emph{second path} joining their second edges $b$ and $d$.
Exchanging the colors of $P$ on the second path gives the coloring of~(a) for the states $(a,b')$ and $(c,d')$, where $b'$ and $d'$ are the colors of $P$ other than $b$ and $d$.
\item If $P$ is a color pair that does not contain all four connector colors, let $\mu$ be the color of the two edges of the equal connector.
Every Kempe chain for $P$ that contains a dangling edge is a path between two dangling edges colored in $P$, and these paths are as follows.
If $h\notin P$, then $\mu=h$, and one path joins the two edges of the unequal connector.
If $h\in P$, then one path joins the edge of the unequal connector colored in $P$ to the residual edge $e$, and, if $\mu\in P$, another path joins the two edges of the equal connector.
\end{enumerate}
\end{lemma}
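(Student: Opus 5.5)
For $k=3$ the negator $W_3$ is $H_1$. It has the seven internal vertices $B,D,M,L,R,A,C$, the single internal edges $u=BD$, $t=BL$, $v=DR$, $o=ML$, $p=MR$, $q=AR$, $s=LC$, $r=AC$, and the dangling edges $a$ at $A$, $b$ at $B$, $c$ at $C$, $d$ at $D$, $e$ at $M$. The plan is to apply \cref{lem:path-rule} to two paths of the hexagon. The path $b$–$B$–$D$–$d$ has internal vertices $B,D$ with third edges $t$ and $v$. The path $a$–$A$–$C$–$c$ has internal vertices $A,C$ with third edges $q$ and $s$.

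\textbf{Part (a).} I will treat the two cases of the path rule separately.

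If $b\neq d$, the path rule read backwards gives $t=d$, $v=b$ and $u=\overline{bd}$. If $b=d$, it gives $t=v=\xi$ for some $\xi\neq b$ and $u=\overline{b\xi}$, with $\xi$ still to be found. The same reasoning on the $a$–$c$ path gives $(s,q,r)$, with $s=q=\xi'$ when $a=c$.

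Properness at $L$ forces $o=\overline{ts}$, and properness at $R$ forces $p=\overline{vq}$. At $M$, the colors $o,p,e$ must be distinct. This single condition is what pins down $\xi$ and $\xi'$ and shows they equal $h$.

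Take the case where the input is unequal and the output is equal, so $a\neq b$ and $c=d$. If $b\neq d$ and $a\neq c$, we get $o=\overline{ds}=\overline{da}$ and $p=\overline{bq}=\overline{bc}$. Since $c=d$, we need $\overline{da}\neq\overline{bd}$, which holds because $a\neq b$. The degenerate subcases, where $b=d$ or $a=c$, bring in the unknown $\xi$. Here I will check the finitely many options and show that $o\neq p$ forces $\xi=h$, by a short case check over three colors. Then $e$ is the third color at $M$, and Parity (\cref{lem:negator-law}(b)) confirms $e=h$. The symmetric case is handled the same way. Uniqueness follows because every edge is forced, and existence follows from the converse half of the path rule together with the check at $M$.

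\textbf{Part (b).} When all four connector colors lie in $P$, we have $h=\overline P$ and $e=h$. I will trace the $P$-colored edges from the formulas in (a). Every internal vertex meets exactly two of them, and the edge $e$ is not one of them. So the $P$-edges form paths joining the four dangling edges $a,b,c,d$ in pairs, plus possibly some cycles.

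From the explicit colors, the path starting at $a$ runs through $A$ and then along $q$ or $r$, and I expect it to exit at $c$. Likewise the path from $b$ should exit at $d$. I will verify this directly in each case.

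Exchanging colors on the second path turns $b,d$ into $b',d'$ and leaves $a,c$ fixed. The result is a coloring with the new connector states. By the uniqueness in (a), it is the stated coloring.

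\textbf{Part (c).} If $h\notin P$, then $P$ avoids $e$. Since two colors of $P$ appear on the unequal connector, at most two dangling edges are colored in $P$: the two edges of the unequal connector. By the parity of endpoints, the $P$-chain starting at one of them must end at the other.

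If $h\in P$, the dangling edges colored in $P$ are $e$, one edge of the unequal connector, and both equal-connector edges if $\mu\in P$. It then remains to decide how these are paired. I will settle this by tracing from $e$ through $M$ using the explicit colors of (a).

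I expect this pairing in (c) when $\mu\in P$, together with the degenerate $\xi$-cases of (a), to be the main obstacle. Both come down to bookkeeping over a handful of color configurations, and I would organize that bookkeeping using the left–right symmetry of $H_1$ to cut down the cases.
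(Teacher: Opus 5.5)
Your plan follows the paper's proof. It uses the path rule around the edges $u=BD$ and $r=AC$: your reading with $b,u,d$ and $a,r,c$ as path edges and $t,v$ and $q,s$ as third edges gives the same equations as the paper's paths $L,B,D,R$ and $L,C,A,R$. It then uses properness at $L$, $R$ and $M$, explicit tracing of the $P$-colored edges in (b) and (c), and the connector-exchanging symmetry $\rho$ of \cref{lem:reversal} to cut down cases.

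One correction to (a): in the degenerate cases it is properness at $L$ and $R$ (that is, $t\neq s$ and $v\neq q$), not the condition at $M$, that forces $\xi=\overline{ac}=h$ when $b=d$ and $\xi'=\overline{bd}=h$ when $a=c$. This uses $b\in\{a,c\}$, respectively $a\in\{b,d\}$, because exactly one connector is equal; after that $o\neq p$ holds automatically, so existence also requires these checks at $L$ and $R$, not just the one at $M$.
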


\begin{proof}
We derive the colors from \cref{lem:path-rule} on the two paths $L,B,D,R$ and $L,C,A,R$, and we check the result at the seven internal vertices.
By \cref{def:inverting} with $k=3$, the path $L,B,D,R$ consists of the edges $t$, $u$ and $v$, and the third edges at its internal vertices $B$ and $D$ are $b$ and $d$; the path $L,C,A,R$ consists of the edges $s$, $r$ and $q$, and the third edges at $C$ and $A$ are $c$ and $a$.
\begin{enumerate}[(a)]
\item Let $\varphi$ be a coloring of $W_3$ with the given connector states.
By \cref{lem:path-rule} on the path $L,B,D,R$, $(t,v,u)=(d,b,\overline{bd})$ if $b\neq d$, while $t=v=\zeta$ and $u=\overline{b\zeta}$ for some color $\zeta\neq b$ if $b=d$; on the path $L,C,A,R$, $(s,q,r)=(a,c,\overline{ac})$ if $a\neq c$, while $s=q=\zeta'$ and $r=\overline{a\zeta'}$ for some color $\zeta'\neq a$ if $a=c$.
Properness at $L$, $R$ and $M$ gives $o=\overline{ts}$, $p=\overline{vq}$ and $e=\overline{op}$.
The two connectors have opposite equality types, so $a=c$ and $b=d$ do not both hold, and three cases remain.
\begin{itemize}
\item \textbf{Case 1: $b\neq d$ and $a\neq c$.}
The formulas for $t,v,u,s,q,r$ are those of the statement, and it remains to show that $e=h$.
If $(a,b)$ is equal, then $a=b$ differs from $c$ and from $d$, so $a=b=h$, and therefore $o=\overline{dh}=c$, $p=\overline{hc}=d$ and $e=\overline{cd}=h$.
If $(c,d)$ is equal, then $c=d=h$, and therefore $o=\overline{ha}=b$, $p=\overline{bh}=a$ and $e=\overline{ba}=h$.
\item \textbf{Case 2: $b=d$ and $a\neq c$.}
Properness at $L$ and at $R$ gives $\zeta\neq s=a$ and $\zeta\neq q=c$, so $\zeta=\overline{ac}$.
Exactly one of the states $(a,b)$ and $(c,b)$ is equal, so $b\in\{a,c\}$, and the color missing from the unequal one is $\overline{ac}$ in both cases; hence $\zeta=h$, as in the statement.
Then $o=\overline{ha}=c$, $p=\overline{hc}=a$ and $e=\overline{ca}=h$.
\item \textbf{Case 3: $a=c$ and $b\neq d$.}
In the same way, $\zeta'=\overline{bd}=h$, and then $o=\overline{dh}=b$, $p=\overline{bh}=d$ and $e=\overline{bd}=h$.
\end{itemize}
Hence, every coloring with the given connector states is the one displayed in the statement, which proves uniqueness.
Conversely, the displayed coloring is proper at $B$, $D$, $A$ and $C$ by the converse part of \cref{lem:path-rule}, since $\zeta=h\neq b$ in Case~2 and $\zeta'=h\neq a$ in Case~3, and it is proper at $L$, $R$ and $M$, because the lists above show $t\neq s$, $v\neq q$ and $o\neq p$ in every case.
\item Let $P$ be a color pair that contains $a$, $b$, $c$ and $d$.
The unequal connector uses both colors of $P$, so $h=\overline{P}$.
In Case~1 of the proof of~(a), $h$ is one of the four connector colors, so Case~2 or Case~3 applies.
In Case~2, the edges with a color other than $h$ are $a$, $b$, $c$, $d$, $u$, $s$, $q$, $o$ and $p$, and they form the path $a,q,p,o,s,c$ through $A$, $R$, $M$, $L$ and $C$ and the path $b,u,d$ through $B$ and $D$.
In Case~3, they are $a$, $b$, $c$, $d$, $t$, $v$, $r$, $o$ and $p$, and they form the path $a,r,c$ through $A$ and $C$ and the path $b,t,o,p,v,d$ through $B$, $L$, $M$, $R$ and $D$.
In both cases, the first path joins $a$ to $c$ and the second joins $b$ to $d$.
Exchanging the colors of $P$ on the second path replaces the connector states by $(a,b')$ and $(c,d')$, which have opposite equality types, because $(a,b)$ is equal exactly when $(a,b')$ is unequal, and the unequal one of them is colored in $P$ and misses $h$.
The result is a coloring of $W_3$ with these states, so by the uniqueness in~(a) it is the coloring of~(a) for them.
\item Let $P$ be a color pair that does not contain all four connector colors; we use~(a) to find the dangling edges colored in $P$, and we follow the Kempe chains from them, according to whether $h$ and $\mu$ lie in $P$.
As noted in \cref{sec:prelim-multipoles}, a Kempe chain that contains a dangling edge is a path whose two end edges are dangling edges colored in $P$.
By~(a), the residual edge $e$ has the color $h$, and the two edges of the unequal connector have the two colors other than $h$.
If $h\notin P$, then $P$ consists of the colors of the unequal connector, so $\mu\notin P$ by the assumption on $P$, and hence $\mu=h$; the only dangling edges colored in $P$ are the two edges of the unequal connector, which are therefore the ends of one path.
If $h\in P$ and $\mu\notin P$, the only dangling edges colored in $P$ are the edge of the unequal connector colored in $P$ and the edge $e$, which are again the ends of one path.
It remains to consider the case in which $h\in P$ and $\mu\in P$, where four dangling edges are colored in $P$.
The map $\rho$ of \cref{lem:reversal} below, whose proof uses only \cref{def:inverting}, exchanges the two connectors, keeps the order of their edges and fixes $e$, so we may assume that the input connector is the equal one, that is, $a=b=\mu$.
In the three cases of the proof of~(a), we list the path through $a$ and the path through the edge of the unequal connector colored in $P$, with the colors that the formulas of~(a) give for $a=b=\mu$.
\begin{itemize}
\item \textbf{Case 1: $b\neq d$ and $a\neq c$.}
Here $\mu=h$, and $t=d$, $v=h$, $u=c$, $s=h$, $q=c$, $r=d$, $o=c$ and $p=d$.
If $P=\{h,c\}$, the two paths are $a,q,v,u,b$ through $A$, $R$, $D$ and $B$, and $c,s,o,e$ through $C$, $L$ and $M$.
If $P=\{h,d\}$, they are $a,r,s,t,b$ through $A$, $C$, $L$ and $B$, and $d,v,p,e$ through $D$, $R$ and $M$.
\item \textbf{Case 2: $b=d$ and $a\neq c$.}
Here $\mu=d$, so $P=\{h,d\}$, and $t=v=h$, $u=c$, $s=d$, $q=c$, $r=h$, $o=c$ and $p=d$; the two paths are $a,r,s,t,b$ through $A$, $C$, $L$ and $B$, and $d,v,p,e$ through $D$, $R$ and $M$.
\item \textbf{Case 3: $a=c$ and $b\neq d$.}
Here $\mu=c$, so $P=\{h,c\}$, and $t=d$, $v=c$, $u=h$, $s=q=h$, $r=d$, $o=c$ and $p=d$; the two paths are $a,q,v,u,b$ through $A$, $R$, $D$ and $B$, and $c,s,o,e$ through $C$, $L$ and $M$.
\end{itemize}
In each list, the edges are colored in $P$, consecutive edges share the vertices named with the list, in this order, and the third edge at each of these vertices is not colored in $P$.
Hence, each list is a Kempe chain for $P$, and the two lists join the dangling edges as claimed.
\end{enumerate}
\end{proof}

Parts~(b) and~(c) say, for every color pair, which dangling edges of a colored negator are joined by Kempe chains inside it, and the gate paths below use precisely these routes.

The symmetry used in the preceding proof is an isomorphism of multipoles, which we define next.
\label{sec:iso}
An \emph{isomorphism} of multipoles is an isomorphism of the underlying multigraphs, that is, a pair of bijections between the vertex sets and between the edge sets that preserves incidence, which maps internal vertices onto internal vertices, and an \emph{automorphism} of a multipole $F$ is an isomorphism from $F$ to itself.
An isomorphism therefore maps terminal vertices onto terminal vertices and dangling edges onto dangling edges.
For an isomorphism $\Theta$ from a multipole $F$ to a multipole $F'$ and a coloring $c$ of $F$, the coloring $\Theta_*c$ of $F'$ gives each edge $\Theta(f)$ the color $c(f)$, and it is proper, since $\Theta$ maps the $k$ edges at an internal vertex to the $k$ edges at its image.
For a color pair $P$, an edge $f$ is colored in $P$ under $c$ exactly when $\Theta(f)$ is colored in $P$ under $\Theta_*c$, so $\Theta$ maps the Kempe chains of $c$ for $P$ onto the Kempe chains of $\Theta_*c$ for $P$.
Moreover, if the exchange of $P$ on a Kempe chain $Y$ turns $c$ into $c'$, then the exchange of $P$ on $\Theta(Y)$ turns $\Theta_*c$ into $\Theta_*c'$.

The negator is symmetric under the exchange of its two connectors, which lets the negators of a wire be read from either end of the wire.

\begin{lemma}\label{lem:reversal}
The map $\rho$ that exchanges $B$ and $D$, $L$ and $R$, $A$ and $C$ and fixes $M$, extended to the terminal vertices, is an automorphism of $W_3$ that exchanges the input and the output connector, keeps the order of the edges in each connector and fixes the residual edge.
\end{lemma}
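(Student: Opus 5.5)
The plan is a direct check against \cref{def:inverting} with $k=3$, where $W_3=H_1$. In this case every multiple edge has multiplicity one: $u=BD$, $t=BL$, $v=DR$, $o=ML$, $p=MR$, $q=AR$, $s=LC$, $r=AC$. The dangling edges are $a$ at $A$, $b$ at $B$, $c$ at $C$, $d$ at $D$ and $e$ at $M$. I would define $\rho$ on the internal vertices as stated. On the terminal vertices, $\rho$ exchanges the terminals of $a$ and $c$ and the terminals of $b$ and $d$, and fixes the terminal of $e$. On edges, $\rho$ sends each edge to the edge joining the images of its ends. The work is to confirm that this edge map is well defined and bijective, that is, that the image of every edge is again an edge of $W_3$ with the same multiplicity.

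The key step is a list of edge images. The hexagon edges map as $u=BD\mapsto DB=u$, $t=BL\mapsto DR=v$, $v\mapsto t$, $s=LC\mapsto RA=q$, $q\mapsto s$ and $r=AC\mapsto CA=r$. The spokes map as $o=ML\mapsto MR=p$ and $p\mapsto o$. The dangling edges map as $a\mapsto c$, $c\mapsto a$, $b\mapsto d$, $d\mapsto b$ and $e\mapsto e$. All multiplicities equal $1$ for $k=3$, since $k-2=k-1-i=i=1$, so each pair of exchanged edges consists of single edges. Hence $\rho$ permutes the edges and preserves incidence. It maps internal vertices to internal vertices, so it is an automorphism of the multipole.

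Finally, I would read off the connector statement from this list. The input connector $(a,b)$ is sent to $(c,d)$, which is the output connector, and $(c,d)$ is sent back to $(a,b)$. In both directions the first edge goes to the first edge and the second to the second, so the order within each connector is kept. The unique residual edge $e$ at $M$ is fixed because $M$ is fixed. There is no real obstacle here. The only point needing care is to state explicitly that the multiplicities coincide for $k=3$; this matters because for general $k$ the multiplicities of $u$, $o$, $p$, $q$ and $s$ differ and the map would fail. The proof uses nothing beyond \cref{def:inverting}, as the proof of \cref{lem:w3}(c) requires.
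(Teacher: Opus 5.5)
Your proposal is correct and is essentially the paper's proof: both read the edge images $u\mapsto u$, $t\leftrightarrow v$, $o\leftrightarrow p$, $q\leftrightarrow s$, $r\mapsto r$, $a\leftrightarrow c$, $b\leftrightarrow d$, $e\mapsto e$ off \cref{def:inverting} and obtain the connector and residual-edge claims coordinate by coordinate. Your side remark about general $k$ is slightly off, since $u$ is fixed by $\rho$ and only the swapped pairs $o,p$, $q,s$ and $a,c$ need equal multiplicities (that is, $i=k-1-i$), but the remark plays no role in the argument.
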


\begin{proof}
We read the action of $\rho$ off the list of edges in \cref{def:inverting}.
The map $\rho$ sends the internal edges $u=BD$, $t=BL$, $v=DR$, $o=ML$, $p=MR$, $q=AR$, $s=LC$ and $r=AC$ to $u$, $v$, $t$, $p$, $o$, $s$, $q$ and $r$, and the dangling edges $a$, $b$, $c$, $d$ and $e$ to $c$, $d$, $a$, $b$ and $e$, together with their terminal vertices, so it is an automorphism, it maps $(a,b)$ onto $(c,d)$ and $(c,d)$ onto $(a,b)$ coordinate by coordinate, and it fixes $e$.
\end{proof}

\subsection{Canonical Colorings}\label{sec:encoder}

The negator analysis of \cref{sec:w3} gives both unique extensions and exact Kempe routes.
We now choose one coloring for every legal source configuration, using its supplied Tait coloring to make the two ends of each wire agree.
These canonical colorings will be the endpoints of the simulation.

The canonical slot colors use the two colors other than the slot number.
For each gate type, its first-coordinate vector is
\[
\gamma_{\mathrm{OR}}=(1,0,0),\qquad
\gamma_{\mathrm{AND}}=(1,0,1).
\]
For a permitted local state $\beta$ and slot $j$, the slot state has first coordinate $\gamma_\theta(j)$.
Its second coordinate is $\gamma_\theta(j)$ when $\beta_j\ne p_\theta(j)$, and $3-j-\gamma_\theta(j)$ otherwise.
Thus, its equality indicator is $\beta_j\mathbin\oplus p_\theta(j)$ and both colors differ from $j$.
At $k=3$, each residual bundle is one edge; we write $e(g)$ for the residual edge of a negator $g$, and $e(T)$ for the dangling edge at the $\mathrm{OR}$ vertex $T$.

The next definition specifies all other edges outside the negators.
The uniqueness in \cref{lem:w3}(a) then colors every negator internally.

\begin{definition}\label{def:gate-colouring}
For a local state $\beta$ occurring at a source vertex of type $\theta$, the \emph{gate coloring} $\Sigma_\theta(\beta)$ has the slot states above and the following remaining colors.
For the $\mathrm{OR}$ gate, the colors on $a(g_0)$, $b(g_0)=a(g_1)$, $b(g_1)=a(g_2)$ and $b(g_2)$ are the four digits in the ring-color entry under $\beta$ in \cref{tab:canonical}, in that order; the edge $e(T)$ has the color absent from the two ring edges at $T$.
For the $\mathrm{AND}$ gate, the colors are
\begin{align*}
a(g_0)&=2,& b(g_0)&=1+\beta_0,& c(g_0)&=2\beta_1,& d(g_0)&=2-\beta_0,\\
a(g_1)&=2-\beta_0,& b(g_1)&=1-\beta_2,& c(g_1)&=1,& d(g_1)&=\beta_2,\\
s_0&=1,&s_1&=0,&r_0&=0,&r_2&=\beta_2,&r'_2&=2.
\end{align*}
Here an edge name stands for its color, and $r_0,r'_2$ are the single edges of their bundles.
Each negator receives the extension of \cref{lem:w3}(a).
\end{definition}

\begin{table}[ht]
\centering
\begin{tabular}{c|ccccc}
local state $\beta$ & $001$ & $011$ & $010$ & $110$ & $100$\\\hline
ring colors & $2221$ & $2201$ & $2200$ & $1200$ & $1222$
\end{tabular}
\caption{The canonical colors on the four consecutive ring edges of the protected $\mathrm{OR}$ gate.
Adjacent columns are exactly its four allowed local transitions.
The ordered slot zero is $(d(g_0),c(g_0))$, so its second edge is $c(g_0)$.}
\label{tab:canonical}
\end{table}

These prescriptions must define proper colorings, rather than merely the desired equality types.
The following verification also identifies the residual colors used by the direct paths.

\begin{lemma}\label{lem:gate-canonical}
For every local state occurring in the protected source, $\Sigma_\theta(\beta)$ is a proper coloring of its gate with the prescribed ordered slot states.
For the $\mathrm{AND}$ gate its negator residual colors are $e(g_0)=2-2\beta_1$ and $e(g_1)=1+\beta_0$.
\end{lemma}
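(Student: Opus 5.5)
The plan is to verify properness vertex by vertex, separating the gate's new vertices from the interiors of its negators. Inside a negator, \cref{lem:w3}(a) supplies a unique proper extension as soon as its input and output connector states have opposite equality types. So the check splits into three parts. First, every negator in the gate gets connector states of opposite types. Second, every new vertex ($T$, or $x_0,x_1,x_2$) sees three distinct colors. Third, the slot states coincide with the prescription fixed before \cref{def:gate-colouring}. The residual colors then come from \cref{lem:w3}(a): the residual edge carries the color $h$ missing from the unequal connector.

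\textbf{$\mathrm{OR}$ gate.} I will work through the five columns of \cref{tab:canonical} one at a time. From $\gamma_{\mathrm{OR}}=(1,0,0)$ and $p_{\mathrm{OR}}=0$, the slot states are determined as follows. Slot $0$ is $(d(g_0),c(g_0))$, with $d(g_0)=1$ and $c(g_0)$ equal to $1$ if $\beta_0=1$ and $2$ otherwise. Slots $1$ and $2$ are $(0,0)$ when the bit is $1$; otherwise they are $(0,2)$ for slot $1$ and $(0,1)$ for slot $2$. For each column I read the input state of each $g_i$ from the ring digits: $(r_0,r_1)$ for $g_0$, $(r_1,r_2)$ for $g_1$, and $(r_2,r_3)$ for $g_2$. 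I then check that its equality type is opposite to that of the output.

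Take $\beta=001$ with ring $2221$ as an example. Here $g_0$ has input $(2,2)$, which is equal, and output $(c,d)=(2,1)$, which is unequal. Next, $g_1$ has input $(2,2)$ and output $(0,2)$. Finally, $g_2$ has input $(2,1)$, which is unequal, and output $(0,0)$, which is equal. At $T$ the edges $r_0=2$ and $r_3=1$ are distinct, so $e(T)=0$ is forced. The remaining four columns are handled by the same table check. The one condition at $T$ is that the first and last ring digits differ, which holds in all five columns.

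\textbf{$\mathrm{AND}$ gate.} Here the verification is symbolic in $\beta_0,\beta_1,\beta_2$, restricted to the legal states $011,100,101,110,111$. For the slot states, the prescription with $\gamma_{\mathrm{AND}}=(1,0,1)$ gives the following.
\begin{itemize}
\item Slot $0$ is $(s_0,b(g_0))=(1,1+\beta_0)$. This is unequal exactly when $\beta_0=1$, matching $p=1$; the second color is $2$ in that case, so both colors avoid $0$.
\item Slot $1$ is $(0,2\beta_1)$.
\item Slot $2$ is $(1,\beta_2)$.
\end{itemize}

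Properness at the new vertices is checked next. At $x_0$ the colors are $\{2,1,0\}$. At $x_1$ they are $d(g_0)=2-\beta_0$, $s_1=0$ and $e(g_1)$, and I need $e(g_1)=1+\beta_0$. At $x_2$ they are $b(g_1)=1-\beta_2$, $r_2=\beta_2$ and $2$. The identified edge $d(g_0)=a(g_1)=2-\beta_0$ is consistent by construction.

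For the negators, $g_0$ has input $(2,1+\beta_0)$ and output $(2\beta_1,2-\beta_0)$. When $\beta_0=1$ the input is $(2,2)$, which is equal, and the output is $(2\beta_1,1)$. This output is unequal for either value of $\beta_1$: it is $(0,1)$ with missing color $2$, or $(2,1)$ with missing color $0$. In both cases the missing color is $2-2\beta_1$. When $\beta_0=0$, legality forces $\beta_1=1$. Then the input $(2,1)$ is unequal with missing color $0$, and the output $(2,2)$ is equal. So $e(g_0)=0=2-2\beta_1$ holds here as well.

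For $g_1$ the input is $(2-\beta_0,1-\beta_2)$ and the output is $(1,\beta_2)$. When $\beta_2=1$ the output is equal and the input is $(2-\beta_0,0)$, which is unequal with missing color $1+\beta_0$. When $\beta_2=0$ the output $(1,0)$ is unequal with missing color $2$, and the input $(2-\beta_0,1)$ has to be equal. That requires $\beta_0=1$, which holds because every legal state with $\beta_2=0$ (namely $100$ and $110$) has $\beta_0=1$. In that case $e(g_1)=2=1+\beta_0$. This completes the check at $x_1$.

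The main obstacle is bookkeeping rather than ideas. The gate's coloring is consistent only on the legal and protected local states, so the steps that use legality have to be tracked: $\beta_0=0\Rightarrow\beta_1=1$ for $g_0$, $\beta_2=0\Rightarrow\beta_0=1$ for $g_1$, and the exclusion of $000$, $101$ and $111$ in the $\mathrm{OR}$ table.
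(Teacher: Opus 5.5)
Your proof is correct and follows the paper's argument essentially step by step: opposite connector equality types in every negator so that \cref{lem:w3}(a) applies, three distinct colors at $T$ and at $x_0,x_1,x_2$, and substitution for the slot states, with legality used exactly where you flag it. One harmless slip is your remark that $d(g_0)=a(g_1)$ is an identified edge. In the $\mathrm{AND}$ gate, $d(g_0)$ is attached to $x_1$ while $a(g_1)$ stays a dangling residual edge, so they are distinct edges that merely both receive the color $2-\beta_0$, and no consistency check is needed there.
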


\begin{proof}
We check opposite connector equality in every negator and distinct colors at the additional vertices.
For the $\mathrm{OR}$ gate, the ring-color entry under $\beta$ in \cref{tab:canonical} has equal consecutive digits exactly at those positions $j$ for which $\beta_j=0$.
Its output connector is equal exactly when $\beta_j=1$, so \cref{lem:w3}(a) applies to all three negators.
The first and last digits of every ring-color entry differ, and the third color on $e(T)$ makes the coloring proper at $T$.
Reversing the order of slot zero does not change its equality type.

For the $\mathrm{AND}$ gate, the five legal states are $011,100,101,110,111$.
If $\beta_0=0$, the state is $011$; the input of $g_0$ is then $(2,1)$ and its output is $(2,2)$, while the input of $g_1$ is $(2,0)$ and its output is $(1,1)$.
If $\beta_0=1$, the input of $g_0$ is $(2,2)$ and its output is $(2\beta_1,1)$, and the input of $g_1$ is $(1,1-\beta_2)$ while its output is $(1,\beta_2)$.
In every case the two connectors have opposite equality types, and \cref{lem:w3}(a) gives the stated residual colors.
The colors at $x_0$ are $2,1,0$, those at $x_1$ are $2-\beta_0,0,1+\beta_0$, and those at $x_2$ are $1-\beta_2,\beta_2,2$.
Each list contains three distinct colors.
The three slot pairs agree with the prescribed states by substitution into the formulas.
\end{proof}

For a source edge $f$, let $P_f=\{0,1,2\}\setminus\{\tau(f)\}$.
Color permutations preserve properness, Kempe chains and exchanges: an edge is colored in $P$ before a permutation $\psi$ exactly when it is colored in $\psi(P)$ afterward.
We use the permutation $\tau_z$ to transfer each local gate coloring to the actual Tait colors at $z$.

\begin{definition}\label{def:encoder}
For a legal source configuration $x$, the \emph{canonical coloring} $C_x$ of $F^3_X$ is prescribed as follows.
\begin{enumerate}[(a)]
\item The gate at $z$ has the coloring $\tau_z\circ\Sigma_{\theta(z)}(x|_z)$, and the triangle edge $S^z_iS^z_j$ has color $\tau_z(\ell)$, where $\{i,j,\ell\}=\{0,1,2\}$.
\item For a wire $f$ with first end $u$, second end $v$ and $q=q_f$, every connector $\kappa_i$ with $0\leq i<q$ has first color $\tau_u(\gamma_{\theta(u)}(s_u(f)))$, while $\kappa_q$ has first color $\tau_v(\gamma_{\theta(v)}(s_v(f)))$.
The equality indicator at $\kappa_0$ is that of its gate slot, and the indicators alternate along the wire.
An equal connector repeats its first color, and an unequal connector has the other color of $P_f$ in its second coordinate.
Each wire negator receives the unique extension of \cref{lem:w3}(a).
\end{enumerate}
\end{definition}

The last connector may have a different first color from the preceding connectors, also on a wire with only two negators.
The negator extension allows either choice; only opposite equality types are needed.

\begin{lemma}\label{lem:encoder-local}
For every legal source configuration $x$, $C_x$ is a proper coloring of $F^3_X$ and $x_{C_x}=x$.
\end{lemma}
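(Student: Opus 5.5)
The plan is to check properness vertex by vertex, organized by where each internal vertex of $F^3_X$ lives: inside a gate (including its new vertices $T$ or $x_0,x_1,x_2$ and its gate-internal negators), on a triangle, or inside a wire negator. Then read off the projection.

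First, the gates. By \cref{lem:gate-canonical}, $\Sigma_{\theta(z)}(x|_z)$ is a proper coloring of the gate with the prescribed slot states. This needs $x|_z$ to be one of the states occurring in the protected source, which holds because $x$ is legal and protection applies to every legal configuration (\cref{sec:ncl}). Composing with the permutation $\tau_z$ preserves properness and equality types. Every internal vertex of a gate except those of its slot-end negators therefore sees distinct colors. The triangle at $z$ receives the three distinct colors $\tau_z(0),\tau_z(1),\tau_z(2)$, and $S^z_j$ gets the pair $\tau_z(\{0,1,2\}\setminus\{j\})$.

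Second, the wires, which is the main obstacle. I need the connector states prescribed in \cref{def:encoder}(b) to agree with the slot states coming from the gates at both ends. For agreement at $u$ I would check that $\kappa_0$ has first color $\tau_u(\gamma(s_u(f)))$ by construction. Its second color should be either the same color or the other color of $P_f$. The gate coloring makes the unequal second coordinate $\tau_u(3-j-\gamma)$ with $j=s_u(f)$, and both slot colors differ from $j$, so they avoid $\tau_u(j)=\tau(f)$. Hence both lie in $P_f$, and the two descriptions coincide. The same argument at $v$, using $\tau_v(s_v(f))=\tau(f)$ (the Tait coloring labels an edge the same at both ends), gives agreement for $\kappa_q$. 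For this I also need that the equality indicator of $\kappa_q$ produced by alternation equals that of the gate slot at $v$. This is exactly the parity computation of \cref{lem:projection}: the choice of $q_f$ makes the represented bits at the two ends opposite, and $x$ directs $f$ toward exactly one end, so $\beta_{s_u(f)}\oplus\beta_{s_v(f)}=1$. Combined with the polarities and the definition of slot states as $\beta_j\oplus p_\theta(j)$, this gives agreement.

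With agreement established, each wire negator has connector states of opposite equality types. \cref{lem:w3}(a) then supplies a unique proper coloring, and it colors the residual edge with the color missing from the unequal connector. At the end negators this residual edge attaches to a triangle vertex, so I must check that the residual color is $\tau_z(j)$, the color absent from the pair at $S^z_j$. The unequal end connector is colored in $\tau_z(\{0,1,2\}\setminus\{j\})$, since slot colors differ from $j$. If the end connector is equal, the other connector of that negator is unequal and colored in $P_f$, which is the same pair. In either case the missing color is $\tau(f)=\tau_z(j)$, so $S^z_j$ is proper. Gate-internal negators and the new gate vertices were handled by \cref{lem:gate-canonical}.

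Finally, $x_{C_x}=x$. By construction each slot $j$ at $z$ has equality indicator $x|_z(j)\oplus p_{\theta(z)}(j)$, so it represents $x|_z(j)$. \cref{lem:projection} then directs each edge toward the end whose bit is $1$, which is exactly the direction given by $x$.
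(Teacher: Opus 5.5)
Your proof is correct and follows the paper's own argument. You check that the end connectors of each wire agree with the gate slot states: slot colors avoid $j$, so they lie in $P_f$, and the parity from the choice of $q_f$ fixes the equality types. You then apply \cref{lem:w3}(a) to each wire negator, whose residual color is $\tau(f)$, check properness at $S^z_j$, invoke \cref{lem:gate-canonical} for the gates, and read off the projection.

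Your remark that protection is needed for $x|_z$ to lie in the domain of $\Sigma_{\mathrm{OR}}$ makes explicit a point the paper leaves implicit. The phrase ``except those of its slot-end negators'' is slightly muddled, since every gate vertex is covered once agreement at the slots is established.
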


\begin{proof}
We check agreement at every identification and properness at every triangle vertex.
The first coordinates at both ends of a wire belong to $P_f$, since $\gamma_\theta(j)\ne j$.
The initial slot state agrees by definition, and the parity calculation of \cref{lem:projection} makes the final equality indicator agree with the slot at the other end.
Its first coordinate was chosen from that end, so its full ordered state agrees as well.
Consecutive connector types are opposite, so every wire negator is properly colored by \cref{lem:w3}(a).
Its residual color is $\tau(f)$, the color absent from $P_f$.
At $S^z_j$ the two triangle colors are the two Tait colors other than $\tau_z(j)$, and the attached residual edge has color $\tau_z(j)$.
Hence, properness holds at every triangle vertex, and at the gates it holds by \cref{lem:gate-canonical}.
All slot equality types represent the local states of $x$, so the projection is $x$.
\end{proof}

\subsection{The Nine Local Transitions}\label{sec:direct-paths}

The canonical colorings in \cref{sec:encoder} give explicit endpoints for every local transition.
We now find a single Kempe path in each gate that changes the active slot and reaches exactly the canonical target.

The path starts at the second edge of the changing ordered slot, ends at a residual dangling edge, and avoids all other slot edges.
A \emph{boundary trace} lists the connector and residual edges in their order along the path; \cref{lem:w3}(b) or~(c) supplies the intervening internal negator edges.

\begin{table}[htbp]
\centering\small
\begin{adjustbox}{max width=\textwidth}
\begin{tabular}{ccl}
gate and transition & pair & boundary trace\\\hline
$\mathrm{OR}:001\leftrightarrow011$ & $\{0,2\}$ & $d(g_1),b(g_1)=a(g_2),e(g_2)$\\
$\mathrm{OR}:011\leftrightarrow010$ & $\{0,1\}$ & $d(g_2),b(g_2),e(T)$\\
$\mathrm{OR}:010\leftrightarrow110$ & $\{1,2\}$ & $c(g_0),a(g_0),e(T)$\\
$\mathrm{OR}:110\leftrightarrow100$ & $\{0,2\}$ & $d(g_1),b(g_1)=a(g_2),b(g_2),e(T)$\\
$\mathrm{AND}:011\leftrightarrow111$ & $\{1,2\}$ & $b(g_0),d(g_0),e(g_1),a(g_1)$\\
$\mathrm{AND}:100\leftrightarrow110$, $101\leftrightarrow111$ & $\{0,2\}$ & $c(g_0),e(g_0)$\\
$\mathrm{AND}:100\leftrightarrow101$, $110\leftrightarrow111$ & $\{0,1\}$ & $d(g_1),b(g_1),r_2$
\end{tabular}
\end{adjustbox}
\caption{Seven boundary traces for all nine local transitions.
Each path starts at the second edge of the changing ordered slot and ends at a residual dangling edge of the gate.
The opposite direction uses the same path and color pair.}
\label{tab:direct}
\end{table}

The table records whole paths in the bare gates, not exchanges on a disconnected union of edges.
We verify this and the exact target coloring in the next lemma.

\begin{lemma}\label{lem:direct-gates}
If two permitted local states $\beta,\beta'$ differ only at slot $j$, then the trace in \cref{tab:direct} gives a Kempe path of $\Sigma_\theta(\beta)$ for $\{0,1,2\}\setminus\{j\}$.
It contains exactly the second edge of that slot among all slot edges, and exchanging its colors gives $\Sigma_\theta(\beta')$ on every edge of the gate.
\end{lemma}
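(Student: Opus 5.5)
The proof is a finite case check over the seven rows of \cref{tab:direct}. Each row covers one or two transitions, and both directions of a transition use the same row. The exchange on a Kempe path is an involution, and the path for $\Sigma_\theta(\beta)$ is the same edge set as the path for $\Sigma_\theta(\beta')$. So it suffices to treat one direction per transition, say from $\beta$ to $\beta'$, and to check that the target coloring has the same path.

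\textbf{Step 1 (read the boundary colors).} For each transition I first read off the colors of all slot edges, ring edges, new-vertex edges and residual edges under $\Sigma_\theta(\beta)$. For $\mathrm{OR}$ gates these come from \cref{tab:canonical} and the slot formula with $\gamma_{\mathrm{OR}}$. For $\mathrm{AND}$ gates they come from \cref{def:gate-colouring}, with residuals as in \cref{lem:gate-canonical}. I then confirm that the second edge of slot $j$ is colored in $P=\{0,1,2\}\setminus\{j\}$. This holds because both slot colors avoid $j$.

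\textbf{Step 2 (assemble the path).} Inside each negator crossed by the trace I invoke \cref{lem:w3}. If all four connector colors lie in $P$, part~(b) supplies the first or second path. Otherwise part~(c) says which dangling edges are joined: either the two edges of the unequal connector, or an unequal-connector edge and the residual edge $e$. At each new vertex ($T$, $x_1$, $x_2$) met by the trace, exactly two of its edges are colored in $P$, so the path continues through it. Concatenating these pieces gives a walk between two dangling edges of the gate, and it must be a whole Kempe chain because every vertex on it has exactly its two $P$-edges used. I also check the other part of the claim: among all slot edges, this path meets only the second edge of slot $j$. This amounts to checking that the trace list contains no other slot edge. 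I also check that identified edges such as $b(g_1)=a(g_2)$ are traversed consistently. As an example, take $\mathrm{OR}:110\to100$ with $P=\{0,2\}$. Here $g_1$ has connectors in $P$, so part~(b) gives its second path from $d(g_1)$ to $b(g_1)$. Then $g_2$, under part~(c) or~(b), joins $a(g_2)$ to $b(g_2)$, and $T$ passes on to $e(T)$.

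\textbf{Step 3 (identify the result).} After the exchange, the colors on all gate edges outside the negator interiors coincide with those prescribed for $\Sigma_\theta(\beta')$. I verify this by direct comparison of the table entries, for example $1200\to1222$ on the ring edges. Then, in each negator, the exchanged coloring is a coloring with the new connector states, so it agrees with the canonical extension by the uniqueness in \cref{lem:w3}(a). No internal-edge bookkeeping is therefore needed.

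The main obstacle is Step~2 in the cases where \cref{lem:w3}(c) applies with $h\in P$ and $\mu\in P$. There two paths run inside one negator, and I must be sure the trace follows the correct one and does not exit through a forbidden slot edge. The rows involving $g_2$ of the $\mathrm{OR}$ gate and $g_1$ of the $\mathrm{AND}$ gate are the delicate ones. I would handle them by computing $h$ and $\mu$ explicitly from the slot colors and reading the pairing off the case list in \cref{lem:w3}(c).
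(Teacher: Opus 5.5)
Your proposal is correct and follows essentially the same route as the paper. You read the canonical boundary colors, follow the routes of \cref{lem:w3}(b) and~(c) through the negators and the third color at $T$, $x_1$, $x_2$, use the involution property to treat one direction, and then compare only the edges outside the negators before invoking uniqueness from \cref{lem:w3}(a). One small slip: in your $110\to100$ example, the step through $g_2$ is \cref{lem:w3}(c), since the input $(0,0)$ is equal and the output $(0,1)$ is unequal, so $h=2$ and $\mu=0$ both lie in $P=\{0,2\}$; it is not part~(b), which never joins $a$ to $b$.
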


\begin{proof}
We follow the negator routes from the changing slot and compare the boundary colors with the canonical target.
The three-color extension is unique once both connector states are fixed, so checking the changed connector colors and all edges outside negators determines the entire resulting coloring.
It suffices to use the directions displayed in the table, since a Kempe exchange is an involution and preserves its bichromatic component.
\begin{itemize}
\item \textbf{Case 1: the four $\mathrm{OR}$ transitions.}
For $001\to011$, the second path of $g_1$ joins $d(g_1)$ to $b(g_1)=a(g_2)$.
In $g_2$, \cref{lem:w3}(c) joins this edge to $e(g_2)$.
For $011\to010$, the second path of $g_2$ reaches $b(g_2)$, then passes through $T$ to $e(T)$.
For $010\to110$, the first path of $g_0$ joins $c(g_0)$ to $a(g_0)$, then passes through $T$ to $e(T)$.
Here $c(g_0)$ is the second edge of the ordered slot zero.
For $110\to100$, the second path of $g_1$ reaches $a(g_2)$, and \cref{lem:w3}(c) joins the two equal input edges of $g_2$; the path then passes through $T$ to $e(T)$.
The other slot edges do not lie on these routes.
At every traversed new vertex the remaining edge has the color outside the pair, so each trace gives a whole path ending at the stated free edge.
In the four cases the changed ring entries are, respectively, the third entry, the fourth entry, the first entry, and the third and fourth entries.
Transposing the listed pair changes $2221$ to $2201$, $2201$ to $2200$, $2200$ to $1200$, and $1200$ to $1222$, respectively.
The changing slot has its second coordinate transposed, all other slot coordinates are fixed, and the color at $e(T)$ is the third color at $T$.
Thus, all boundary and additional edges agree with the target, and uniqueness gives agreement inside every negator.
\item \textbf{Case 2: the five $\mathrm{AND}$ transitions.}
For $011\to111$, the second path of $g_0$ joins $b(g_0)$ to $d(g_0)$.
At $x_1$ the path continues to $e(g_1)$ because $s_1$ has color $0$, and \cref{lem:w3}(c) joins $e(g_1)$ to $a(g_1)$.
For the two transitions at slot one, \cref{lem:w3}(c) joins $c(g_0)$ to $e(g_0)$.
For the two transitions at slot two, the second path of $g_1$ joins $d(g_1)$ to $b(g_1)$, then passes through $x_2$ to $r_2$ because $r'_2$ has color $2$.
These paths avoid every other slot edge and end at a dangling residual edge.
For slot zero, the changing displayed colors are $b(g_0),d(g_0),a(g_1),e(g_1)$, each transposed by $\{1,2\}$.
For slot one they are $c(g_0),e(g_0)$, each transposed by $\{0,2\}$, and for slot two they are $d(g_1),b(g_1),r_2$, each transposed by $\{0,1\}$.
The formulas of \cref{def:gate-colouring} leave every other displayed color fixed.
Uniqueness inside the negators therefore gives exactly the canonical target on all edges.
\end{itemize}
\end{proof}

\subsection{One Global Kempe Path}\label{sec:composition}

The local paths of \cref{sec:direct-paths} change only the active slot at each endpoint gate.
We now join them through the wire to simulate a legal source reversal by one Kempe change of the whole network.

Their free ends remain dangling in the whole network, and no edge of an inactive slot lies on them.
These two facts remove the need to normalize colorings near inactive wires.

\begin{theorem}\label{thm:simulation-three}
For every legal reversal $x\to y$ of the source graph, a single Kempe change of $F^3_X$ sends $C_x$ to $C_y$.
Consequently, $x\sim_X y$ implies $C_x\sim_{F^3_X}C_y$.
\end{theorem}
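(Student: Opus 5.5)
Let the reversed source edge be $f$, with first end $u$ and second end $v$, and let $q=q_f$. The plan is to build one Kempe chain of $C_x$ for $P_f$ in three pieces: a path inside the gate at $u$, a path inside the gate at $v$, and a path through the wire of $f$ joining them. Exchanging on this chain should produce $C_y$ exactly.

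First, the gate pieces. At $u$ the local state changes only at slot $j=s_u(f)$, from $x|_u$ to $y|_u$, and both states are permitted. \Cref{lem:direct-gates} then gives a Kempe path of $\Sigma_{\theta(u)}(x|_u)$ for $\{0,1,2\}\setminus\{j\}$. This path begins at the second edge of slot $j$, meets no other slot edge, and ends at a residual dangling edge of the gate. Applying $\tau_u$ turns it into a path of $C_x$ restricted to the gate, for the pair $\tau_u(\{0,1,2\}\setminus\{j\})=P_f$, since $\tau_u(j)=\tau(f)$. Every gate residual edge stays dangling in $F^3_X$, and the path touches no edge outside the gate except the second edge of $\kappa_0$. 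So in the whole network this piece is a path from the second edge of $\kappa_0$ to a terminal vertex. The gate at $v$ is handled the same way, giving a piece from the second edge of $\kappa_q$ to a terminal vertex.

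Second, the wire piece. Along the wire, each $\kappa_i$ is colored inside $P_f$: its first color is $\tau_u(\gamma(\cdot))$ or $\tau_v(\gamma(\cdot))$, and both lie in $P_f$ because $\gamma_\theta(j)\ne j$. Its second color is either the same or the other color of $P_f$. So every wire negator has all four connector colors in $P_f$. By \cref{lem:w3}(b), transported by the color permutation, its second path joins the second edges of its two connectors. Chaining these second paths through the identified connector edges gives a path of $P_f$-colored edges from the second edge of $\kappa_0$ to the second edge of $\kappa_q$. This path avoids the residual edges, which are colored $\tau(f)$, so it never reaches the triangle vertices. Adding the two gate pieces yields a path whose two end edges are both dangling. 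Its internal vertices each carry exactly two $P_f$-colored edges, and those edges all lie on the path, so the path is a whole Kempe chain.

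Third, the exchange. In each wire negator, \cref{lem:w3}(b) says the exchange flips the second colors of both connectors and yields the unique extension for the new states. The equality indicators therefore all flip while the first colors stay fixed, which matches \cref{def:encoder}(b) for $y$, because the indicator at $\kappa_0$ follows the new slot state. In the two gates, \cref{lem:direct-gates} gives $\tau_u\circ\Sigma(y|_u)$ and $\tau_v\circ\Sigma(y|_v)$. All other gates, wires and triangles are unchanged, and they agree with $C_y$ because their local states are unchanged. Hence the result equals $C_y$. The consequence about $\sim$ follows by induction on the number of legal reversals.

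The main obstacle is the consistency at the two slot connectors. The gate path and the wire path must meet at the same edge, namely the second edge of the slot, including at the reversed slot zero of the OR gate. The pair used in each gate must also be exactly $P_f$ after applying $\tau_z$. Both facts rely on the slot ordering convention in \cref{def:network} and on $\tau_z(j)=\tau(f)$, so I would check them carefully for both ends.
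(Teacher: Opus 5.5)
Your proposal is correct and follows essentially the same route as the paper: you concatenate the two gate paths of \cref{lem:direct-gates}, transferred by $\tau_u,\tau_v$ to the pair $P_f$, with the second paths of \cref{lem:w3}(b) through every wire negator, and you check that the union is a whole Kempe path that avoids the triangles. You then get $C_y$ from the exact gate targets and from uniqueness in \cref{lem:w3}(a) on the active wire. The one point you leave implicit, and which the paper states, is that looplessness of $X$ makes the two endpoint gates distinct, so the two gate pieces lie in different parts and cannot interfere.
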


\begin{proof}
We concatenate the two gate paths with the second path through every negator of the reversed edge's wire, and check that their union is a whole bichromatic component.
Let the reversed edge be $f=uv$ and let $P=P_f$.
Since the source graph $X$ is protected, each endpoint transition is one of those in \cref{tab:direct}.
After applying its local permutation $\tau_z$, \cref{lem:direct-gates} supplies a $P$-path from the second edge of the active slot to a residual dangling edge at each endpoint gate.
Every connector of the wire is colored in $P$, so \cref{lem:w3}(b) supplies a second path through every wire negator.
These paths concatenate through their common second connector edges into a connected edge set $Y$.

At an internal vertex of a gate or wire negator that $Y$ visits, its local path contains both incident edges colored in $P$.
All internal vertices of distinct copies are disjoint.
An identification joins the internal ends of the common connector edge; both selected paths use its second edge and avoid its first edge.
The residual edges of the active wire have color $\tau(f)\notin P$, so $Y$ does not enter a triangle.
Its two gate endpoints remain dangling residual edges of the full network.
Thus, $Y$ contains both $P$-edges at every internal vertex it visits and has exactly two terminal ends.
It is one whole Kempe path.

Exchanging on $Y$ gives the canonical target on both gates by \cref{lem:direct-gates}.
On the active wire, every first coordinate stays fixed and every second coordinate is transposed; its connector equality types therefore all switch, precisely as required by $y$.
Uniqueness in \cref{lem:w3}(a) gives the canonical target inside each active negator.
Every inactive slot is unchanged, so every inactive wire keeps its canonical connector states and coloring.
Triangles and other gates are unchanged as well, hence the result is $C_y$ on the full network.

Parallel source edges cause no exception: their wires have distinct copies and slots, and the active gate paths avoid the slots of all other wires, even those with the same endpoints.
Looplessness ensures that the two active endpoint gates are distinct.
Concatenating these single moves proves the reachability statement.
\end{proof}
 \subsection{From Three Colors to \texorpdfstring{$k$}{k} Colors}\label{sec:colours}\label{sec:added}

The single-move simulation of \cref{sec:composition} uses three colors.
We now extend its canonical colorings to $k$ colors by choosing fixed added edges, whose deletion leaves the cubic network with digons inserted.

The idea comes from Leven and Galil~\cite{LevenGalil1983}. They combine a coloring with three colors and a coloring of a $(k-3)$-regular graph with the remaining colors.
In our network, the added edges carry the colors $3,\dots,k-1$, one edge of each color at every internal vertex.
Their colors stay fixed throughout the simulation.
Deleting these edges leaves the network for three colors with pairs of parallel edges inserted into some edges.
The difficulty is to choose the added edges so that every negator has precisely this form.
The first inverting component of each negator becomes the negator for three colors.
Each further component leaves four pairs of parallel edges on the two output edges.
The rule below makes these choices once and for all.

\begin{definition}\label{def:added-colours}
Let $k\geq 4$.
In every copy of $H_i$ in $F^k_X$ and for $2\leq j\leq k-2$, the color $j+1$ is placed on one edge of each of the following multiple edges: on $u$, $o$, $q$ and $c$ if $i<j$; on $t$, $v$, $r$ and $e$ if $i=j$; and on $u$, $p$, $s$ and $a$ if $i>j$.
In every $\mathrm{OR}$ gate, one of the $k-2$ dangling edges at $T$ is left for the colors $0$, $1$ and $2$, and the other $k-3$ of them get the colors $3,\dots,k-1$, one color each.
In every $\mathrm{AND}$ gate, the same is done with the bundles $r_0$ at $x_0$ and $r'_2$ at $x_2$.
The edges are chosen once and for all, with distinct colors on distinct edges of the same multiple edge, and the edges that get a color in this way are the \emph{added edges} of $F^k_X$.
For $k=3$, there are no added edges.
\end{definition}

\Cref{def:added-colours} places at most as many colors on each multiple edge as it has edges; this bound is part of the proof of the next lemma.
Its second part uses the following operation.

\begin{definition}\label{def:digon}
Inserting a \emph{digon}~\cite{MazakRajnikSkoviera2022} into an edge of a multigraph replaces the edge by a path of length three through two new vertices and adds a second edge joining the two new vertices. The two edges that join the two new vertices are the parallel edges of the digon, and we fix one of them as its \emph{first parallel edge}.
\end{definition}

A digon together with its two further edges is the simplest multipole of degree three with two dangling edges, and in every coloring with three colors its two further edges have the same color.
Digons are inserted into an edge $f$ \emph{in series}: each further digon is inserted into an edge of the path that replaces $f$ and is not a parallel edge of an earlier digon.
The edges of the resulting path that are not parallel edges of a digon lie \emph{over}~$f$.

\begin{lemma}\label{lem:reduct}
Let $k\geq 3$.
\begin{enumerate}[(a)]
\item Every internal vertex of $F^k_X$ is incident with exactly one added edge of each color $h\in\{3,\dots,k-1\}$, and no edge of a connector of a negator, no residual edge $e$ of a copy of $H_1$, no edge of a slot of a gate and no triangle edge is an added edge.
\item Deleting from $F^k_X$ the added edges, together with the terminal vertices of the deleted dangling edges, gives the multipole $F^3_X$ with digons inserted into edges.
More precisely, it is obtained from $F^3_X$ by inserting, for every copy of $W_3$ in $F^3_X$, $3(k-3)$ digons into the first edge and $k-3$ digons into the second edge of its output connector, and no other digons.
\end{enumerate}
\end{lemma}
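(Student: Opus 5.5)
Part~(a) comes down to a count inside one copy of $H_i$, plus a check at the vertices outside the components. Fix $k\geq4$, a copy of $H_i$ and a color $h=j+1$ with $2\leq j\leq k-2$, and split into the three cases $i<j$, $i=j$ and $i>j$ of \cref{def:added-colours}. In each case we list the internal vertices on the four chosen multiple edges:
\begin{itemize}
\item for $i<j$, the edges $u=BD$, $o=ML$, $q=AR$ and the dangling $c$ at $C$ meet $B,D,M,L,A,R,C$ once each;
\item for $i=j$, the edges $t=BL$, $v=DR$, $r=AC$ and the dangling $e$ at $M$ do the same;
\item for $i>j$, the edges $u=BD$, $p=MR$, $s=LC$ and the dangling $a$ at $A$ do the same.
\end{itemize}
So every internal vertex of the component gets exactly one edge of color $h$.

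Next comes the capacity check: no multiple edge receives more colors than its multiplicity. For $u$ this holds because every $j\neq i$ with $2\leq j\leq k-2$ uses $u$, which gives at most $k-3\leq k-2$ colors. For $o$ and $q$ the colors come from $j>i$, that is, $k-2-i$ of them, within the multiplicity $k-1-i$. For $p$, $s$ and $a$ the colors come from $j<i$ with $j\geq2$, at most $i-1\leq i$ of them. For $c$ they come from $j>i$, and there are $k-2-i\leq k-1-i$. The edges $t$, $v$, $r$ and $e$ are single and receive only $j=i$, and only when $i\geq2$.

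The vertex $Z_i$ joins $c$ of $H_i$ to $a$ of $H_{i+1}$. A color $j+1$ lies on $c$ of $H_i$ exactly when $j>i$, and on $a$ of $H_{i+1}$ exactly when $j<i+1$. Exactly one of these holds, so the two chosen edges give $Z_i$ one edge of each added color. The identified edges $d$ of $H_i$ and $b$ of $H_{i+1}$ are never chosen.

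The gate vertices $T$, $x_0$ and $x_2$ receive their added edges directly. The vertex $x_1$ receives the residual edges $e$ of $g_1$, and these carry exactly the colors $3,\dots,k-1$ (from $e$ of $H_j$ with $j\geq2$). Triangle vertices are not internal vertices of any component, so their edges are never chosen. For the exclusions, the connector edges are $a,b$ of $H_1$ and $c,d$ of $H_{k-2}$. Color $j+1$ goes to $a$ only when $1>j$, and to $c$ only when $k-2<j$, and neither can happen. The edges $b$, $d$ and the triangle edges are never chosen, and $e$ of $H_1$ would need $j=1$. Slot edges are connector edges or $s_0$, $s_1$, which are never chosen.

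For part~(b) I would trace each copy of $H_i$ with $i\geq2$ after the deletion. Only the color $i+1$ case removes single edges, namely $t$, $v$, $r$ and $e$. The multiple edges keep the multiplicities $u{:}1$, $o{:}2$, $p{:}2$, $q{:}2$, $s{:}2$, $a{:}2$, $c{:}1$. So $M$ keeps one $o$-edge and one $p$-edge toward each of $L$ and $R$, and $B$ and $D$ become degree-two vertices on the path $b,u,d$. With $M$ now of degree two (since $e$ is deleted), $L,M,R$ together with the doubled $o,p$ edges form digons, and the doubled $a$, $s$ edges at $A$, $C$ do as well. Then $Z_{i-1}$ has degree three, with one $c$-edge from $H_{i-1}$ and two $a$-edges.

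The main obstacle is the bookkeeping: showing that the reduced chain collapses to $H_1\cong W_3$, with the four digons per later component landing on the output edges. Three of them go on the first output edge (the $Z$ path through $A$, $C$, $L$, $M$, $R$) and one on the second (the $b,u,d$ path). Summing over the $k-3$ components gives $3(k-3)$ and $k-3$. I would handle this by induction on the number of components, smoothing degree-two vertices as subdivisions. At the gates, deleting the added edges leaves $T$, $x_0$ and $x_2$ exactly as in the cubic gate, and $x_1$ keeps $e(g_1)$ of $H_1$. That gives $F^3_X$.
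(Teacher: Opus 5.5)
\textbf{Verdict: there is a genuine gap in part (b).} Part (a) is essentially the paper's vertex-by-vertex check, with one omission. You never verify the first claim at the triangle vertices. Each $S^z_\ell$ is an internal vertex of $F^k_X$, and its added edges are exactly the attached residual edges $e$ of $H_2,\dots,H_{k-2}$ of the adjacent wire negator, which carry the colors $3,\dots,k-1$ once each; this is the same reasoning you used at $x_1$. Saying that triangle edges are never chosen only handles the exclusion clause.

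\textbf{The gap: the retained multiplicities in (b) are wrong.} For $H_i$ with $i\geq 2$:
\begin{itemize}
\item $u$ receives the $k-4$ colors $j+1$ with $j\neq i$, so it keeps two edges, not one.
\item $o$ and $q$ receive $k-2-i$ colors on $k-1-i$ edges, so each keeps one edge, not two. Your own capacity count in (a) already shows this.
\end{itemize}
The correct list is: $u,p,s,a$ keep two edges, $o,q,c$ keep one, and $t,v,r,e$ keep none.

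With your list the picture is inconsistent. $B$ and $D$ would have degree two, so the path $b,u,d$ would carry no digon at all. Meanwhile $M$, $L$, $R$ and $A$ would have degree four. This contradicts (a), which says every internal vertex keeps exactly $k-(k-3)=3$ edges; it also contradicts your remark that $M$ has degree two. So the counts $3(k-3)$ and $k-3$ you state at the end do not follow from what you derived.

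Moreover, ``smoothing degree-two vertices'' is not an available move. The lemma asserts that the deleted multipole \emph{is} $F^3_X$ with digons inserted, and no degree-two vertices arise.

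\textbf{How to fix it.} Use the correct multiplicities.
\begin{itemize}
\item $B_i$ and $D_i$ are joined by two $u$-edges. This gives one digon on the path formed by $d$ of $H_{i-1}$ (which is $b$ of $H_i$), $B_i$, $D_i$ and $d$ of $H_i$.
\item Start from the single remaining $c$-edge of $H_{i-1}$ and follow the path to $c$ of $H_i$. It meets three digons, joined by the single edges $q$ and $o$:
  \begin{itemize}
  \item the digon $\{Z_{i-1},A_i\}$, formed by the two $a$-edges;
  \item the digon $\{R_i,M_i\}$, formed by the two $p$-edges;
  \item the digon $\{L_i,C_i\}$, formed by the two $s$-edges.
  \end{itemize}
\item Check that $H_1$ reduces to the cubic $H_1=W_3$: each of $u,o,q,c$ keeps exactly one edge, and every other multiple edge is single and uncolored.
\end{itemize}
Induction on $i$ then gives three digons in series on the first output edge and one on the second for each $i\geq 2$.

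To conclude that the reduced network is $F^3_X$, also note three facts:
\begin{itemize}
\item each triangle vertex keeps its two triangle edges and $e$ of $H_1$;
\item $T$, $x_0$ and $x_2$ keep one bundle edge each;
\item the identifications of connector edges involve no added edge.
\end{itemize}
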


\begin{proof}
We check~(a) vertex by vertex, and we obtain~(b) by counting the edges that remain in each multiple edge.
For $k=3$, there is nothing to prove, so let $k\geq 4$ and $2\leq j\leq k-2$.
\begin{enumerate}[(a)]
\item In a copy of $H_i$, the color $j+1$ lies on $u$, $o$, $q$ and $c$ if $i<j$, which meet the vertices $B$ and $D$, $M$ and $L$, $A$ and $R$, and $C$, respectively; on $t$, $v$, $r$ and $e$ if $i=j$, which meet $B$ and $L$, $D$ and $R$, $A$ and $C$, and $M$; and on $u$, $p$, $s$ and $a$ if $i>j$, which meet $B$ and $D$, $M$ and $R$, $L$ and $C$, and $A$.
In each case, every vertex of $H_i$ is incident with exactly one edge of color $j+1$ of $H_i$.
The vertex $Z_i$ is incident with the edges of $c$ in $H_i$, which contain the color $j+1$ exactly when $i<j$, and with the edges of $a$ in $H_{i+1}$, which contain it exactly when $i+1>j$, so exactly one edge at $Z_i$ has the color $j+1$.
At $T$, $x_0$ and $x_2$, the respective residual bundles receive one edge of each added color.
At $x_1$, these edges are the residual edges $e_i(g_1)$ for $i\geq2$, whose colors are $i+1$.
At a vertex $S^z_\ell$ of a triangle, the added edges are the residual edges $e$ of the copies $H_i$ with $i\geq 2$ in the negator of the adjacent wire, and the one of $H_i$ has the color $i+1$.
The remaining statements follow from \cref{def:added-colours}.
Indeed, the connector edges of $W_k$ are the single edges $a$ and $b$ of $H_1$ and $c$ and $d$ of $H_{k-2}$, and the color $j+1$ lies on $a$ only if $i>j\geq 2$, on $c$ only if $i<j\leq k-2$, and never on $b$ or $d$.
The slot edges of the $\mathrm{AND}$ gate are connector edges of its negators or $s_0,s_1$, and neither these nor the triangle edges receive an added color.
\item In $H_1$, each of $u$, $o$, $q$ and $c$ has $k-2$ edges and receives the $k-3$ colors $j+1$, since $1<j$ for all $j$, and the other multiple edges of $H_1$ are single edges that receive no color.
Hence, $H_1$ becomes the inverting component $H_1$ for $k=3$, which is $W_3$.
For $2\leq i\leq k-2$, exactly $k-2-i$ of the indices $j$ are greater than $i$ and $i-2$ are smaller, so $u$, $p$, $s$ and $a$ keep two edges each, $o$, $q$ and $c$ keep one edge each, and $t$, $v$, $r$ and $e$ keep no edge; for example, $u$ keeps $(k-2)-(k-2-i)-(i-2)=2$ edges, and $o$ keeps $(k-1-i)-(k-2-i)=1$ edge.
In particular, every multiple edge has at least as many edges as the colors placed on it.
The vertex $Z_{i-1}$ keeps one edge of $c$ of $H_{i-1}$ and the two edges of $a$ of $H_i$.
For $2\leq i\leq k-2$, let $A_i$, $B_i$, $C_i$, $D_i$, $L_i$, $M_i$ and $R_i$ denote the vertices $A$, $B$, $C$, $D$, $L$, $M$ and $R$ of $H_i$.
Hence, after the deletion, the pairs $\{Z_{i-1},A_i\}$, $\{R_i,M_i\}$ and $\{L_i,C_i\}$ are digons on the path formed by $c$ of $H_{i-1}$, by the vertices $Z_{i-1}$, $A_i$, $R_i$, $M_i$, $L_i$ and $C_i$ and by $c$ of $H_i$, and the pair $\{B_i,D_i\}$ is a digon on the path formed by the edge $d$ of $H_{i-1}$, which is $b$ of $H_i$, and by $d$ of $H_i$.
Induction on $i$ shows that the negator becomes $W_3$ with $3(k-3)$ digons inserted into the first edge of its output connector and $k-3$ digons inserted into the second one, and its only remaining residual edge is $e$ of $H_1$.
At $T$, $x_0$ and $x_2$, exactly one edge of the respective residual bundle remains.
At $x_1$, only $e_1(g_1)$ remains from the attached residual bundle, together with $d(g_0)$ and $s_1$.
Thus, each gate becomes its cubic gate with the digons just described in its negators.
Every vertex $S^z_\ell$ keeps its two triangle edges and the residual edge $e$ of $H_1$ of the adjacent negator.
Of the identifications of \cref{def:network}, those of connector edges and of the residual edges $e$ of $H_1$ involve no added edge by~(a), and they are exactly the identifications of $F^3_X$, while the attached residual edges $e$ of $H_i$ with $i\geq 2$ are deleted.
The digons stay on the physical output edges $c,d$ of each negator.
At the reversed $\mathrm{OR}$ slot zero, these are respectively the second and first ordered slot edges.
When two output edges are identified, their digons occur in series over the same cubic edge.
Thus, reversed slot order and parallel source edges do not change the deletion statement.
\end{enumerate}
\end{proof}

\subsection{Lifting the Moves}\label{sec:lift}

\Cref{sec:colours} identified the cubic network with inserted digons inside the network for $k$ colors.
We now lift its colorings and Kempe moves through those digons while keeping the added edges fixed.

A coloring extends over a digon by assigning its parallel edges the two colors missing from the replaced edge.
A fixed order for these colors makes the extension unique.
Every original Kempe chain extends through each digon on its route along one parallel edge.
The only other Kempe chains are the two parallel edges of individual digons.
An exchange along an extended chain may reverse the prescribed order of the colors on some pairs.
One move on each affected pair restores that order.
Hence, sequences of moves lift, and the extended canonical colorings still store the same legal configurations.

The next lemma shows that inserting digons does not change what moves can do.
For a coloring $c$ of a multipole $F$ of degree three and a multipole $F'$ obtained from $F$ by inserting digons into edges, let $\widetilde c$ be the map on $E(F')$ that gives every edge over an edge $f$ of $F$ the color $c(f)$ and, for every digon inserted into $f$, gives the first parallel edge the smaller and the other parallel edge the larger of the two colors other than $c(f)$.
At every vertex of a digon, the edge over $f$ and the two parallel edges have the three distinct colors, and every vertex of $F$ has in $F'$ edges over its edges in $F$, with their colors under $c$, so $\widetilde c$ is a coloring of $F'$.

\begin{lemma}\label{lem:digon}
Let $F'$ be obtained from a multipole $F$ of degree three by inserting digons into edges in series, let $c$ be a coloring of $F$, and let $P\subseteq\{0,1,2\}$ be a color pair.
\begin{enumerate}[(a)]
\item For every Kempe chain $Y$ of $c$ for $P$, the edges over the edges of $Y$, together with the parallel edge colored in $P$ of every digon inserted into an edge of $Y$, form a Kempe chain $\widetilde Y$ of $\widetilde c$ for $P$.
Every Kempe chain of $\widetilde c$ for $P$ that is not obtained in this way consists of the two parallel edges of one digon.
\item If $c'$ is obtained from $c$ by a move, then $\widetilde{c'}$ is obtained from $\widetilde c$ by a sequence of moves of $F'$.
\end{enumerate}
\end{lemma}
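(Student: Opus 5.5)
We describe the edges of $\widetilde c$ colored in $P$ locally, and then follow a single move through these descriptions.

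For~(a), fix a digon $\{g,g'\}$ inserted into an edge $f$ of $F$. Its two parallel edges have the two colors other than $c(f)$. Its two edges over $f$ both have color $c(f)$.

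If $c(f)\in P$, then exactly one parallel edge is colored in $P$. At each of $g,g'$, the two edges colored in $P$ are that parallel edge and the edge over $f$. So inside the digon, the Kempe chain for $P$ follows the path over $f$ with one parallel edge substituted for the middle edge.

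If $c(f)\notin P$, then both parallel edges are colored in $P$, and no edge over $f$ is. The two parallel edges then form a cycle of length two. This cycle is a whole Kempe chain, because no other edge at $g$ or $g'$ is colored in $P$.

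Every vertex of $F$ sees, in $F'$, edges with the same colors as in $F$. Every edge of $F'$ lies over an edge of $F$ or is a parallel edge. Hence the components of the $P$-colored subgraph of $F'$ are of two kinds. The first kind are the images $\widetilde Y$ of the Kempe chains $Y$ of $c$; here we walk along $Y$ and splice in one parallel edge at each digon in series over an edge of $Y$. The second kind are the digons over edges whose color lies outside $P$. Dangling edges of $F'$ are exactly those of $F$, so the paths and cycles correspond as well.

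For~(b), let $c'$ be obtained from $c$ by exchanging $P$ on a chain $Y$. First apply the single move of $F'$ exchanging $P$ on $\widetilde Y$. Call the result $\widehat c$.

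Off $\widetilde Y$, $\widehat c$ agrees with $\widetilde c$ and with $\widetilde{c'}$, because $c'=c$ off $Y$. On an edge over $f\in Y$, $\widehat c$ gives the transposed color, which is $c'(f)$. It remains to check the digons inserted into edges $f$ of $Y$.

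Write $P=\{a,b\}$ with $c(f)=a$, and let $h=\overline P$. Before the move, the parallel edges carry $\{b,h\}$ in the fixed order (smaller color on the first parallel edge), and the move changes $b$ to $a$. The edges over $f$ now have color $b$, and the parallel edges carry $\{a,h\}$ with $a$ on the position where $b$ was.

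The definition of $\widetilde{c'}$ puts $\min(a,h)$ on the first parallel edge. The position is already correct exactly when the comparison of $b$ with $h$ agrees with that of $a$ with $h$.

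Otherwise we apply one more move. In $\widehat c$, the pair $\{a,h\}$ lies only on the two parallel edges of this digon, since the edges over $f$ now carry $b$. So these two edges form a Kempe chain for $\{a,h\}$ by the analysis in~(a), applied to $\widehat c$, which is again a proper coloring. Exchanging $\{a,h\}$ on it swaps the two colors and changes nothing else.

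After at most one such move per affected digon, the coloring equals $\widetilde{c'}$. This gives the required sequence of moves.

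The main obstacle is bookkeeping in~(a): checking that the substituted path is a single component of the $P$-colored subgraph, even when several digons lie in series over one edge and when $Y$ is a cycle rather than a path. We handle both by induction on the number of inserted digons, inserting one digon at a time into an edge of the current multipole; the one-digon case is exactly the local analysis above.
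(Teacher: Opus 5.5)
Your proof is correct and follows the paper's argument: you use the same two-case local analysis at a digon (a spliced path when $c(f)\in P$, an isolated two-edge chain when $c(f)\notin P$), and in~(b) the same exchange on $\widetilde Y$ followed by one repair move on each digon whose parallel edges end up in the wrong order. Your induction on the number of inserted digons is only a more explicit way of organizing the paper's direct vertex-by-vertex correspondence in~(a).
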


\begin{proof}
We look at one digon at a time, and then we exchange along $\widetilde Y$ and repair the order of the parallel edges.
Let $f$ be an edge of $F$ into which a digon is inserted.
If $c(f)\in P$, then the edges colored in $P$ at the digon are its two edges over $f$ and the parallel edge that has the other color of $P$, and they form a path through the digon.
If $c(f)\notin P$, then the edges colored in $P$ at the digon are its two parallel edges, which form a Kempe chain of $\widetilde c$ for $P$ on their own.
\begin{enumerate}[(a)]
\item At every vertex of $F$, the edges colored in $P$ under $\widetilde c$ are the edges over the edges colored in $P$ under $c$ at this vertex.
Together with the paths through the digons, this shows that $\widetilde Y$ is a component of the subgraph formed by the edges colored in $P$ under $\widetilde c$, and that the only components that contain no edge over an edge of $F$ are the pairs of parallel edges of digons inserted into edges $f$ with $c(f)\notin P$.
\item Suppose that $c'$ is obtained from $c$ by exchanging the colors of $P$ on a Kempe chain $Y$.
The exchange of the colors of $P$ on $\widetilde Y$ gives every edge over an edge $f$ of $Y$ the color $c'(f)$, and it replaces the color of the parallel edge colored in $P$ of every digon inserted into $f$ by the other color of $P$.
The two parallel edges of such a digon then have the two colors other than $c'(f)$, possibly in the wrong order.
In that case, these two parallel edges form a Kempe chain for the color pair of their two colors, since the two edges over $f$ at the digon have the third color, and exchanging this color pair on them puts the colors into the order of $\widetilde{c'}$.
All other edges have the same color under $\widetilde c$ and $\widetilde{c'}$, so these moves lead from $\widetilde c$ to $\widetilde{c'}$.
\end{enumerate}
\end{proof}

Let $F'$ be the multipole obtained from $F^k_X$ by deleting the added edges; by \cref{lem:reduct}(b), it is $F^3_X$ with digons inserted into edges.
For a legal configuration $x$ of $X$, let $\widehat C_x$ be the coloring of $F^k_X$ that gives every added edge its color from \cref{def:added-colours} and every other edge its color under the coloring $\widetilde{C_x}$ of $F'$ defined before \cref{lem:digon}.
For $k=3$, the coloring $\widehat C_x$ is $C_x$ itself.
With these colorings, the simulation with three colors carries over to $k$ colors.

\begin{theorem}\label{thm:simulation}
For every $k\geq 3$, the coloring $\widehat C_x$ is a coloring of $F^k_X$ whose projection is $x$, and $x\sim_X y$ implies $\widehat C_x\sim_{F^k_X}\widehat C_y$.
\end{theorem}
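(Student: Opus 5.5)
We prove the three claims of \cref{thm:simulation} in order: that $\widehat C_x$ is a coloring of $F^k_X$, that its projection is $x$, and that moves lift. For $k=3$ everything is \cref{lem:encoder-local} and \cref{thm:simulation-three}, so assume $k\geq4$. Let $F'$ be $F^k_X$ with the added edges deleted. By \cref{lem:reduct}(b), $F'$ is $F^3_X$ with digons inserted in series into the output edges of the negators. By the remark before \cref{lem:digon}, $\widetilde{C_x}$ is a proper $3$-edge-coloring of $F'$ using the colors $0,1,2$. By \cref{lem:reduct}(a), every internal vertex of $F^k_X$ carries exactly one added edge of each color $3,\dots,k-1$. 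Hence at each internal vertex the $k$ edges receive the $k$ distinct colors $0,\dots,k-1$, and $\widehat C_x$ is a coloring of $F^k_X$.

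For the projection, we again use \cref{lem:reduct}(a): no slot edge is an added edge. So every slot edge of $F^k_X$ is an edge of $F'$. It lies over the corresponding slot edge of $F^3_X$, and we identify it with that edge. Hence it has the same color under $\widehat C_x$ as under $C_x$. We also need that identified output edges carrying digons still read as the slot edge. \Cref{lem:reduct}(b) says that the digons sit over the same cubic edge, and $\widetilde c$ gives every edge over $f$ the color $c(f)$. Therefore each slot of each gate has the same equality type under $\widehat C_x$ as under $C_x$. The represented bits, and so the projection, agree, and \cref{lem:encoder-local} gives $x_{\widehat C_x}=x$.

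For reachability, it suffices to handle one legal reversal $x\to y$. By \cref{thm:simulation-three}, $C_y$ is obtained from $C_x$ by one move of $F^3_X$ with some pair $P\subseteq\{0,1,2\}$. By \cref{lem:digon}(b), applied to $F=F^3_X$ and $F'$, the coloring $\widetilde{C_y}$ is obtained from $\widetilde{C_x}$ by a sequence of moves of $F'$, each with a pair inside $\{0,1,2\}$. The \emph{main step} is to check that each such move of $F'$ is also a move of $F^k_X$ when the added edges keep their fixed colors.

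So let $Q\subseteq\{0,1,2\}$ be the pair of such a move. Under the current coloring, an edge of $F^k_X$ colored in $Q$ is never an added edge, because added edges have colors $\geq3$. The subgraph of $F^k_X$ formed by the edges colored in $Q$ therefore has the same edge set as the corresponding subgraph of $F'$. Every internal vertex of $F'$ is an internal vertex of $F^k_X$. Deleting the terminal vertices of added dangling edges removes no vertex incident with a $Q$-colored edge. So the components coincide, and the Kempe chains for $Q$ are the same in both multipoles.

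Exchanging on such a chain gives the same result in $F'$ and in $F^k_X$, with the added colors untouched. By induction along the sequence, the colorings of $F^k_X$ stay of the form ``added colors fixed, $\widetilde c$ elsewhere''. Hence $\widehat C_x\sim_{F^k_X}\widehat C_y$. Concatenating over a legal reversal sequence from $x$ to $y$ proves $x\sim_X y\Rightarrow\widehat C_x\sim_{F^k_X}\widehat C_y$.
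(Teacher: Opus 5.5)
Your proposal is correct and follows the paper's proof essentially step for step. Properness and the projection come from \cref{lem:reduct}(a) together with the fact that connector edges lie over connector edges of $F^3_X$; the single move of \cref{thm:simulation-three} is then lifted through \cref{lem:digon}(b), using that Kempe chains for pairs inside $\{0,1,2\}$ coincide in $F'$ and $F^k_X$.

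The only slip is cosmetic: the intermediate colorings produced by \cref{lem:digon}(b) need not be of the form $\widetilde c$, since digon colors may be temporarily out of order. Your chain argument only uses that the added edges keep their colors $\geq 3$, so nothing breaks.
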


\begin{proof}
We check properness and the projection vertex by vertex, and we lift the moves of \cref{thm:simulation-three} through \cref{lem:digon}.
By \cref{lem:reduct}(a), every internal vertex of $F^k_X$ is incident with one added edge of each color $3,\dots,k-1$, and its other three edges are its edges in $F'$, which have three distinct colors of $\{0,1,2\}$ under $\widetilde{C_x}$.
Hence, $\widehat C_x$ is a coloring of $F^k_X$.
By \cref{lem:reduct}(a), the edges of the connectors are edges of $F'$, and each of them is an edge over an edge of a connector of $F^3_X$, so every connector of a gate has the same equality type under $\widehat C_x$ as the corresponding connector of $F^3_X$ under $C_x$.
Since the projection depends only on these equality types, the projection of $\widehat C_x$ is the projection of $C_x$, which is $x$ by \cref{lem:encoder-local}.

Now let $x\sim_X y$.
By \cref{thm:simulation-three}, $C_y$ is obtained from $C_x$ by a sequence of moves of $F^3_X$, each with a color pair $P\subseteq\{0,1,2\}$, and by \cref{lem:digon}(b), $\widetilde{C_y}$ is obtained from $\widetilde{C_x}$ by a sequence of moves of $F'$ with such color pairs.
The added edges have colors outside $\{0,1,2\}$, so for every color pair $P\subseteq\{0,1,2\}$, the edges colored in $P$ and the Kempe chains for $P$ are the same in $F'$ and in $F^k_X$ when the added edges keep their colors.
Hence, each of these moves is a move of $F^k_X$ that keeps the colors of the added edges, and $\widehat C_x\sim_{F^k_X}\widehat C_y$.
\end{proof}
 \section{Proof of \texorpdfstring{\cref{thm:main}}{Theorem 1.1}}\label{sec:closing}

The simulation of \cref{sec:three} now gives reachability between the chosen network colorings.
We turn the network into a regular multigraph by joining two copies in \cref{sec:double}, then remove parallel edges in \cref{sec:blocks}.
Finally, \cref{sec:main-proof} combines the simulation and projection to prove \cref{thm:main}.

The two copies are joined along their dangling edges.
Every vertex then has degree $k$, but parallel edges remain.
Each vertex is then replaced by a copy of $K_{k,k-1}$.
The edges at the old vertex receive distinct ends in this copy, which removes all parallel edges.
Both constructions preserve connectivity.

For reachability, the difficulty is to control every coloring of the new graph, including those outside our chosen family.
Each such coloring can be read back before the construction.
A move then becomes a sequence of moves, or changes nothing.
Conversely, our chosen colorings and their sequences of moves extend through both constructions.
The common observation is that an exchange on several Kempe chains can be performed one chain at a time.
The next lemma recognizes such unions by checking that both edges of the color pair occur together at each internal vertex.

\begin{lemma}\label{lem:closed-set}
Let $c$ be a coloring of a multipole of degree $k$, let $P$ be a color pair, and let $W$ be a set of edges colored in $P$ such that, for every edge of $W$ and every internal vertex incident with it, the other edge colored in $P$ at this vertex also lies in $W$.
Then $W$ is a union of Kempe chains of $c$ for $P$, and exchanging the colors of $P$ on all edges of $W$ is a sequence of moves, one for each of these Kempe chains.
\end{lemma}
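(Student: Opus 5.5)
The plan is to show that $W$ is closed under taking Kempe chains, and then that exchanging on disjoint Kempe chains one at a time gives the same result as exchanging on all of them at once.

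\textbf{Step 1: closure.} Let $f\in W$ and let $Y$ be the Kempe chain of $c$ for $P$ that contains $f$. By \cref{sec:prelim-multipoles}, $Y$ is the edge set of a path or cycle. Its consecutive edges share an internal vertex, and at that vertex they are the two edges colored in $P$. Walk along $Y$ from $f$ in both directions. At each internal vertex reached, the hypothesis puts the next edge of $Y$ into $W$. By induction along the walk, $Y\subseteq W$. So $W$ is the union of the Kempe chains $Y$ with $Y\cap W\neq\emptyset$; call them $Y_1,\dots,Y_m$. They are pairwise disjoint, since they are components.

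\textbf{Step 2: performing the exchange chain by chain.} Define $c_0=c$, and let $c_i$ be obtained from $c_{i-1}$ by exchanging $P$ on $Y_i$. The key point is that $Y_i$ is still a Kempe chain of $c_{i-1}$ for $P$. The argument in the proof of \cref{lem:guard} applies: an exchange permutes the colors within $P$ and fixes all other colors. Hence the set of edges colored in $P$ is the same under $c$ and every $c_{i-1}$, and so is the subgraph these edges form. Its components are therefore unchanged, so each step is a move and each $c_i$ is a coloring. After $m$ steps, every edge of $W$ has had its $P$-colors transposed exactly once, and every other edge is unchanged. So $c_m$ is the simultaneous exchange on $W$.

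\textbf{Main obstacle.} The only real subtlety is in Step 1, where a dangling edge has a terminal end. The hypothesis speaks only about internal vertices, so the walk must stop at terminal vertices. This is harmless, because a Kempe chain also ends there. Likewise, the case $W=\emptyset$ gives the empty sequence of moves.
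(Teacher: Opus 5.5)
Your proposal is correct and follows essentially the same route as the paper. The paper also shows that $W$ is closed under adjacency in the subgraph of $P$-colored edges, so $W$ is a union of Kempe chains; your walk along the path or cycle does the same job. It then uses the same observation that an exchange leaves the set of $P$-colored edges unchanged, so the chains can be exchanged one at a time.
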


\begin{proof}
We show that $W$ is closed under adjacency in the subgraph formed by the edges colored in $P$.
Two edges colored in $P$ are adjacent in this subgraph exactly when they have a common internal vertex, because a terminal vertex is incident with only one edge, and every internal vertex is incident with exactly two edges colored in $P$.
Hence, by the hypothesis, every edge colored in $P$ that has a common end with an edge of $W$ lies in $W$, so $W$ is a union of components of this subgraph, that is, of Kempe chains for $P$.
An exchange on one of these Kempe chains does not change the set of edges colored in $P$, so the other Kempe chains in $W$ remain Kempe chains, and the exchanges can be made one at a time.
\end{proof}

\subsection{Two Copies}\label{sec:double}

\Cref{lem:closed-set} lets us handle an exchange on several Kempe chains.
We now join two copies to remove the dangling edges while preserving reachability between the colorings chosen for the reduction.
We use the construction of Holyer~\cite{Holyer1981} and Leven and Galil~\cite{LevenGalil1983}. 

Two copies of each dangling edge merge into one edge joining the copies of its internal end.
Every vertex keeps its $k$ incident edges, so the result is $k$-regular.

Repeating a coloring on both copies gives a coloring of the new multigraph.
Repeating a move on both copies gives an exchange on a union of Kempe chains.
For the converse, the two copies may have different colorings, so we read only one copy.
A merged edge is read as the dangling edge of that copy.
Every move then exchanges the same two colors on a union of Kempe chains in the original multipole.

Let $F$ be a multipole of degree $k$.
The multigraph $D(F)$ has the vertices $(g,j)$ for the internal vertices $g$ of $F$ and $j\in\{0,1\}$, and the following edges.
For every internal edge $f=gg'$ of $F$ and every $j\in\{0,1\}$, it has an edge $\iota_j(f)$ that joins $(g,j)$ and $(g',j)$, and for every dangling edge $f$ of $F$ at an internal vertex $g$, it has one edge $\iota_0(f)=\iota_1(f)$ that joins $(g,0)$ and $(g,1)$, which is called a \emph{cross edge}.
Thus, $D(F)$ consists of two copies of the internal vertices and the internal edges of $F$, and every dangling edge of $F$ becomes a cross edge between its two copies; the dangling edges at one vertex give parallel cross edges.
For every $j\in\{0,1\}$, the map $\iota_j$ from $E(F)$ to $E(D(F))$ is injective, every edge of $D(F)$ is an image of $\iota_0$ or of $\iota_1$, and $\iota_0(f)=\iota_1(f')$ holds only when $f=f'$ is a dangling edge, so the diagonal coloring below is well defined.
For a coloring $c$ of $F$, the \emph{diagonal coloring} $\mathrm{diag}(c)$ of $D(F)$ gives the edge $\iota_j(f)$ the color $c(f)$, for every edge $f$ of $F$ and every $j\in\{0,1\}$.
For a map $\varphi$ from $E(D(F))$ to the colors and for $j\in\{0,1\}$, the map $\mathrm{pr}_j(\varphi)$ from $E(F)$ to the colors gives every edge $f$ the color $\varphi(\iota_j(f))$; it reads the internal edges of $F$ in the copy $j$ and every dangling edge on its cross edge.
The next lemma collects the elementary properties of these objects.

\begin{lemma}\label{lem:closure-graph}
Let $F$ be a multipole of degree $k$.
\begin{enumerate}[(a)]
\item The multigraph $D(F)$ is $k$-regular, and for every internal vertex $g$ of $F$ and every $j\in\{0,1\}$, the map $\iota_j$ is a bijection from the $k$ edges of $F$ at $g$ onto the $k$ edges of $D(F)$ at $(g,j)$.
\item For every coloring $c$ of $F$, the diagonal coloring $\mathrm{diag}(c)$ is a coloring of $D(F)$, and $\mathrm{pr}_0(\mathrm{diag}(c))=\mathrm{pr}_1(\mathrm{diag}(c))=c$.
\item For every coloring $\varphi$ of $D(F)$ and every $j\in\{0,1\}$, the map $\mathrm{pr}_j(\varphi)$ is a coloring of $F$.
\item If $F$ is connected and has at least one dangling edge, then $D(F)$ is connected.
\end{enumerate}
\end{lemma}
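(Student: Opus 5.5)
The plan is to verify all four parts directly from the definition of $D(F)$ and the maps $\iota_0,\iota_1$, using only the incidence bookkeeping already recorded before the statement.

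For~(a), we fix an internal vertex $g$ of $F$ and $j\in\{0,1\}$, and we list the edges of $F$ at $g$. An internal edge $f=gg'$ is sent to $\iota_j(f)$, which joins $(g,j)$ and $(g',j)$. A dangling edge $f$ at $g$ is sent to the cross edge joining $(g,0)$ and $(g,1)$. Both images are incident with $(g,j)$. Conversely, every edge at $(g,j)$ is either an $\iota_j$-image of an internal edge at $g$ or a cross edge coming from a dangling edge at $g$. Injectivity of $\iota_j$ is stated before the lemma. Hence $\iota_j$ is a bijection onto the edges at $(g,j)$, and the degree is $k$. We also note that no loops arise: an internal edge joins distinct internal vertices of $F$, and a cross edge joins vertices in different copies.

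For~(b), properness of $\mathrm{diag}(c)$ at $(g,j)$ follows from~(a). The colors there are the colors under $c$ of the $k$ edges of $F$ at $g$, and these are distinct. The identity $\mathrm{pr}_j(\mathrm{diag}(c))=c$ is immediate from the definitions; well-definedness uses $\iota_0(f)=\iota_1(f')\Rightarrow f=f'$. For~(c), the argument is the same read backwards. The colors under $\mathrm{pr}_j(\varphi)$ of the edges at an internal vertex $g$ are the colors under $\varphi$ of the edges at $(g,j)$, by the bijection in~(a), so they are distinct. Terminal vertices impose no condition.

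For~(d), we take two vertices $(g,j)$ and $(g',j')$. Since $F$ is connected, every internal vertex of $F$ is joined to every other by a path, and we can take this path to run through internal vertices only, because terminal vertices have degree one and cannot be interior vertices of a path. The $\iota_j$-images of its edges therefore give a walk in copy $j$, so each copy is internally connected. It remains to join the two copies, and this is where the hypothesis that $F$ has a dangling edge is used: its cross edge links copy $0$ and copy $1$. The only point needing care is a degenerate case in which $F$ has a single internal vertex. Then each copy consists of one vertex, and the cross edge still joins the two, so the argument goes through. I expect no real obstacle; the most delicate step is~(d), with the observation that paths in $F$ between internal vertices avoid terminal vertices.
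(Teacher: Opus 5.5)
Your proposal is correct and follows the paper's proof essentially step for step: (a) from the description of the edges at $(g,j)$ together with injectivity of $\iota_j$ and looplessness, (b) and (c) read off from the bijection in (a), and (d) from the fact that paths between internal vertices avoid terminal vertices, so each copy is connected, plus one cross edge joining the two copies. Your extra remark on the case of a single internal vertex is harmless but not needed, since the general argument already covers it.
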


\begin{proof}
The first three parts follow from a description of the edges of $D(F)$ at a vertex $(g,j)$, and the last one from the connectivity of the two copies.
\begin{enumerate}[(a)]
\item An internal edge of $F$ has two distinct internal ends, and $(g,0)\neq(g,1)$, so $D(F)$ has no loops.
The edges of $D(F)$ at $(g,j)$ are the edges $\iota_j(f)$ for the internal edges $f$ of $F$ at $g$ and the cross edges $\iota_j(f)$ for the dangling edges $f$ of $F$ at $g$.
These are exactly the images under $\iota_j$ of the $k$ edges of $F$ at $g$, and they are distinct because $\iota_j$ is injective.
\item By~(a), $\mathrm{diag}(c)$ gives the $k$ edges at $(g,j)$ the colors of the $k$ edges of $F$ at $g$ under $c$, which are distinct.
Moreover, $\mathrm{pr}_j(\mathrm{diag}(c))(f)=\mathrm{diag}(c)(\iota_j(f))=c(f)$ for every edge $f$ of $F$.
\item By~(a), $\mathrm{pr}_j(\varphi)$ gives the $k$ edges of $F$ at $g$ the colors of the $k$ edges of $D(F)$ at $(g,j)$ under $\varphi$, which are distinct.
\item A terminal vertex has degree one, so no path of $F$ between two internal vertices passes through a terminal vertex, and the internal vertices and internal edges of $F$ form a connected multigraph.
Each copy in $D(F)$ is therefore connected, and a cross edge joins the two copies.
\end{enumerate}
\end{proof}

The next lemma is the statement on reachability that we need.
Its second part must handle colorings of $D(F)$ that are not diagonal, and it reads them through one copy.

\begin{lemma}\label{lem:diagonal-closure}
Let $F$ be a multipole of degree $k$.
\begin{enumerate}[(a)]
\item If $c\sim_F c'$, then $\mathrm{diag}(c)\sim_{D(F)}\mathrm{diag}(c')$.
\item If a coloring $\varphi'$ of $D(F)$ is obtained from a coloring $\varphi$ by a move with the color pair $P$, then, for each $j\in\{0,1\}$, the coloring $\mathrm{pr}_j(\varphi')$ is obtained from $\mathrm{pr}_j(\varphi)$ by exchanging the colors of $P$ on a set of Kempe chains of $\mathrm{pr}_j(\varphi)$ for $P$.
\end{enumerate}
\end{lemma}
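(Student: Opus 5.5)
For part~(a) we induct on the length of a move sequence, so it suffices to treat a single move of $F$. Suppose $c'$ arises from $c$ by exchanging a color pair $P$ on one Kempe chain $Y$ of $c$. Let $W=\iota_0(Y)\cup\iota_1(Y)$ in $D(F)$. Every edge of $W$ is colored in $P$ under $\mathrm{diag}(c)$. We then check the closure hypothesis of \cref{lem:closed-set}: take an edge $\iota_j(f)\in W$ and an end of it, which is a vertex $(g,j')$. By \cref{lem:closure-graph}(a) the other $P$-colored edge at $(g,j')$ is $\iota_{j'}(f')$, where $f'$ is the other $P$-colored edge of $F$ at $g$. Since $Y$ is a Kempe chain, $f'\in Y$, and hence $\iota_{j'}(f')\in W$. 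One point needs care here. For a cross edge, the end $(g,j')$ may have $j'\neq j$, but $\iota_j(f)=\iota_{j'}(f)$ for dangling $f$, so $f$ itself lies at $g$ and the argument goes through. By \cref{lem:closed-set}, exchanging $P$ on $W$ is a sequence of moves of $D(F)$. The result gives $\iota_j(f)$ the color $c'(f)$ for every $f$ and $j$: the edges over $Y$ are transposed, and all other edges are unchanged because $\iota_j(f)\in W$ forces $f\in Y$. So the result is $\mathrm{diag}(c')$.

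For part~(b), let $\varphi'$ arise from $\varphi$ by exchanging $P$ on a Kempe chain $Z$ of $\varphi$. Fix $j$ and set $W=\{f\in E(F):\iota_j(f)\in Z\}$. The maps $\mathrm{pr}_j(\varphi)$ and $\mathrm{pr}_j(\varphi')$ differ exactly by transposing $P$ on $W$, and every edge of $W$ is colored in $P$ under $\mathrm{pr}_j(\varphi)$. We verify the closure hypothesis of \cref{lem:closed-set} for $W$ in $F$ with the coloring $\mathrm{pr}_j(\varphi)$, which is a coloring by \cref{lem:closure-graph}(c). Take $f\in W$ and an internal end $g$ of $f$, and let $f'$ be the other $P$-colored edge of $F$ at $g$. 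By \cref{lem:closure-graph}(a), $\iota_j(f)$ and $\iota_j(f')$ are the two $P$-colored edges of $D(F)$ at $(g,j)$. They are adjacent in the $P$-subgraph, so $\iota_j(f')$ lies in the same component $Z$, giving $f'\in W$. Then \cref{lem:closed-set} shows that $W$ is a union of Kempe chains of $\mathrm{pr}_j(\varphi)$ for $P$, which is the claim.

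The only delicate point is the bookkeeping for cross edges: a cross edge is read as a dangling edge in both projections, and in part~(a) its two ends lie in different copies. I will handle both through the identity $\iota_0(f)=\iota_1(f)$ for dangling $f$ together with the bijection in \cref{lem:closure-graph}(a), so no case split beyond this remark is needed. In part~(b), $W$ may consist of several chains, or of none if $Z$ lies entirely in the other copy, in which case nothing changes. This is exactly why the statement allows a set of Kempe chains.
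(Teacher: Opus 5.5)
Your proposal is correct and follows essentially the same route as the paper's proof. For (a) you use the set $W=\iota_0(Y)\cup\iota_1(Y)$, check its closure via \cref{lem:closure-graph}(a) (including the identity $\iota_0(f)=\iota_1(f)$ for cross edges), and apply \cref{lem:closed-set}; for (b) you use the preimage set $\{f:\iota_j(f)\in Z\}$, check its closure at $(g,j)$, and apply \cref{lem:closed-set} in $F$.
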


\begin{proof}
We compare the edges colored in $P$ at corresponding vertices, and then we apply \cref{lem:closed-set} in $D(F)$ for~(a) and in $F$ for~(b).
By \cref{lem:closure-graph}(a), for every internal vertex $g$ of $F$ and every $i\in\{0,1\}$, the edges colored in $P$ under $\mathrm{diag}(c)$ at $(g,i)$ are the images under $\iota_i$ of the edges colored in $P$ under $c$ at $g$, and the edges colored in $P$ under $\varphi$ at $(g,i)$ are the images under $\iota_i$ of the edges colored in $P$ under $\mathrm{pr}_i(\varphi)$ at $g$.
\begin{enumerate}[(a)]
\item It suffices to treat one move, since the relations are generated by moves.
Suppose that $c'$ is obtained from $c$ by exchanging the colors of a color pair $P$ on a Kempe chain $Y$ of $c$, and let $W=\iota_0(Y)\cup\iota_1(Y)$, a set of edges colored in $P$ under $\mathrm{diag}(c)$.
Let $\iota_j(f)$ be an edge of $W$ with $f\in Y$, and let $(g,i)$ be an end of it.
Then $f$ is an edge of $F$ at $g$ with $\iota_i(f)=\iota_j(f)$, since either $f$ is an internal edge and $i=j$, or $f$ is a dangling edge and $\iota_0(f)=\iota_1(f)$.
The other edge colored in $P$ at $g$ lies in $Y$, because $Y$ is a Kempe chain, so the other edge colored in $P$ at $(g,i)$ lies in $\iota_i(Y)\subseteq W$.
By \cref{lem:closed-set}, exchanging the colors of $P$ on $W$ is a sequence of moves of $D(F)$.
An edge $\iota_j(f)$ lies in $W$ exactly when $f\in Y$, so this exchange turns $\mathrm{diag}(c)$ into $\mathrm{diag}(c')$.
\item Let $Z$ be the Kempe chain of $\varphi$ on which the move exchanges the colors of $P$, and let $Y_j=\{f\in E(F):\iota_j(f)\in Z\}$, a set of edges colored in $P$ under $\mathrm{pr}_j(\varphi)$.
Let $f\in Y_j$, let $g$ be an internal vertex of $F$ at $f$, and let $f'$ be the other edge colored in $P$ under $\mathrm{pr}_j(\varphi)$ at $g$.
Then $\iota_j(f')$ is the other edge colored in $P$ under $\varphi$ at the end $(g,j)$ of $\iota_j(f)$, so it lies in the Kempe chain $Z$, and $f'\in Y_j$.
By \cref{lem:closed-set}, $Y_j$ is a union of Kempe chains of $\mathrm{pr}_j(\varphi)$ for $P$, and since $\varphi'$ is obtained from $\varphi$ by exchanging the colors of $P$ on $Z$, the coloring $\mathrm{pr}_j(\varphi')$ is obtained from $\mathrm{pr}_j(\varphi)$ by exchanging them on $Y_j$.
\end{enumerate}
\end{proof}

\subsection{Removing Parallel Edges}\label{sec:blocks}

\Cref{sec:double} removed the dangling edges, but its regular multigraph can still have parallel edges.
We now replace each vertex by a copy of $K_{k,k-1}$ and prove that this preserves edge-Kempe reachability.
The original incident edges receive distinct ends on the larger side, so the result is simple and $k$-regular.

For $k\geq3$, Leven and Galil~\cite{LevenGalil1983} replaced $k$-regular multigraphs by simple $k$-regular graphs while preserving the existence of a proper $k$-edge-coloring, using a gadget they attribute to Izbicki. Their construction replaces bundles of parallel edges; here we replace vertices by copies of $K_{k,k-1}$ and prove preservation of edge-Kempe reachability between the prescribed extensions.

The Parity Lemma forces the $k$ edges leaving each copy to have distinct colors.
Thus, their colors give a proper edge-coloring of the original multigraph.
A chain entering a copy leaves through the other edge of its color pair.
Consequently, a move projects to one edge-Kempe change or changes no edge of the original multigraph.
Conversely, the prescribed extension of a coloring into every copy also lets sequences of moves extend.

Let $G$ be a $k$-regular multigraph.
At every vertex $g$ of $G$, the $k$ incident edges receive the labels $e_0(g),\dots,e_{k-1}(g)$.
For the graphs of our reduction, this numbering is chosen in an arbitrary but fixed way as part of the construction.
The graph $B(G)$ is obtained from $G$ by replacing every vertex $g$ with a copy $Q_g$ of the complete bipartite graph $K_{k,k-1}$, with the vertices $a^g_0,\dots,a^g_{k-1}$ on one side of the bipartition and $b^g_1,\dots,b^g_{k-1}$ on the other side.
For every edge $e$ of $G$ with ends $g$ and $g'$, the vertex $a^g_i$ with $e=e_i(g)$ is then joined to the vertex $a^{g'}_{i'}$ with $e=e_{i'}(g')$.
We call the new edge the \emph{link edge} of $e$; it leaves $Q_g$ and $Q_{g'}$, and every vertex $a^g_i$ is incident with exactly one link edge.
For a proper $k$-edge-coloring $\psi$ of $B(G)$, the map $\psi_G$ gives every edge of $G$ the color of its link edge.
For a proper $k$-edge-coloring $\varphi$ of $G$, the coloring $\overline{\varphi}$ of $B(G)$ gives every link edge the color of its edge of $G$ and, for $0\leq i\leq k-1$ and $1\leq j\leq k-1$, the edge $a^g_ib^g_j$ the color $\varphi(e_m(g))$, where $m\in\{0,\dots,k-1\}$ is the residue of $i+j$ modulo $k$.
The next lemma shows that this replacement keeps reachability in both directions between the colorings $\overline{\varphi}$.

\begin{lemma}\label{lem:blocks}
Let $G$ be a $k$-regular multigraph.
\begin{enumerate}[(a)]
\item The graph $B(G)$ is simple and $k$-regular, and it is connected if $G$ is connected.
\item For every proper $k$-edge-coloring $\psi$ of $B(G)$, the $k$ link edges that leave a copy $Q_g$ have distinct colors, so $\psi_G$ is a proper $k$-edge-coloring of $G$.
If $\psi'$ is obtained from $\psi$ by an edge-Kempe change with the color pair $P$, then either $\psi'_G=\psi_G$, or $\psi'_G$ is obtained from $\psi_G$ by an edge-Kempe change with the color pair $P$.
\item For every proper $k$-edge-coloring $\varphi$ of $G$, the coloring $\overline{\varphi}$ is a proper $k$-edge-coloring of $B(G)$ with $\overline{\varphi}_G=\varphi$, and if $\varphi\sim_G\varphi'$, then $\overline{\varphi}\sim_{B(G)}\overline{\varphi'}$.
\end{enumerate}
\end{lemma}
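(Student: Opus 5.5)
We prove the three parts in order, starting with the structure of $B(G)$ in~(a), then the projection in~(b) via the Parity Lemma applied to one block, and finally the lifting in~(c), where we expect the real work to lie.

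For~(a), every $a^g_i$ has $k-1$ edges to the $b^g_j$ and one link edge, and every $b^g_j$ has $k$ edges to the $a^g_i$, so $B(G)$ is $k$-regular. Edges inside $Q_g$ form the simple graph $K_{k,k-1}$. Distinct edges of $G$ at $g$ use distinct attachment vertices $a^g_i$, so two link edges never share both ends. Since $G$ is loopless, no link edge is a loop. Connectivity follows because each $Q_g$ is connected and link edges follow the edges of $G$.

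For~(b), view $Q_g$ with its $k$ link edges as a multipole of degree $k$ with $2k-1$ internal vertices, an odd number. By \cref{lem:parity}, every color occurs an odd number of times among the $k$ link edges, hence exactly once. So $\psi_G$ is proper. Now take a move of $\psi$ on a Kempe chain $Z$ for $P$. Inside each $Q_g$, the $P$-colored edges form paths and cycles. The two link edges at $Q_g$ colored in $P$ are joined by a path inside $Q_g$, since each internal vertex has degree two in the $P$-subgraph and exactly two link edges of $Q_g$ are colored in $P$. Therefore, if $Z$ contains a link edge, its link edges are exactly those over a Kempe chain $Y$ of $\psi_G$ for $P$: entering $Q_g$ along one link edge, $Z$ leaves along the other $P$-link edge. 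In that case $\psi'_G$ is $\psi_G$ exchanged on $Y$. Otherwise $Z$ lies inside one block, and $\psi'_G=\psi_G$.

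For~(c), properness of $\overline\varphi$ comes from the rule $m\equiv i+j \pmod k$. At $a^g_i$ the colors are $\varphi(e_i(g))$ and $\varphi(e_{i+j}(g))$ for $j=1,\dots,k-1$, which are all distinct. At $b^g_j$ the map $i\mapsto i+j$ is a bijection. The identity $\overline\varphi_G=\varphi$ is immediate. For the lifting, it suffices to treat one move of $\varphi$ on a chain $Y$ for $P=\{a,b\}$.

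The main obstacle is that the exchange must also recolor the block interiors into exactly the $\overline{\varphi'}$ pattern. Our plan is to compare $\overline\varphi$ and $\overline{\varphi'}$ on each block. On a block $Q_g$ with $g$ on $Y$, $\varphi'$ differs from $\varphi$ by swapping $a$ and $b$ on the two edges $e_p(g),e_r(g)$ of colors $a,b$. Hence $\overline{\varphi'}$ on $Q_g$ is $\overline\varphi$ with $a$ and $b$ transposed on all $P$-colored edges inside $Q_g$. Take $W$ to be the link edges over $Y$ together with all $P$-colored interior edges of every block $Q_g$ with $g\in V(Y)$. This $W$ is closed in the sense of \cref{lem:closed-set}: at any vertex of such a block, both $P$-edges are interior edges of that block or its $P$-colored link edge, and all of these lie in $W$. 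Applying \cref{lem:closed-set}, the exchange on $W$ is a sequence of moves. It yields $\overline{\varphi'}$, since blocks off $Y$ and link edges off $Y$ are unchanged. Induction along the sequence $\varphi\sim_G\varphi'$ then completes~(c).
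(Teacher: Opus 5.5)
Your proposal is correct and follows the paper's proof essentially step for step. For (b) you apply the Parity Lemma to the $(2k-1)$-vertex block and use the internal $P$-path between the two $P$-colored link edges; for (c) you apply \cref{lem:closed-set} to the set $W$ consisting of the link edges over $Y$ and the $P$-colored edges of the blocks $Q_g$ for $g$ on $Y$. The only cosmetic difference is in (b): you trace $Z$ through the blocks, whereas the paper gets that $Y$ is a single Kempe chain from \cref{lem:closed-set} together with the connectivity of $Z$.
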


\begin{proof}
We use the Parity Lemma for~(b) and \cref{lem:closed-set} for~(c).
\begin{enumerate}[(a)]
\item The edges of a copy $Q_g$ form a simple graph, and the link edges join vertices of distinct copies, since $G$ has no loops.
Every vertex $a^g_i$ is incident with only one link edge, so no two link edges are parallel, and $B(G)$ is simple.
Every vertex $a^g_i$ has $k-1$ neighbors in $Q_g$ and one link edge, and every vertex $b^g_j$ has $k$ neighbors in $Q_g$, so $B(G)$ is $k$-regular.
Every copy $Q_g$ is connected, and the link edge of an edge $gg'$ of $G$ joins $Q_g$ and $Q_{g'}$, so $B(G)$ is connected if $G$ is connected.
\item The copy $Q_g$ with its $k$ link edges is a multipole of degree $k$ with $2k-1$ internal vertices, and $\psi$ restricts to a coloring of it.
By \cref{lem:parity}, every color occurs an odd number of times on these $k$ link edges, so every color occurs exactly once.
The link edges that leave $Q_g$ are the link edges of the edges of $G$ at $g$, so $\psi_G$ is a proper $k$-edge-coloring of $G$.

Let $Z$ be the Kempe chain of $\psi$ for $P$ on which the move exchanges the colors, and consider a copy $Q_g$.
Exactly two of its link edges are colored in $P$, and every vertex of $Q_g$ is incident with exactly two edges colored in $P$, so the edges of $Q_g$ colored in $P$ form a path joining the two ends of these two link edges, together with some cycles.
If $Z$ contains no link edge, then $\psi'_G=\psi_G$.
Otherwise, whenever $Z$ contains a link edge that leaves $Q_g$, it contains the path of $Q_g$ and the other link edge colored in $P$ that leaves $Q_g$.
Hence, the set $Y$ of edges of $G$ whose link edges lie in $Z$ contains, together with each of its edges and each end $g$ of this edge, the other edge colored in $P$ under $\psi_G$ at $g$.
By \cref{lem:closed-set}, $Y$ is a union of Kempe chains of $\psi_G$ for $P$, and since $Z$ is connected, so is the subgraph of $G$ formed by $Y$, so $Y$ is one Kempe chain.
The move exchanges the colors of $P$ on the link edges in $Z$, so $\psi'_G$ is obtained from $\psi_G$ by exchanging them on $Y$.
\item For every vertex $g$ of $G$, let $\sigma_g$ be the permutation of the colors that maps $m$ to $\varphi(e_m(g))$, which is a bijection since the edges at $g$ have distinct colors.
At the vertex $a^g_i$, the $k-1$ edges of $Q_g$ have the colors $\sigma_g(m)$ for the residues $m\neq i$, and the link edge has the color $\varphi(e_i(g))=\sigma_g(i)$.
At the vertex $b^g_j$, the $k$ edges have the colors $\sigma_g(m)$ for all residues $m$, since $i\mapsto i+j$ permutes the residues modulo $k$.
Hence, $\overline{\varphi}$ is proper, and $\overline{\varphi}_G=\varphi$ by definition.

It suffices to treat one edge-Kempe change.
Suppose that $\varphi'$ is obtained from $\varphi$ by exchanging the colors of a color pair $P$ on a Kempe chain $Y$ of $\varphi$, and let $\pi$ be the transposition of the two colors of $P$.
If $g$ is an end of an edge of $Y$, then both edges colored in $P$ at $g$ lie in $Y$, so $\varphi'(e_m(g))=\pi(\varphi(e_m(g)))$ for every $m$, and the edges of $Q_g$ satisfy $\overline{\varphi'}=\pi\circ\overline{\varphi}$.
For every other vertex $g$, the colorings $\overline{\varphi'}$ and $\overline{\varphi}$ agree on $Q_g$.
Let $W$ be the set of edges colored in $P$ under $\overline{\varphi}$ that lie in a copy $Q_g$ for an end $g$ of an edge of $Y$, or that are link edges of edges of $Y$.
Then $\overline{\varphi'}$ is obtained from $\overline{\varphi}$ by exchanging the colors of $P$ on $W$.
Every vertex of such a copy $Q_g$ has both of its edges colored in $P$ in $W$, since its link edge colored in $P$, if any, is the link edge of an edge colored in $P$ at $g$, which lies in $Y$.
A vertex of any other copy $Q_{g'}$ is incident with no edge of $W$, because a link edge colored in $P$ that leaves $Q_{g'}$ belongs to an edge at $g'$, which does not lie in $Y$.
By \cref{lem:closed-set}, applied in $B(G)$, we have $\overline{\varphi}\sim_{B(G)}\overline{\varphi'}$.
\end{enumerate}
\end{proof}

For $k\geq 3$ and a source graph $X$, let $G^k_X=B(D(F^k_X))$, and for a legal configuration $x$ of $X$, let $\alpha_x=\overline{\mathrm{diag}(\widehat C_x)}$.
The next lemma shows that $G^k_X$ has the properties required by \cref{thm:main}.

\begin{lemma}\label{lem:connected}
The graph $G^k_X$ is a connected simple $k$-regular graph.
\end{lemma}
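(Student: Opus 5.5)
By \cref{lem:blocks}(a), the graph $B(G)$ is simple and $k$-regular for every $k$-regular multigraph $G$, and it is connected whenever $G$ is connected. Applied with $G=D(F^k_X)$, it therefore suffices to show that $D(F^k_X)$ is a connected $k$-regular multigraph. Regularity is \cref{lem:closure-graph}(a), given that $F^k_X$ is a multipole of degree $k$; this was noted after \cref{def:network}, where every identification joins internal vertices of distinct copies and every new vertex has degree $k$. Connectivity will come from \cref{lem:closure-graph}(d), once we know that $F^k_X$ is connected and has at least one dangling edge.

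A dangling edge is easy to exhibit. Every $\mathrm{OR}$ gate has $k-2\geq1$ dangling edges at $T$. Every $\mathrm{AND}$ gate keeps $s_0$, $r_0$ and the edge $a(g_1)$ dangling. Since $X$ is nonempty, some gate exists, so $F^k_X$ has a dangling edge.

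Connectivity of $F^k_X$ is the step that needs care. I would first show that each piece is connected. The inverting component $H_i$ is connected, because its hexagon $B,D,R,A,C,L$ is a cycle and $M$ is joined to $L$. The negator $W_k$ is then connected, since consecutive components are joined through $Z_i$ and through the identification of $d$ with $b$. The gates are connected, as already remarked after \cref{def:gates}: the $\mathrm{OR}$ gate is joined by the identified ring edges, and in the $\mathrm{AND}$ gate $x_1$ joins $g_0$ and $g_1$. Each wire is a chain of connected negators joined by identified connectors. Each triangle is connected, and every triangle vertex $S^z_j$ is joined to the end negator of the wire at slot $j$ of $z$. Next, each wire is joined to both end gates by the identifications of $\kappa_0$ and $\kappa_{q_f}$ with gate slots. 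Hence whenever $u$ and $v$ are adjacent in $X$, the gates at $u$ and $v$ lie in one component. Because $X$ is connected, all gates, all wires, and therefore all triangles lie in a single component. Every internal vertex belongs to one of these pieces, and every terminal vertex hangs off an internal one, so $F^k_X$ is connected.

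Finally, I would check the remaining hypotheses of \cref{lem:blocks} and \cref{lem:closure-graph}: $D(F^k_X)$ is loopless, by the proof of \cref{lem:closure-graph}(a), and $k\geq3$. Since $D(F^k_X)$ is then a connected $k$-regular multigraph, \cref{lem:blocks}(a) gives that $G^k_X$ is a connected simple $k$-regular graph. The only real obstacle is the connectivity bookkeeping for $F^k_X$; the rest is citation.
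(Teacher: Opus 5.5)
Your proof is correct and follows the paper's argument essentially step for step. You show that $F^k_X$ is connected piece by piece (inverting components, negators, gates, wires, triangles, then the connectivity of $X$), exhibit a dangling edge, and conclude with \cref{lem:closure-graph}(a) and~(d) and \cref{lem:blocks}(a).

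One small slip: $s_0$ is the first edge of slot $0$ of the $\mathrm{AND}$ gate, so it is identified with a wire connector and is not dangling in $F^k_X$. The argument is unaffected, because $r_0$, $a(g_1)$ and the $k-2$ edges at $T$ are residual edges that do stay dangling.
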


\begin{proof}
We show that $F^k_X$ is connected and has a dangling edge, and then we apply \cref{lem:closure-graph} and \cref{lem:blocks}(a).
Every inverting component is connected, since its multiple edges $u$, $t$, $v$, $o$, $p$, $q$, $s$ and $r$ have positive multiplicities and join its seven vertices into a connected multigraph.
The negator $W_k$ is connected, because the vertex $Z_i$ and the edge $d$ of $H_i$, which is the edge $b$ of $H_{i+1}$, join $H_i$ to $H_{i+1}$.
Every gate is connected by \cref{def:gates}: the $\mathrm{OR}$ ring joins its three negators, while $x_1$ joins the two $\mathrm{AND}$ negators through its nonempty residual bundle.
Every wire is connected, and it is joined to the gates at both ends of its edge, and every triangle is joined to the wires at its vertex $z$, because each vertex $S^z_j$ is incident with the $k-2\geq 1$ residual edges of the negator next to $z$ of the wire at slot $j$.
Since $X$ is connected, $F^k_X$ is connected, and it has a dangling edge, for example at $T$ in an $\mathrm{OR}$ gate or the edge $a(g_1)$ in an $\mathrm{AND}$ gate.
By \cref{lem:closure-graph}(a) and~(d), $D(F^k_X)$ is a connected $k$-regular multigraph, and by \cref{lem:blocks}(a), $G^k_X$ is a connected simple $k$-regular graph.
\end{proof}

\subsection{The Main Result}\label{sec:main-proof}

By \cref{sec:double,sec:blocks}, the network now gives a connected simple regular graph with colorings that preserve the simulation.
We combine those constructions with the projection from \cref{sec:reflection} to prove reachability in both directions.
We then count vertices, include the source preprocessing costs, and pass to the line graph to prove \cref{thm:main}.

A legal configuration gives its canonical coloring with $k$ colors, which is repeated on the two copies and extended into every copy of $K_{k,k-1}$.
Conversely, a sequence of edge-Kempe changes can be read back on one copy of the network.
Its moves project to sequences of legal reversals, which proves the required equivalence.

For fixed $k$, the number of vertices grows only linearly with that of the source graph.
Finally, passing to the line graph turns the edge-colorings into vertex-colorings of a connected claw-free graph.
Together with membership in $\PSPACE$, this reduction proves the main theorem.

The two directions of the reduction now combine into the following theorem, which is the combinatorial content of \cref{thm:main}.

\begin{theorem}\label{thm:equivalence}
Let $k\geq 3$, and let $X$ be a source graph as in \cref{cor:tait-source}.
For all legal configurations $x$ and $y$ of $X$, we have $x\sim_X y$ if and only if $\alpha_x\sim_{G^k_X}\alpha_y$.
\end{theorem}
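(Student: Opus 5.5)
The plan is to prove the two implications separately, each by composing results already established, with the forward direction going up through the three constructions and the backward direction reading colorings back down to the source.

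\textbf{Forward direction.} Suppose $x\sim_X y$.
\begin{enumerate}[(i)]
\item By \cref{thm:simulation}, we have $\widehat C_x\sim_{F^k_X}\widehat C_y$.
\item Applying \cref{lem:diagonal-closure}(a) with $F=F^k_X$ gives $\mathrm{diag}(\widehat C_x)\sim_{D(F^k_X)}\mathrm{diag}(\widehat C_y)$.
\item Applying \cref{lem:blocks}(c) with $G=D(F^k_X)$, which is a $k$-regular multigraph by \cref{lem:closure-graph}(a), gives $\alpha_x\sim_{G^k_X}\alpha_y$.
\end{enumerate}
Each step is a direct citation, so this direction should be routine.

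\textbf{Backward direction.} Suppose $\alpha_x=\psi^{(0)},\psi^{(1)},\dots,\psi^{(m)}=\alpha_y$ is a sequence of edge-Kempe changes in $G^k_X$, where the $\psi^{(i)}$ may be arbitrary colorings. I will track three levels.
\begin{enumerate}[(i)]
\item For each $i$, let $\varphi^{(i)}=\psi^{(i)}_{D(F^k_X)}$. This is a coloring of $D(F^k_X)$ by \cref{lem:blocks}(b).
\item Let $d^{(i)}=\mathrm{pr}_0(\varphi^{(i)})$. This is a coloring of $F^k_X$ by \cref{lem:closure-graph}(c).
\item Let $x^{(i)}=x_{d^{(i)}}$ be the projection, which is a legal configuration by \cref{lem:projection}.
\end{enumerate}

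Now consider one step from $\psi^{(i)}$ to $\psi^{(i+1)}$.
\begin{itemize}
\item By \cref{lem:blocks}(b), either $\varphi^{(i+1)}=\varphi^{(i)}$, or $\varphi^{(i+1)}$ arises from $\varphi^{(i)}$ by a single move with the same pair $P$.
\item In the second case, \cref{lem:diagonal-closure}(b) shows that $d^{(i+1)}$ arises from $d^{(i)}$ by exchanging $P$ on a set of Kempe chains.
\item \cref{lem:reflection} permits exactly such multi-chain exchanges, so it gives $x^{(i)}\sim_X x^{(i+1)}$. In the first case, $x^{(i)}=x^{(i+1)}$ trivially.
\end{itemize}
Chaining these steps gives $x^{(0)}\sim_X x^{(m)}$.

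\textbf{Identifying the endpoints.} It remains to check that $x^{(0)}=x$ and $x^{(m)}=y$. At the start,
\[
d^{(0)}=\mathrm{pr}_0\bigl(\overline{\mathrm{diag}(\widehat C_x)}_{D(F^k_X)}\bigr).
\]
By \cref{lem:blocks}(c), $\overline{\varphi}_G=\varphi$, so the inner expression equals $\mathrm{diag}(\widehat C_x)$. Then \cref{lem:closure-graph}(b) gives $\mathrm{pr}_0(\mathrm{diag}(\widehat C_x))=\widehat C_x$. Hence $d^{(0)}=\widehat C_x$, whose projection is $x$ by \cref{thm:simulation}. The same argument gives $x^{(m)}=y$.

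\textbf{Main obstacle.} The only delicate point is bookkeeping rather than a new idea. The intermediate colorings are arbitrary, and a single move in $D(F^k_X)$ may project to an exchange on several Kempe chains of $F^k_X$. This is exactly the situation \cref{lem:guard} and \cref{lem:reflection} were stated to handle, so I expect the step to go through. I will take care to invoke the "set of Kempe chains" form of those lemmas rather than the single-move consequence at the end of \cref{lem:reflection}.
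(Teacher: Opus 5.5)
Your proposal is correct and follows the paper's proof essentially step for step. The forward direction chains \cref{thm:simulation}, \cref{lem:diagonal-closure}(a) and \cref{lem:blocks}(c). The backward direction reads each coloring back as $\mathrm{pr}_0$ of its link-edge coloring, uses \cref{lem:blocks}(b) and \cref{lem:diagonal-closure}(b) to obtain multi-chain exchanges, and applies \cref{lem:reflection}, with the endpoints identified through \cref{lem:blocks}(c), \cref{lem:closure-graph}(b) and \cref{thm:simulation}.
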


\begin{proof}
We lift the simulation of \cref{thm:simulation} through the two constructions for the forward direction, and we project every edge-Kempe change back to the network for the backward direction.
\begin{itemize}
\item[($\Rightarrow$)] By \cref{thm:simulation}, $x\sim_X y$ implies $\widehat C_x\sim_{F^k_X}\widehat C_y$.
By \cref{lem:diagonal-closure}(a), $\mathrm{diag}(\widehat C_x)\sim_{D(F^k_X)}\mathrm{diag}(\widehat C_y)$, and by \cref{lem:blocks}(c), $\alpha_x\sim_{G^k_X}\alpha_y$.
\item[($\Leftarrow$)] Let $\alpha_x=\psi_0,\psi_1,\dots,\psi_m=\alpha_y$ be proper $k$-edge-colorings of $G^k_X$ in which each coloring is obtained from the previous one by an edge-Kempe change, and let $c_i=\mathrm{pr}_0((\psi_i)_{D(F^k_X)})$ for $0\leq i\leq m$.
By \cref{lem:blocks}(b) and \cref{lem:closure-graph}(c), every $c_i$ is a coloring of $F^k_X$.
By \cref{lem:blocks}(b) and \cref{lem:diagonal-closure}(b), for $1\leq i\leq m$, either $c_i=c_{i-1}$, or $c_i$ is obtained from $c_{i-1}$ by exchanging the colors of a color pair on a set of Kempe chains, so $x_{c_{i-1}}\sim_X x_{c_i}$ by \cref{lem:reflection}.
By \cref{lem:blocks}(c) and \cref{lem:closure-graph}(b), we have $c_0=\widehat C_x$ and $c_m=\widehat C_y$, whose projections are $x$ and $y$ by \cref{thm:simulation}.
Hence, $x\sim_X y$.
\end{itemize}
\end{proof}

We now account separately for the target construction and for the source preprocessing.
Let $N(F)$ denote the number of internal vertices of a multipole $F$.
The first bound below is measured on the protected, Tait-colored source supplied to the network.

\begin{proposition}\label{prop:size}
For every $k\geq3$ and every source graph $X$ of order $n$, we have
\[
N(F^k_X)\leq\frac{120k-247}{2}\,n,
\qquad |V(G^k_X)|\leq(2k-1)(120k-247)n.
\]
For fixed $k$, the graph $G^k_X$ and its endpoint coloring $\alpha_x$ are computable in linear time from the incidence list of $X$, its supplied data, and the legal configuration $x$.
\end{proposition}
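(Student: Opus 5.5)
The plan is to count internal vertices gadget by gadget in $F^k_X$, then follow the two closing constructions, each of which multiplies the vertex count by a fixed factor. A source graph $X$ of order $n$ is cubic, so it has $3n/2$ edges. The network $F^k_X$ consists of one gate and one triangle per source vertex and one wire per source edge, and these parts are vertex-disjoint. I would therefore bound $N(F^k_X)$ by summing three contributions.

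\emph{Vertices.} By \cref{lem:negator-law}, $W_k$ has $8k-17$ internal vertices. By the count after \cref{def:gates}, the $\mathrm{OR}$ gate has $24k-50=3(8k-17)+1$ internal vertices and the $\mathrm{AND}$ gate has $16k-31=2(8k-17)+3$. Since $8k\geq 19$ for $k\geq3$, we have $24k-50\geq 16k-31$, so every gate contributes at most $24k-50$. Adding the three triangle vertices bounds each source vertex's share by $24k-47$.

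\emph{Edges.} By \cref{def:network}, each wire has $q_f\leq 3$ copies of $W_k$. The identifications of connector edges and the attachments to triangle vertices create no new vertices. So each source edge contributes at most $3(8k-17)$.

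\emph{Total for $F^k_X$.} Summing the two contributions gives
\[
N(F^k_X)\leq (24k-47)n+\tfrac{3n}{2}\cdot 3(8k-17)=\frac{(48k-94)+(72k-153)}{2}\,n=\frac{120k-247}{2}\,n .
\]

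\emph{Closing constructions.} $D(F^k_X)$ has exactly $2N(F^k_X)$ vertices, since the terminal vertices disappear into cross edges. $B(\cdot)$ replaces each vertex by $K_{k,k-1}$, which has $2k-1$ vertices. This yields $|V(G^k_X)|\leq (2k-1)(120k-247)n$.

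\emph{Running time.} For fixed $k$, every gadget ($W_k$, both gates, triangles, $K_{k,k-1}$) has constant size. So $F^k_X$ is written down by one pass over the incidence list of $X$, using the slots, types, $\tau$ and the fixed first/second ends. $D$ and $B$ are each a further linear pass, with the labels $e_i(g)$ fixed in scan order. For the coloring $\alpha_x=\overline{\mathrm{diag}(\widehat C_x)}$, proceed in these steps:
\begin{enumerate}[(i)]
\item At each source vertex $z$, read the local state $x|_z$ and apply $\tau_z\circ\Sigma_{\theta(z)}$ from \cref{def:gate-colouring} and \cref{tab:canonical}.
\item Set the wire connector states as in \cref{def:encoder}.
\item Fill each copy of $W_3$ by the explicit formulas of \cref{lem:w3}(a).
\item Extend through digons by the rule preceding \cref{lem:digon}.
\item Put the fixed added colors of \cref{def:added-colours} on the added edges.
\item Duplicate the coloring via $\mathrm{diag}$.
\item Fill each block by the residue rule defining $\overline{\varphi}$.
\end{enumerate}
Each step is constant work per gadget, hence linear overall.

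The main obstacle is only bookkeeping. Two points need care: taking the worst case over gate types and over $q_f\in\{2,3\}$, and checking that identifications and attachments introduce no extra internal vertices. Both follow from \cref{def:gates,def:network}.
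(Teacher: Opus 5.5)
Your proposal is correct and follows the paper's proof essentially step for step. It uses the same worst case of $24k-50$ internal vertices per gate, the same bound of three negators per wire, and the same factors $2$ and $2k-1$ for doubling and for the blocks; the only difference is that you fold the three triangle vertices into a per-vertex share of $24k-47$, and your running-time paragraph spells out in more detail the bounded local work that the paper simply asserts.
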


\begin{proof}
We count the gates, triangles and individual wires, including parallel source edges separately.
Since $X$ is loopless and cubic, $|E(X)|=3n/2$.
The $\mathrm{OR}$ gate has $24k-50$ internal vertices, and the $\mathrm{AND}$ gate has $16k-31$.
Since the difference is $8k-19>0$, every gate has at most $24k-50$ internal vertices.
The triangles contribute $3n$ vertices, and every wire has at most three negators of $8k-17$ vertices each.
Therefore,
\[
N(F^k_X)\leq(24k-50)n+3n+\frac{3n}{2}\,3(8k-17)
=\frac{120k-247}{2}\,n.
\]
Doubling multiplies this count by two, and each simple block then has $2k-1$ vertices, which gives the second bound.
For fixed $k$, each source incidence introduces boundedly many vertices and edges.
The gate formulas, the negator extension, the added-color pattern, the diagonal extension and the modular block formula specify every endpoint color by bounded local work.
Hence, both outputs have the stated construction time.
\end{proof}

The next two consequences include the cost of protection and Tait preprocessing.
Their input $X_0$ is a simple connected source before protection; if it was obtained by Hearn and Demaine's preprocessing~\cite{HearnDemaine2005}, that earlier polynomial transformation is additional. 

\begin{corollary}\label{cor:edge-equivalence}
For every $k\geq3$ and every nonempty connected simple cubic $\mathrm{AND}/\mathrm{OR}$ constraint graph $X_0$, there are a connected simple $k$-regular graph $G$ with at most
\[
130(2k-1)(120k-247)|V(X_0)|
\]
vertices and a map $x\mapsto\varphi_x$ on the legal configurations of $X_0$ such that $x\sim_{X_0}y$ if and only if $\varphi_x\sim_G\varphi_y$.
The map is computable in polynomial time.
\end{corollary}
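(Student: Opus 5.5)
We compose three earlier results: the protected replacement of \cref{lem:protected-endpoint-completion}, the Tait preprocessing of \cref{prop:tait-source}, and the network equivalence of \cref{thm:equivalence}.

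First, we apply the protected-$\mathrm{OR}$ replacement of Hearn and Demaine~\cite{HearnDemaine2009} to $X_0$. This gives a connected loopless cubic constraint multigraph $X_0^+$ with a protected pair at every $\mathrm{OR}$ vertex and at most $13|V(X_0)|$ vertices. For each legal configuration $x$ of $X_0$, we fix one completion $x^+$ by a deterministic rule: we take the fixed orientation reached in the relevant case of the proof of \cref{lem:protected-endpoint-completion}, together with the fixed converter arrows. By \cref{lem:protected-endpoint-completion}, $x\sim_{X_0}y$ holds exactly when $x^+\sim_{X_0^+}y^+$. The lemma allows independent choices of completions, so any fixed rule works.

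Next, we apply \cref{prop:tait-source} to $X_0^+$. This yields a source graph $X$ as in \cref{cor:tait-source}, with at most $10\cdot 13|V(X_0)|=130|V(X_0)|$ vertices, together with the map $x^+\mapsto (x^+)'$, which preserves and reflects reachability.

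Finally, we set $G=G^k_X$ and $\varphi_x=\alpha_{(x^+)'}$. \Cref{lem:connected} shows that $G$ is connected, simple and $k$-regular. \Cref{thm:equivalence} gives $(x^+)'\sim_X(y^+)'$ if and only if $\varphi_x\sim_G\varphi_y$. Chaining the three equivalences proves the stated biconditional.

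For the size bound, we take \cref{prop:size} with $n\le 130|V(X_0)|$, which gives $|V(G)|\le(2k-1)(120k-247)\cdot130|V(X_0)|$. For polynomial time, each stage is polynomial. The completion is computed by local case analysis at each replaced vertex, following the explicit case procedure in the proof of \cref{lem:protected-endpoint-completion}. The Tait stage is linear by \cref{prop:tait-source}, and the network stage is linear for fixed $k$ by \cref{prop:size}.

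The only point needing care is that the endpoint map must be a genuine computable function. This means the completion $x^+$ must be chosen canonically and in polynomial time, rather than merely shown to exist. I would handle this by taking the fixed interior orientation of the appropriate case, as determined by the boundary directions $A,B,C$. That orientation is legal for every boundary in its case, as that proof shows.
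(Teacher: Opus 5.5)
Your proposal is correct and follows the paper's proof. It composes the completion map of \cref{lem:protected-endpoint-completion}, the Tait preprocessing of \cref{prop:tait-source} (at most $130|V(X_0)|$ vertices) and the target map $x'\mapsto\alpha_{x'}$ of \cref{thm:equivalence}, with the order bound from \cref{prop:size}. Your deterministic choice of $x^+$ is a valid completion, since the lemma allows any completion and each case's fixed interior orientation is legal for every boundary in that case; this makes the map's computability explicit where the paper leaves it implicit.
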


\begin{proof}
The protected replacement of Hearn and Demaine~\cite{HearnDemaine2009}, with its paired red--blue converter, gives a protected loopless multigraph with at most $13|V(X_0)|$ vertices. Its endpoint map preserves and reflects reachability by \cref{lem:protected-endpoint-completion}.
By \cref{prop:tait-source}, a second map gives the Tait-colored source $X$ with at most $130|V(X_0)|$ vertices.
Let $x'$ be the composite image of $x$, let $G=G^k_X$, and let $\varphi_x=\alpha_{x'}$.
The source maps preserve and reflect reachability, as does \cref{thm:equivalence} for the target map.
The order bound follows from \cref{prop:size}, and all the constructions and endpoint maps are polynomial-time.
\end{proof}

Reading the resulting edge-colorings on the line graph gives the same reduction for vertex colorings.

\begin{corollary}\label{cor:source-equivalence}
For the graph and endpoint map of \cref{cor:edge-equivalence}, $L(G)$ is connected, claw-free and $(2k-2)$-regular, and $x\sim_{X_0}y$ if and only if $\varphi_x\sim_{L(G)}\varphi_y$ when the edge-colorings are read as vertex colorings of $L(G)$.
\end{corollary}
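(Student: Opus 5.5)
This corollary follows by combining \cref{cor:edge-equivalence} with the two line-graph lemmas of \cref{sec:prelim-graphs}; no new combinatorics is needed. Let $G$ and $x\mapsto\varphi_x$ be as in \cref{cor:edge-equivalence}, so $G$ is a connected simple $k$-regular graph with $k\geq3$.

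\textbf{Structure of $L(G)$.} Because $G$ is simple, $L(G)$ is defined. Claw-freeness is \cref{lem:line-graphs}(a), and $(2k-2)$-regularity is \cref{lem:line-graphs}(b). For connectivity I apply \cref{lem:line-graphs}(c). Its hypothesis is that $G$ has an edge, which holds because $G$ is $k$-regular with $k\geq 3$ and nonempty. Nonemptiness follows since $X_0$ is nonempty, so $F^k_X$ has internal vertices and hence $G^k_X$ has vertices.

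\textbf{Reachability.} Each $\varphi_x$ is a proper $k$-edge-coloring of $G$, so read on $E(G)=V(L(G))$ it is a proper $k$-coloring of $L(G)$. By \cref{lem:transfer}, two proper $k$-edge-colorings of $G$ are joined by edge-Kempe changes on $G$ exactly when, read as vertex colorings, they are joined by Kempe changes on $L(G)$. Hence $\varphi_x\sim_G\varphi_y$ if and only if $\varphi_x\sim_{L(G)}\varphi_y$. Chaining this with the equivalence $x\sim_{X_0}y\iff\varphi_x\sim_G\varphi_y$ of \cref{cor:edge-equivalence} gives the claim.

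The only point needing care, and it is minor, is checking the hypotheses of the cited lemmas: simplicity of $G$ (so that $L(G)$ and \cref{lem:transfer} apply) and existence of an edge (for \cref{lem:line-graphs}(c)). Both come directly from \cref{lem:connected} as invoked in \cref{cor:edge-equivalence}.
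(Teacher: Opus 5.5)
Your proposal is correct and follows the paper's proof: both take the structural properties of $L(G)$ from \cref{lem:line-graphs}, using that $G$ is connected, simple, $k$-regular and nonempty, and both obtain the reachability equivalence by combining \cref{lem:transfer} with \cref{cor:edge-equivalence}. Your extra check that $G$ has an edge, because $X_0$ is nonempty and $G$ is $k$-regular with $k\geq3$, only spells out the nonemptiness that the paper states without comment.
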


\begin{proof}
The structural properties follow from \cref{lem:line-graphs}, since $G$ is connected, simple, $k$-regular and nonempty.
The equality of the two move relations follows from \cref{lem:transfer}; combining it with \cref{cor:edge-equivalence} gives the equivalence.
\end{proof}

\begin{proof}[Proof of \cref{thm:main}]
We combine source hardness, the constructed endpoint equivalence, and polynomial-space membership.
Membership for vertex colorings is \cref{prop:membership}.
For edge-colorings, \cref{lem:transfer} reduces the problem to vertex colorings of the line graph, whose order is $|E(G)|$ and whose construction takes polynomial time.
These upper bounds also apply to every graph restriction in the theorem.

For hardness, an instance of \cref{cor:tait-source} supplies $X,\tau,x_s,x_t$.
By \cref{prop:size}, $G^k_X$, $\alpha_{x_s}$ and $\alpha_{x_t}$ are computable in polynomial time.
By \cref{lem:connected,thm:equivalence}, this is an equivalent edge-Kempe instance on a connected simple $k$-regular graph.
Passing to its line graph is polynomial-time and preserves every move by \cref{lem:transfer}.
By \cref{lem:line-graphs}, that line graph is connected, claw-free and $(2k-2)$-regular.
The reductions prove all three hardness statements and hence the theorem.
\end{proof}

\subsection{Concluding Remarks}\label{sec:conclusion}

The preceding proof establishes $\PSPACE$-completeness for every fixed $k\geq3$ on line graphs of connected simple $k$-regular graphs, and hence on connected $(2k-2)$-regular claw-free graphs.
We finish with the smaller degree bounds and the structure of Kempe classes that this construction leaves unresolved.

For a fixed $k\geq 4$, every instance with $\Delta(G)\leq k$ is a yes-instance, and a claw-free graph with a proper $k$-coloring has maximum degree at most $2k-2$.
Our hard instances attain degree $2k-2$, so they do not settle the restrictions with maximum degree at most $d$ for $k+1\leq d\leq2k-3$.
In particular, we leave open whether \KR{} with four colors is $\PSPACE$-complete on claw-free graphs of maximum degree five.

The structure of the Kempe classes of Tait colorings is a second direction.
A complete invariant of these Kempe classes that is computable in polynomial time would decide in polynomial time whether two Tait colorings of a connected simple cubic graph are edge-Kempe equivalent, so by \cref{thm:main} for $k=3$ its existence would imply $\PClass=\PSPACE$.
Complete invariants without such a time bound exist: for every graph, Ohsugi and Tsuchiya~\cite{OhsugiTsuchiya2026} characterized Kempe equivalence by the membership of a binomial in an ideal built from the $2$-colorings of induced subgraphs. They decide it by normal forms with respect to a Gr\"obner basis of a related ideal~\cite{OhsugiTsuchiya2026}, and through \cref{lem:transfer} these normal forms separate the Kempe classes of Tait colorings. Their algorithm computes the basis from all stable sets of the graph, and the authors note that it is not feasible for large graphs~\cite{OhsugiTsuchiya2026}. Our result does not exclude useful invariants that give only necessary conditions for Kempe equivalence, and the corresponding questions for cubic bipartite graphs need a separate analysis, since the cubic graphs of our reduction need not be bipartite.
Some results of both kinds are known.
For a cubic graph embedded in the plane, Kotzig~\cite{Kotzig1977} showed that the sum over all vertices of a sign that records the cyclic order of the three colors around the vertex is invariant modulo $8$ under edge-Kempe changes. Every planar bipartite cubic graph has a single Kempe class of Tait colorings~\cite{belcastroHaas2014}, which for $3$-connected graphs follows from a result of Fisk, as noted in~\cite{belcastroHaas2014}. Ozeki~\cite{Ozeki2022} characterized the bipartite cubic graphs embeddable in the projective plane that have a single Kempe class of Tait colorings.
Such a graph has a single class exactly when it admits an embedding in the projective plane whose faces are homeomorphic to open disks and whose dual triangulation has chromatic number at least five. For every $n\geq 1$ there are bipartite cubic graphs of order $4n+2$ with $2^n$ such classes~\cite{belcastroHaas2014} (see also~\cite{GoedgebeurOstergard2022}). The reduction uses the global protection of the source to limit the local transitions that its gates must simulate.
A reduction for the smaller degree bounds would need to keep every target vertex below the degree used by our construction.

\section*{Acknowledgments}
Duc A.\ Hoang's research was supported by the Vietnam National University, Hanoi under the project QG.25.07 ``A study on reconfiguration problems from algorithmic and graph-theoretic perspectives''.

\section*{AI Declaration}
ChatGPT 6 Pro and Claude Fable 5.1 were used to assist with developing and reviewing mathematical arguments and drafting and revising the paper.
The author takes full responsibility for the correctness and content of the paper.

\printbibliography[title={References}]

\end{document}